\setlist[itemize]{itemsep=0pt}
\setlist[enumerate]{itemsep=0pt}
\Crefname{lemma}{Lemma}{Lemmas}
\Crefname{fact}{Fact}{Facts}
\Crefname{theorem}{Theorem}{Theorems}
\Crefname{corollary}{Corollary}{Corollaries}
\Crefname{claim}{Claim}{Claims}
\Crefname{example}{Example}{Examples}
\Crefname{problem}{Problem}{Problems}
\Crefname{definition}{Definition}{Definitions}
\Crefname{notation}{Notation}{Notations}
\Crefname{assumption}{Assumption}{Assumptions}
\Crefname{subsection}{Subsection}{Subsections}
\Crefname{section}{Section}{Sections}
\newtheorem{theorem}{Theorem}[section]
\newtheorem*{theorem*}{Theorem}
\newtheorem{proposition}[theorem]{Proposition}
\newtheorem*{proposition*}{Proposition}
\newtheorem*{property*}{Property}
\newtheorem{lemma}[theorem]{Lemma}
\newtheorem*{lemma*}{Lemma}
\newtheorem{corollary}[theorem]{Corollary}
\newtheorem*{corollary*}{Corollary}
\newtheorem*{conjecture*}{Conjecture}
\newtheorem{fact}[theorem]{Fact}
\newtheorem*{fact*}{Fact}
\newtheorem*{exercise*}{Exercise}
\newtheorem*{hypothesis*}{Hypothesis}
\newtheorem{conjecture}[theorem]{Conjecture}
\theoremstyle{definition}
\newtheorem{exercise-easy}[theorem]{Exercise}
\newtheorem{exercise-med}[theorem]{Exercise}
\newtheorem{exercise-hard}[theorem]{Exercise$^\star$}
\newtheorem{claim}[theorem]{Claim}
\newtheorem*{claim*}{Claim}
\newtheorem{remark}[theorem]{Remark}
\newtheorem*{remark*}{Remark}
\newtheorem*{observation*}{Observation}
\DeclareSymbolFont{extraup}{U}{zavm}{m}{n}
\DeclareMathSymbol{\varheart}{\mathalpha}{extraup}{86}
\DeclareMathSymbol{\vardiamond}{\mathalpha}{extraup}{87}
\DeclareMathOperator*{\E}{\mathbb E}
\DeclareMathOperator*{\Var}{\mathrm{Var}}
\DeclareMathOperator*{\argmin}{\arg\min}
\renewcommand{\Pr}{\operatorname*{\mathbf{Pr}}}
\newcommand{\Mod}[1]{\ (\mathrm{mod}\ #1)}
\newcommand{\eps}{\varepsilon}
\newcommand{\abs}[1]{\left| #1 \right|}
\newcommand{\vabs}[1]{\left\| #1 \right\|}
\newcommand{\pbra}[1]{\left( #1 \right)}
\newcommand{\sbra}[1]{\left[ #1 \right]}
\newcommand{\cbra}[1]{\left\{ #1 \right\}}
\newcommand{\floorbra}[1]{\left\lfloor #1 \right\rfloor}
\newcommand{\ceilbra}[1]{\left\lceil #1 \right\rceil}
\renewcommand{\mid}{\,\middle\vert\,}
\newcommand{\bin}{\{0,1\}}
\newcommand{\binpm}{\{\pm1\}}
\newcommand{\indicator}{\mathbbm{1}}
\newcommand{\supp}[1]{\mathsf{supp}\pbra{#1}}
\newcommand{\ac}{\mathsf{AC^0}}
\newcommand{\nc}{\mathsf{NC^0}}
\newcommand{\sym}{\mathrm{sym}}
\newcommand{\Dzeros}{\mathtt{zeros}}
\newcommand{\Dones}{\mathtt{ones}}
\newcommand{\Deven}{\mathtt{evens}}
\newcommand{\Dodd}{\mathtt{odds}}
\newcommand{\Dall}{\mathtt{all}}
\newcommand{\Dzeroones}{\mathtt{zerones}}
\newcommand{\Fbb}{\mathbb{F}}
\newcommand{\Nbb}{\mathbb{N}}
\newcommand{\Rbb}{\mathbb{R}}
\newcommand{\Zbb}{\mathbb{Z}}
\newcommand{\Dcal}{\mathcal{D}}
\newcommand{\Ecal}{\mathcal{E}}
\newcommand{\Hcal}{\mathcal{H}}
\newcommand{\Ical}{\mathcal{I}}
\newcommand{\Mcal}{\mathcal{M}}
\newcommand{\Pcal}{\mathcal{P}}
\newcommand{\Qcal}{\mathcal{Q}}
\newcommand{\Ucal}{\mathcal{U}}
\newcommand{\Wcal}{\mathcal{W}}
\newcommand{\tvdist}[1]{\vabs{#1}_\mathsf{TV}}
\renewcommand{\bar}{\overline}
\title{Locally Sampleable Uniform Symmetric Distributions}
\author{
Daniel M. Kane\thanks{University of California, San Diego. Email: \texttt{dakane@ucsd.edu}. Supported by NSF Medium Award CCF-2107547 and NSF CAREER Award CCF-1553288.}
\and
Anthony Ostuni\thanks{University of California, San Diego. Email: \texttt{aostuni@ucsd.edu}.}
\and
Kewen Wu\thanks{University of California, Berkeley. Email: \texttt{shlw\_kevin@hotmail.com}. Supported by a Sloan Research Fellowship and NSF CAREER Award CCF-2145474.}
}
\date{}
\begin{document}

\maketitle

\begin{abstract}
We characterize the power of constant-depth Boolean circuits in generating uniform symmetric distributions.
Let $f\colon\{0,1\}^m\to\{0,1\}^n$ be a Boolean function where each output bit of $f$ depends only on $O(1)$ input bits.
Assume the output distribution of $f$ on uniform input bits is close to a uniform distribution $\mathcal D$ with a symmetric support.
We show that $\mathcal D$ is essentially one of the following six possibilities: (1) point distribution on $0^n$, (2) point distribution on $1^n$, (3) uniform over $\{0^n,1^n\}$, (4) uniform over strings with even Hamming weights, (5) uniform over strings with odd Hamming weights, and (6) uniform over all strings.
This confirms a conjecture of Filmus, Leigh, Riazanov, and Sokolov (RANDOM 2023).
\end{abstract}

\section{Introduction}\label{sec:intro}

Despite being one of the simplest models of computation, $\nc$ circuits (i.e., Boolean circuits of constant depth and bounded fan-in) elude a comprehensive understanding. Even very recently, the model has been the subject of active research on the range avoidance problem \cite{ren2022range, guruswami2022range, gajulapalli2023range}, quantum advantages \cite{bravyi2018quantum, watts2019exponential, bravyi2020quantum, watts2023unconditional,kane2024locality}, proof verification \cite{goldwasser2007verifying, beyersdorff2013verifying, krebs2016small}, and more.

Pertinent to this paper is the study of the sampling power of $\nc$ circuits. While the general problem was considered at least as early as \cite{jerrum1986random}, interest in the $\nc$ setting  has seen a strong uptick lately \cite{viola2012complexity, lovett2011bounded, beck2012large, de2012extractors, viola2016quadratic, viola2020sampling, goos2020lower, chattopadhyay2022space, viola2023new, filmus2023sampling, kane2024locality, shaltiel2024explicit}.
At a high level, it considers what distributions can be (approximately) produced by simple functions on random inputs. More formally, let $f(\Ucal^m)$ denote the distribution resulting from applying an $\nc$ function $f \colon \bin^m \to \bin^n$ to a random string drawn from $\Ucal^m$, the uniform distribution over $\bin^m$. Typically, $m$ is viewed as being arbitrarily large and $n$ is the parameter of interest. Then the goal is to analyze the distance between $f(\Ucal^m)$ and some specific distribution. Aside from being inherently interesting, this question has played a crucial role in applications ranging from data structure lower bounds \cite{viola2012complexity, lovett2011bounded, beck2012large, viola2020sampling, chattopadhyay2022space, viola2023new, kane2024locality} to pseudorandom generators \cite{viola2012bit, lovett2011bounded, beck2012large} to extractors \cite{viola2012extractors, de2012extractors, viola2014extractors, chattopadhyay2016explicit, cohen2016extractors} to coding theory \cite{shaltiel2024explicit}. 

One recurring class of distributions in this line of work is uniform symmetric distributions (i.e., uniform distributions over a symmetric support). Indeed, these are precisely the distributions that arise in an elegant connection to succinct data structures (see \cite[Claim 1.8]{viola2012complexity}), for example. Moreover, this seemingly simple class is already rich enough to allow surprisingly powerful results. For instance, $\nc$ circuits can sample the uniform distribution over the preimage $\mathrm{PARITY}^{-1}(0)$ (and $\mathrm{PARITY}^{-1}(1)$), despite a celebrated result of H{\aa}stad \cite{haastad1986computational} proving that more powerful $\ac$ circuits require an exponential number of gates to \emph{compute} $\mathrm{PARITY}$. Perhaps more surprisingly, the strategy to sample a uniform random string with even Hamming weight is extremely simple: map the uniform random bits $x_1, \ldots, x_{n}$ to $x_1 \oplus x_2, x_2 \oplus x_3, \ldots, x_{n} \oplus x_1$ \cite{babai1987random, boppana1987one}. 

A number of notable prior results already rule out specific distributions from being accurately sampled by $\nc$ circuits. For example, let $\Dcal_{\cbra{k}}$ denote the uniform distribution over all $n$-bit strings of Hamming weight $k$. The influential early paper of \cite{viola2012complexity} showed that such shallow circuits could not accurately sample $\Dcal_{\cbra{k}}$ for $k = \Theta(n)$ under certain assumptions about the input length or accuracy tolerance; recent works \cite{filmus2023sampling, viola2023new, kane2024locality} have eliminated the need for these assumptions. Additionally, a number of results are known for uniform symmetric distributions over multiple Hamming weights, such as the case of exclusively tail weights \cite{filmus2023sampling}, all weights divisible by $q$ for fixed $3 \le q \ll \sqrt{n}$ \cite{kane2024locality}, and all weights above $n/2$ \cite{goldwasser2007verifying,viola2012complexity,filmus2023sampling} (see also \cite{watts2023unconditional}).

Despite much effort, the previous body of work proceeds in a somewhat ad-hoc fashion, with techniques tailored to rule out specific cases. However, an exciting recent work by Filmus, Leigh, Riazanov, and Sokolov \cite{filmus2023sampling} gave the following bold conjecture about the capabilities of $\nc$ circuits for sampling distributions, unifying prior results. 
\begin{conjecture}
[{\cite[Conjecture 1.1]{filmus2023sampling}}]
\label{conj:nc0_classification}
    For every $d \in \Nbb, \eps \in (0,1)$, and large enough $n$, if $f\colon\bin^m\to\bin^n$ is computable by an $\nc$ circuit of depth at most $d$ and $f(\Ucal^m)$ is $\eps$-close (in total variation distance) to a uniform symmetric distribution, then $f(\Ucal^m)$ is $O(\eps)$-close to one of the following six distributions:
    \begin{itemize}
        \item Point distribution on $0^n$.
        \item Point distribution on $1^n$.
        \item Uniform distribution over $\cbra{0^n,1^n}$.
        \item Uniform distribution over strings with even Hamming weights.
        \item Uniform distribution over strings with odd Hamming weights.
        \item Uniform distribution over all strings.
    \end{itemize}
\end{conjecture}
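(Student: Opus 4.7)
The plan is to combine Fourier analysis on $\bin^n$ with the structural properties imposed by both symmetry and locality. A uniform symmetric distribution $\Dcal$ is determined by its weight support $W \subseteq \cbra{0, 1, \ldots, n}$, so classifying the admissible $\Dcal$ reduces to classifying the admissible $W$. I would analyze $\mu := f(\Ucal^m)$ through its Fourier coefficients $\hat\mu(S) = \E\sbra{\chi_S(f(\ubm))}$, exploiting two structural constraints: near-symmetry of $\mu$ forces $\hat\mu(S)$ to depend (approximately) only on $\abs{S}$, while locality forces the factorization $\hat\mu(S_1 \sqcup S_2) = \hat\mu(S_1) \hat\mu(S_2)$ whenever the input neighborhoods $N(S_1), N(S_2)$ of the corresponding output bits are disjoint.

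The first step is a dichotomy on the common marginal $p := \Pr\sbra{y_1 = 1}$, which by symmetry is shared across output bits. If $p = O(\eps)$ or $p \ge 1 - O(\eps)$, concentration forces $\mu$ close to the point distribution on $0^n$ or $1^n$ (cases 1 and 2). Otherwise $p$ is bounded away from $\cbra{0,1}$, and I would examine the pairwise correlation $c_2 := \Pr\sbra{y_i = y_j}$, which symmetry forces to be uniform across pairs. By pigeonhole on the $O(1)$-sized input neighborhoods, most pairs of output bits have disjoint neighborhoods and are thus independent, giving $c_2 \approx p^2 + (1-p)^2$. This pins down $p \approx 1/2$. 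Two sub-cases remain: either the circuit has a rigid ``twin'' structure so that nearly all output bits coincide (case 3, $\cbra{0^n, 1^n}$), or the output bits are pairwise nearly independent, leaving cases 4, 5, and 6.

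In the pairwise-independent regime, I would analyze higher-order Fourier coefficients. For a typical $S$ of moderate size, partitioning $S = S_1 \sqcup S_2$ with $N(S_1) \cap N(S_2) = \emptyset$ yields $\hat\mu(S) = \hat\mu(S_1)\hat\mu(S_2)$. Combined with near-symmetry $\hat\mu(T) \approx \alpha_{\abs{T}}$, this produces a multiplicative recursion $\alpha_{\abs{S}} \approx \alpha_{\abs{S_1}} \alpha_{\abs{S_2}}$ whose solutions are very constrained: essentially $\alpha_k \in \cbra{0, \pm 1}$ following a multiplicative pattern. Matching the surviving patterns against symmetric distributions then identifies $\Dcal$ as the uniform distribution (case 6) or one of the parity distributions (cases 4 and 5).

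The hard part will be handling the ``exotic'' weight sets $W$ that evade the coarse pairwise analysis, such as $W = \cbra{w : w \equiv 0 \Mod{q}}$ for small $q$ or unions of weight bands of density $\Theta(1)$. Prior work already handles specific families (tail weights via \cite{filmus2023sampling}, small-modulus divisibility via \cite{kane2024locality}), but a unified classification requires showing that the factorization recursion together with the symmetry constraint forbids \emph{all} non-listed $W$. I expect this to be the technical heart of the paper, likely requiring new combinatorial control over the dependency hypergraph of an $\nc$ circuit so as to guarantee a sufficiently rich family of ``separable'' subsets $S$ on which the multiplicative recursion can be consistently applied and compared against the symmetric target.
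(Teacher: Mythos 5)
Your Fourier factorization intuition captures something real: the paper likewise exploits that output bits with disjoint input neighborhoods are independent, and the ``tail vs.\ center'' dichotomy you set up via the common marginal $p$ parallels the paper's split on $\iota(\Psi)$. But the proposal as written has two gaps that I think are fatal without major additional ideas.

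First, the translation from Fourier coefficients to TV distance does not go through. You correctly note that $\tvdist{\mu-\Dcal_\Psi}\le\eps$ forces $|\hat\mu(S)-\hat\Dcal_\Psi(S)|\le 2\eps$ for every $S$, and that symmetry of $\Dcal_\Psi$ makes these coefficients level functions. But the converse direction is what you actually need at the end: having pinned down $\alpha_{|S|}$ to within $O(\eps)$ of a target pattern like $\{0,\pm1\}$, you must conclude $\mu$ is $O(\eps)$-close in TV to the corresponding special distribution. Coefficient-wise $L^\infty$ closeness of Fourier spectra does not imply TV closeness without paying an exponential blow-up (Parseval gives an $L^2$ bound that degrades by $2^{n/2}$), and locality alone does not repair this. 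The paper works hard precisely here: moment matching (\Cref{moment matching lemma}) only yields $k$-wise independence for $k\approx\log(1/\eps)/d$, which is far from determining the distribution, and the passage to TV closeness requires the Kolmogorov bound (\Cref{Kol prop}, via $k$-wise independence fooling thresholds and a degree-$d$ $\Fbb_2$-polynomial parity-randomization theorem), a continuity bound on the weight density (\Cref{continuity prop}, via anticoncentration of sums of independent integer random variables), and a Chebyshev-polynomial Cauchy--Schwarz argument (\Cref{prop:mass_from_S}) to localize where mass can be missing. None of this is replaced by the multiplicative recursion.

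Second, the pigeonhole step (``most pairs of output bits have disjoint neighborhoods'') is false without a prior reduction. If $m$ is small --- e.g.\ $f(x)=(x_1,\dots,x_1)$ sampling $\Dzeroones$ --- every pair shares a neighborhood, so the factorization $\hat\mu(S_1\sqcup S_2)=\hat\mu(S_1)\hat\mu(S_2)$ never applies. The paper handles this by first conditioning on the $O(1)$ many high-degree input bits and then running a graph-elimination argument (\Cref{neighborhoods cor}, from \cite{kane2024locality}) to extract $\Omega_d(n)$ pairwise non-adjacent bounded-size neighborhoods; only after that conditioning is the independence you want available. Without that step your twin-structure case and the pairwise-independent case are not actually separated by the circuit structure. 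Your ``exotic $W$'' paragraph correctly flags where the hardness lives, but the paper's resolution is not an extension of the multiplicative-recursion idea: it goes through the local limit theorem (\Cref{LLT Theorem}) showing $|f(\Ucal^m)|$ is close to a mixture of $|\Deven|$ and $|\Dodd|$, plus the decomposition $\tvdist{A-B}=\Theta(\tvdist{|A|-|B|}+\tvdist{A-A_\sym})$ (\Cref{lem:distance_to_sym}) to lift weight-distribution conclusions back to TV distance over strings. I'd recommend reframing your approach around that weight-distribution local limit theorem rather than around the Fourier recursion.
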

All six distributions can be sampled (exactly) by functions whose output bits each depend on at most two input bits. Hence one may informally view the conjecture as asserting that more input dependencies do not substantially increase the ability of $\nc$ circuits to generate uniform symmetric distributions.

In this work, we confirm the conjecture of \cite{filmus2023sampling}.

\begin{theorem}[Consequence of \Cref{thm:nc0}]\label{thm:nc0_informal}
    \Cref{conj:nc0_classification} is true.
\end{theorem}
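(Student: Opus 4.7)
The plan is to combine marginal and moment analysis of $f(\Ucal^m)$ with a structural analysis of the dependency hypergraph induced by locality, then match the resulting structure against the six candidate distributions. Let $p_i=\Pr_x[f_i(x)=1]$. Since $\mathcal{D}$ has a symmetric support, every $p_i$ must lie within $O(\eps)$ of a common value $p$. If $p$ is near $0$ or $1$, a union bound places the output within $O(n\eps)$ of the point mass on $0^n$ or $1^n$, giving cases (1) or (2). Otherwise $p\approx 1/2$, and the substantive work targets cases (3)--(6). The Hamming-weight variance $\Var(W)$ with $W=\sum_i f_i$ is a useful discriminator: case (3) has $\Var(W)=\Theta(n^2)$, while cases (4)--(6) have $\Var(W)=\Theta(n)$.

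Next, I would partition the analysis by input-bit influence. Call $x_j$ a \emph{hub} if it affects $\Omega(n)$ output bits. In the hub case, conditioning on one hub partitions the distribution into two halves whose supports have mutual Hamming distance $\Omega(n)$; iterating on residual hubs together with the symmetry requirement forces the conditional branches to concentrate on near-point distributions, placing $\mathcal{D}$ within $O(\eps)$ of case (3). In the no-hub case, every input has sublinear out-degree, so the output-bit dependency graph has $O(n)$ edges and $\Var(W)=O(n)$, ruling out case (3). To separate cases (4)/(5) from case (6) in this regime, observe that $\mathrm{PARITY}(f(x))=\bigoplus_i f_i(x)$ has $\mathbb{F}_2$-degree at most the locality $d$; symmetry of $\mathcal{D}$ forces this polynomial to be essentially constant $0$, essentially constant $1$, or essentially balanced, corresponding to cases (4), (5), (6) respectively. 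A \emph{reachability lemma}---flipping a single input bit shifts the output weight by at most $d$, and in the no-hub regime enough flips are available to propagate across all admissible weights---would then pin the support $S$ down to one of the six canonical sets.

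The hardest step is the rigidity/stability analysis required to move from ``close in TV'' to ``close to exactly one of six canonical distributions.'' Small-probability outliers in $f(\Ucal^m)$ could perturb the support classification, so a cleanup argument that restricts to a typical event and re-analyzes a modified low-error sampler is likely needed. Moreover, the hub reduction must be applied inductively without amplifying $\eps$ exponentially in the number of hubs processed, which demands careful amortization of the error. I expect this stability bookkeeping---together with the no-hub reachability argument showing $S$ must contain a full arithmetic progression in $\{0,1,\ldots,n\}$ rather than an arbitrary symmetric subset---to absorb most of the technical effort.
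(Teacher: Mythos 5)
There is a real gap at the very first step. Your marginal argument claims that if the common marginal $p$ is ``near $0$ or $1$,'' a union bound places $f(\Ucal^m)$ within $O(n\eps)$ of a point mass. But the union bound only works if $p=O(\eps/n)$, and moreover $p$ need not be near $0$, $1$, or $1/2$ at all: taking $\Dcal_\Psi=\Dcal_{\{k\}}$ for, say, $k=n/4$ gives $p=1/4$, which your trichotomy simply does not cover. This is not a corner case but the generic hard case, and it is where the paper invests heavily. The paper instead classifies by the weight $\iota(\Psi)$ in $\Psi$ closest to $n/2$, and in the tail regime ($|\iota(\Psi)-n/2|>n^{2/3}$) it shows $\Dcal_\Psi\in\{\Dzeros,\Dones,\Dzeroones\}$ by combining \Cref{clm:small_support} (most of the mass sits on $O(n^{1/3})$ weights) with the known hardness of sampling a single Hamming slice (\Cref{thm:single_hamming_slice}) and a union bound over slices (\Cref{lem:tvdist_after_conditioning}). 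Your hub-conditioning argument is not a substitute for this: a hub bit affecting $\Omega(n)$ outputs does not make the two conditional supports Hamming-distance $\Omega(n)$ apart (the same output string can be realized on both sides), and in any event it does not address $p\approx 1/4$.

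The central regime is where the technical weight of the paper actually lives, and your proposal is missing that machinery. The $\Var(W)$ heuristic and the ``reachability lemma'' (flipping one input shifts weight by $\le d$) do not by themselves produce a total-variation bound; what the paper needs and proves is a local limit theorem (\Cref{LLT Theorem}), obtained by (i) a Kolmogorov bound (\Cref{Kol prop}) built from an inductive subcube decomposition (\Cref{tree lemma}), $k$-wise independence fooling thresholds (\Cref{thm:kwise_ind}), and an $\mathbb{F}_2$-polynomial re-randomization theorem (\Cref{parity randomization theorem}); and (ii) a continuity bound (\Cref{continuity prop}) built from graph elimination (\Cref{neighborhoods cor}) and a density-comparison theorem for sums of independent bounded integer variables (\Cref{thm:comp_llt_C=1_special}). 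These, together with a Chebyshev-polynomial moment-matching argument (\Cref{prop:mass_from_S}) and a careful pairing of present/missing weights near $n/2$, are what actually pin $f(\Ucal^m)$ to one of $\{\Deven,\Dodd,\Dall\}$ with an $O(\eps)$ loss independent of $d$. Your parity observation (that $\bigoplus_i f_i$ has $\mathbb{F}_2$-degree $\le d$) is genuinely used in the paper (\Cref{constant parity lem}), but as a consequence of already being close to $\Deven$ or $\Dodd$, not as a route to establishing it. As written, the proposal lacks the ideas needed to make the approach go through.
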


We emphasize that the implicit constant in the distance $O(\eps)$ in our result has \emph{no} dependence on the depth $d$. Additionally, note that this result is optimal up to that implicit constant. We include a more thorough discussion of our result's tightness in \Cref{sec:nc0}, where we present a quantitative version of \Cref{thm:nc0_informal} parametrized by the locality (i.e., number of input bits each output bit depends on) of $f$. The following corollary is immediate.
\begin{corollary}
    For sufficiently large $n$, the only uniform symmetric distributions over $\bin^n$ exactly sampleable by $\nc$ functions are the six distributions in \Cref{conj:nc0_classification}.
\end{corollary}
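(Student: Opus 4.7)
The plan is to derive this corollary directly from \Cref{thm:nc0_informal}. Let $f\colon\bin^m\to\bin^n$ be any $\nc$ function for which $f(\Ucal^m)$ equals some uniform symmetric distribution $\mathcal D$. Since $f(\Ucal^m)$ is trivially $\eps$-close to $\mathcal D$ for every $\eps\in(0,1)$, \Cref{thm:nc0_informal} supplies a constant $C$ (independent of the depth $d$ and of $\eps$, as explicitly emphasized after the theorem) together with one of the six distributions $\mathcal D_1,\ldots,\mathcal D_6$ from \Cref{conj:nc0_classification} such that $\dtv(\mathcal D, \mathcal D_i) \le C\eps$.

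To conclude, I would argue by contradiction that $\mathcal D$ must equal some $\mathcal D_i$. If instead $\mathcal D \neq \mathcal D_j$ for every $j\in\{1,\ldots,6\}$, then $\delta := \min_j \dtv(\mathcal D,\mathcal D_j)$ is a fixed strictly positive real number, since the list of candidates is finite and both $\mathcal D$ and each $\mathcal D_j$ are fixed discrete distributions on $\bin^n$. Choosing $\eps\in(0,1)$ small enough that $C\eps<\delta$ then contradicts the bound $\dtv(\mathcal D,\mathcal D_i)\le C\eps$ obtained above. Hence $\mathcal D$ coincides exactly with one of the six distributions.

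I do not foresee any real difficulty here: the whole content of the corollary lies in \Cref{thm:nc0_informal}, and the only extra ingredient beyond invoking that theorem is the uniformity of the implicit constant in $\eps$, which is an explicit part of the theorem's statement. The only thing one needs to check is that the six candidates really are pairwise well-separated from every other uniform symmetric distribution on $\bin^n$ once $n$ is large; but this is a trivial matter since each $\mathcal D_j$ is a concrete distribution and the set of uniform symmetric distributions on $\bin^n$ is finite for every fixed $n$.
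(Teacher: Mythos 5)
Your proof is correct and takes essentially the same route as the paper, which deduces the corollary immediately from the main theorem by observing that exact sampling corresponds to $\eps=0$. One small caution: the limiting argument as $\eps\to 0$ implicitly needs the threshold on $n$ to be independent of $\eps$; the conjecture's phrasing ``For every $d,\eps$, and large enough $n$'' is ambiguous on this, so you should cite \Cref{thm:nc0} (where the lower bound on $n$ depends only on $d$) rather than \Cref{thm:nc0_informal}, or simply apply \Cref{thm:nc0} directly with $\eps = \tvdist{f(\Ucal^m)-\Dcal_\Psi}=0$ and avoid the limit altogether. (Also, your final remark about ``pairwise well-separation'' of the six candidates from all other uniform symmetric distributions is not needed; all you use is that a distribution differing from all six has strictly positive minimum distance to them, which is automatic.)
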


As a contrasting example to the limitation given by \Cref{thm:nc0_informal}, consider the next simplest class of circuits commonly studied: $\ac$. Up to some exponentially small error, they are able to sample the uniform distribution over permutations of $[n]$ \cite{matias1991converting, hagerup1991fast}. Thus by sampling $1^w 0^{n-w}$ for the appropriate distribution over weights $w$ accepted (or rejected) by a symmetric function $f$, one can apply a randomly sampled permutation to output the uniform distribution over $f^{-1}(1)$ (or $f^{-1}(0)$) \cite[Lemma 4.3]{viola2012complexity}.

\paragraph*{Paper Organization.}
We provide a proof overview of \Cref{thm:nc0_informal} in \Cref{sec:overview}. Preliminary definitions and results are given in \Cref{sec:prelim}. The full proof of our main result is in \Cref{sec:nc0}, with some technical proofs deferred to \Cref{sec:LLT} and \Cref{app:comp_llt}.
\section{Proof Overview}\label{sec:overview}

Our starting point is similar to many past works \cite{viola2012complexity, viola2020sampling, filmus2023sampling, viola2023new, kane2024locality}: we reduce an arbitrary function (computable by an $\nc$ circuit) to a collection of structured functions which are more amenable to analysis. Our results then follow by lifting insights from these structured functions to our original function.

It will be convenient to work with the abstraction of \emph{locality}. We say a function $f\colon \bin^m \to \bin^n$ is \emph{$d$-local} if every output bit depends on at most $d$ input bits. Observe that the class of $d$-local functions captures functions computable by Boolean circuits of depth $O(\log d)$ and bounded fan-in. In particular, constant locality functions are equivalent to those computable by $\nc$ circuits.
Henceforth, let $f\colon \bin^m \to \bin^n$ be a $d$-local function. For simplicity, we hide minor factors in the following discussion.

\subsection{A Structured Decomposition}\label{sec:KOW_summary}

We will use the ``graph elimination'' reduction strategy of \cite{kane2024locality}, rephrased slightly more naturally in the language of hypergraphs. Let $G$ be the hypergraph on the output bits $[n]$ with an edge for each input bit connecting all of the output bits that depend on it. By assumption, each vertex is contained in at most $d$ edges (i.e.\ $G$ has maximum degree $d$). Using standard hypergraph terminology, we define the \emph{neighborhood} of a vertex $v$ to be the set of all vertices sharing an edge with $v$. Furthermore, we call two neighborhoods $N_1, N_2$ \emph{connected} if there exist vertices $v_1 \in N_1, v_2 \in N_2$ contained in the same edge.

By \cite[Proposition 5.20]{kane2024locality}, there exists a set of $o(n)$ edges whose deletion results in a graph with $\Omega_d(n)$ non-connected neighborhoods of size $O_d(1)$. In other words, there always exists a choice of a few input bits whose conditioning upon decomposes $f$ into a mixture of subfunctions with substantially independent output bits.

This independence is crucial in ruling out the sampleability of various distributions by these structured subfunctions. For example, \cite{kane2024locality} used the following win-win argument to prove strong bounds on the distance between any distribution sampleable by a local function and the uniform distribution over $n$-bit strings of Hamming weight $k=\Theta(n)$, denoted $\Dcal_{\cbra{k}}$. If the marginal distributions of most independent neighborhoods noticeably differ from the corresponding marginals of $\Dcal_{\cbra{k}}$, then the errors can be combined together via a straightforward concentration bound argument \cite[Lemma 4.2]{kane2024locality}. 

Otherwise, the marginal distributions of most independent neighborhoods $N(v_1), \dots, N(v_r)$ closely match the marginals of $\Dcal_{\cbra{k}}$. Hence by conditioning on all the input bits that do not affect the output bits $v_1,\dots, v_r$, the weight of the output becomes a sum of well-behaved independent integer random variables. From this property, one can show (\cite[Claims 5.16 \& 5.23]{kane2024locality}) that with high probability many of these random variables are not constant (or even constant modulo $q$ for $q \ge 3$), in which case anticoncentration inequalities (e.g., \cite[Theorem 3]{ushakov1986upper} or \cite[Lemma 3.7]{kane2024locality}) imply no specific output weight can be obtained with good probability. Hence the subfunctions cannot accurately sample $\Dcal_{\cbra{k}}$, so (by a union bound argument) neither can their mixture.

Note that the distribution $\Dcal_{\cbra{k}}$ is a special kind of uniform symmetric distribution (i.e., uniform distribution over a symmetric support).
In this work, we need to handle more general ones; however, many of the same ideas will drive our analysis.

\subsection{Classification of Locally Sampleable Uniform Symmetric Distributions}\label{sec:overview_nc0}

Now we show how to handle a general uniform symmetric distribution and obtain our classification result.
For convenience, we use a non-empty set $\Psi\subseteq\cbra{0,1,\ldots,n}$ to denote the acceptable Hamming weights and use $\Dcal_\Psi$ to denote the uniform distribution over strings of Hamming weights in $\Psi$.
Then our goal is to show that local functions cannot approximate $\Dcal_\Psi$ unless $\Psi$ is $\cbra{0}$ (the point distribution on $0^n$), $\cbra{n}$ (the point distribution on $1^n$), $\cbra{0,n}$ (uniform over $\cbra{0^n,1^n}$), $\cbra{0,2,4,\ldots}$ (uniform over strings with even parity), $\cbra{1,3,5,\ldots}$ (uniform over strings with odd parity), or $\cbra{0,1,\ldots,n}$ (uniform over all strings). We will often refer to the corresponding $\Dcal_\Psi$ as the six special distributions.

Let $s\in\Psi$ be an element closest to the middle weight $n/2$. Note the majority of the mass of $\Dcal_\Psi$ is supported on strings roughly as close to $n/2$ as $s$ is.
Informally, we view $\Dcal_\Psi$ as either $\Dcal_{\cbra{s}}$ (uniform over the Hamming slice of weight $s$) or $\frac12\Dcal_{\cbra{s}}+\frac12\Dcal_{\cbra{n-s}}$ (uniform over the Hamming slices of weight $s$ and $n-s$).
Then the above six locally sampleable distributions can be classified by $s$: either it is the endpoint (i.e., $s$ equals $0$ or $n$) or it is the middle point (i.e., $s$ roughly equals $n/2$).
Our proof follows this intuition. If $|s - n/2| > n^{2/3}$, we will show that it must be the case of $s\in\cbra{0,n}$. 
Otherwise $|s - n/2| \le n^{2/3}$, and we will show that it must be the case that $\Psi$ is effectively all-even, all-odd, or everything. 

\paragraph*{The $|s-n/2| > n^{2/3}$ Case.}
This far in the tails the binomial distribution decays rapidly, so it is not difficult to show (\Cref{clm:small_support}) that half the mass of $\Dcal_\Psi$ is supported on $O(n^{1/3})$ different weights.
However, \cite[Theorem 1.2]{filmus2023sampling} and \cite[Theorem 5.10]{kane2024locality} show that any distribution $\Dcal_{\cbra{k}}$ other than $k=0$ or $n$ has total variation distance $1-O_d(n^{-1/2})$ from any $d$-locally sampleable uniform symmetric distribution.
Therefore, by a union bound argument (\Cref{lem:tvdist_after_conditioning}), $\tvdist{f(\Ucal^m) - \Dcal_\Psi} > 1/3$ unless $\Psi$ is one of $\{0\},\{n\},\{0,n\}$.
We now turn to the more challenging case.

\paragraph*{The $|s-n/2| \le n^{2/3}$ Case.}
In this case we note that $\Dcal_\Psi$ is reasonably close to the uniform distribution. (In particular, its restriction to any constant number of output bits will be quite close.) Thus, the framework explained in \Cref{sec:KOW_summary} is now applicable. That is, we remove a modest number of edges from the corresponding hypergraph to obtain one with $\Omega_d(n)$ many non-adjacent neighborhoods of size $O_d(1)$. This means that after conditioning on the removed inputs, we will have many sets of output bits that are independent of each other.

From here we again split into cases. 
If many of these neighborhoods are far from uniform, our distribution must be far from the uniform-ish $\Dcal_\Psi$. 
This part of the mixture will contribute nearly equally to the distance from $\Dcal_\Psi$ and from the closest special distribution. 
Otherwise, we have many roughly unbiased neighborhoods $N(v_1),\dots,N(v_r)$. 
As sketched in \Cref{sec:KOW_summary}, if we further condition on all the inputs that do not affect $v_r, \dots, v_r$, then the total weight of the output becomes a sum of independent random variables. 
Using the fact that the overall distribution on these neighborhoods is uniform, we can show that with high probability many of these random variables are not constant, and in fact for each $q>2$, many are not constant modulo $q$. 
This allows us to show (\Cref{continuity prop}) that our distribution over weights must be continuous. 
In particular, the probability that we have weight $x$ and the probability that we have weight $x+\Delta$ for even $\Delta \in \Zbb$ must differ by at most $O_d(|\Delta|/n)$ (\Cref{thm:comp_llt_C=1_special}). 
This implies that if $f(\Ucal^m)$ is close to $\Dcal_\Psi$, then $\Psi$ cannot have many instances where it contains $x$ but not $x+\Delta$ or vice versa. 
By pairing up the present $x$'s with missing ones, we attempt to show that the distance between $f(\Ucal^m)$ with $\Dcal_\Psi$ is comparable to the distance between $\Dcal_\Psi$ and the nearest special distribution.

Unfortunately, there are two issues with this argument. The first is that our bounds on the difference between probabilities of $x$ and $x+\Delta$ have error terms that are inverse polynomial in $n$.\footnote{The following issue persists even if we sharpen the bounds to have inverse exponential error (as was done in a previous version \url{https://arxiv.org/pdf/2411.08183v1}).} 
This makes them useless if $\Psi$ contains all of the elements of $[n]$ within $\Omega_d(n)$ of $n/2$. 
In this case, however, the coordinates of $f(\Ucal^m)$ must match moments with the uniform distribution (\Cref{moment matching lemma}). 
Thus, a careful argument (\Cref{prop:mass_from_S}) can show that $f(\Ucal^m)$ cannot be much closer to $\Dcal_\Psi$ than the relevant special distribution, as a decent fraction of any mass removed from the first ``missing'' Hamming slice $\Dcal_{\cbra{s}}$ of $\Dcal_\Psi$ must be placed at even more extreme weights rather than closer to the middle.

The other issue is simply the $d$-dependence of our continuity bounds. A naive application would only be enough to show that $f(\Ucal^m)$ is $O_d(\eps)$-close to the nearest special distribution. 
If, for example, $\Psi = \{0,1,\ldots,n/2\}$, we would only be able to find $\Omega_d(\sqrt{n})$ many pairs of elements to compare and could only show that $\tvdist{f(\Ucal^m) - \Dcal_\Psi}=\Omega_d(1).$ 
To fix this, we use the fact that our output distribution must match moments with the uniform distribution to show (\Cref{Kol prop}) that most of $f(\Ucal^m)$'s mass has a weight distribution close in Kolmogorov distance\footnote{Recall the \emph{Kolmogorov distance} between two distributions $\Pcal$, $\Qcal$ is given by $\sup_{t \in \Rbb}|\Pr_{x\in \Pcal}[x > t] - \Pr_{y\in \Qcal}[y > t]|$.} to the binomial distribution, even accounting for parity. 

More specifically, we use an inductive argument (\Cref{tree lemma}) to decompose the input space $\bin^m$ into subcubes. 
Aside from some negligible mass, each subcube falls into two types. A Type-I subcube $C$ satisfies that the output bits of $f$ on the uniform distribution over $C$ are nearly $k$-wise independent (for an appropriately chosen parameter $k$) with no input bit affecting too many output bits. Such cubes are useful, as we can leverage known results about $k$-wise independence fooling threshold functions (\Cref{thm:kwise_ind}) and the structure of low-degree $\Fbb_2$-polynomials (\Cref{parity randomization theorem}) to prove our desired Kolmogorov property in this special case (\Cref{k indep lem}). 

Alternatively, Type-II subcubes satisfy that $f$ applied to the uniform distribution over the union of all Type-II subcubes is far from $\Dcal_\Psi$. 
These correspond to the aforementioned neighborhoods whose distributions are far from uniform, and whose contribution to the distance between $f(\Ucal^m)$ and $\Dcal_\Psi$ roughly matches that of the contribution to the distance between $f(\Ucal^m)$ and the nearest special distribution.
This is the small fraction of $f(\Ucal^m)$'s mass that does not satisfy the desired Kolmogorov distance property.

Combining with our continuity result, we obtain a local limit theorem (\Cref{LLT Theorem}) stating that $f(\Ucal^m)$ must be close to some mixture $\Mcal$ of the uniform distribution over strings with even Hamming weight and the uniform distribution over strings with odd Hamming weight. 
Furthermore, observe that for any two strings, $\Dcal_\Psi$ either assigns them the same mass, or assigns one of them zero mass. Since $f(\Ucal^m)$ is close to both $\Dcal_\Psi$ and $\Mcal$, the triangle inequality guarantees $\Dcal_\Psi$ is close to $\Mcal$. Then by the previous observation, $\Mcal$ must be close to either the uniform distribution over strings with even Hamming weight, the uniform distribution over strings with odd Hamming weight, or the uniform distribution over all strings (\Cref{close special cor}). In other words, we can extract a similar conclusion to our limit limit theorem for the nearest special distribution $\Dcal$: $\tvdist{f(\Ucal^m) - \Dcal} \leq O(\tvdist{f(\Ucal^m) -\Dcal_\Psi}) + \delta$ for any $\delta > 0$, so long as $n$ is sufficiently large in terms of $\delta$ and $d$.

At this point, it is tempting to apply the above conclusion with $\delta = O(\tvdist{f(\Ucal^m) -\Dcal_\Psi})$ to conclude the proof. 
However, if $\delta$ is particularly small, we may not be able to guarantee $n$ is large enough to apply the result. 
Hence, we can only assume $\tvdist{f(\Ucal^m) - \Dcal} \leq O(\tvdist{f(\Ucal^m) -\Dcal_\Psi}) + \delta$ for some sufficiently small constant $\delta > 0$.

It remains to handle the special case of $\tvdist{f(\Ucal^m) - \Dcal}$ less than some sufficiently small constant. 
Here, we can take advantage of strong structural properties, such as matching many moments with the uniform distribution (\Cref{moment matching lemma}) and having the support of $f(\Ucal^m)$ entirely contained in the support of $\Dcal$ (\Cref{constant parity lem}). 
These additional properties allow us to deduce that unless our desired result already holds, the weight distribution $|f(\Ucal^m)|$ has noticeably too little mass on weights around $n/2$. We can now run the previously discussed pairing argument, where we find nearby $x\in \Psi$ and $x+\Delta \not\in\Psi$ for even $\Delta \in \Zbb$, except now restricted to a small interval around $n/2$. Crucially, the mass $\Dcal_\Psi$ assigns to strings with weight in that interval exceeds the error from the continuity bound, removing the dependence on $d$.

\paragraph*{Previous Version.}
A previously posted version of this paper provided a weaker form of \Cref{thm:nc0_informal} where $f(\Ucal^m)$ is only $O_d(\eps)$-close to one of the six special distributions. The approach taken was a more technically involved form of the original pairing argument described above. In particular, the key improvement in this version that allows us to remove the dependence on $d$ from the distance bound is the local limit theorem (\Cref{LLT Theorem}).

\section{Preliminaries}\label{sec:prelim}

For a positive integer $n$, we use $[n]$ to denote the set $\cbra{1,2,\ldots,n}$.
We use $\Rbb$ to denote the set of real numbers, use $\Nbb=\cbra{0,1,2,\ldots}$ to denote the set of natural numbers, and use $\Zbb$ to denote the set of integers.
For a binary string $x$, we use $|x|$ to denote its Hamming weight.
We use $\log(x)$ and $\ln(x)$ to denote the logarithm with base $2$ and $e\approx2.71828\ldots$ respectively.

\paragraph*{Asymptotics.}
We use the standard $O(\cdot), \Omega(\cdot), \Theta(\cdot)$ notation, and emphasize that in this paper they only hide universal positive constants that do not depend on any parameter. Occasionally we will use subscripts to suppress a dependence on particular variable (e.g.,\ $O_d(1)$).

\paragraph*{Probability.}
Let $\Pcal$ be a (discrete) distribution. We use $x\sim\Pcal$ to denote a random sample $x$ drawn from the distribution $\Pcal$.
If $\Pcal$ is a distribution over a product space, then we say $\Pcal$ is a product distribution if its coordinates are independent.
In addition, let $S$ be a non-empty set. If $S$ indexes $\Pcal$, we use $\Pcal[S]$ to denote the marginal distribution of $\Pcal$ on coordinates in $S$. We reserve $\Ucal$ to denote the uniform distribution over $\bin$ and $\Ucal(S)$ to denote the uniform distribution over $S$.

For a deterministic function $f$, we use $f(\Pcal)$ to denote the output distribution of $f(x)$ given a random $x\sim\Pcal$. 
For every event $\Ecal$, we define $\Pcal(\Ecal)$ to be the probability that $\Ecal$ happens under distribution $\Pcal$.
In addition, we use $\Pcal(x)$ to denote the probability mass of $x$ under $\Pcal$, and use $\supp{\Pcal}=\cbra{x:\Pcal(x)>0}$ to denote the support of $\Pcal$.
If $\Pcal$ is a distribution over $\bin^n$, we use $|\Pcal|$ to denote the distribution over weights. That is, $|\Pcal|(w) = \sum_{x : |x|=w}\Pcal(x)$.
We additionally define the symmetrized distribution $\Pcal_\sym$ to be the distribution resulting from randomly permuting strings $x\sim \Pcal$.

Let $\Qcal$ be a distribution. We use $\tvdist{\Pcal-\Qcal}=\frac12\sum_x\abs{\Pcal(x)-\Qcal(x)}$ to denote their total variation distance.\footnote{To evaluate total variation distance, we need two distributions to have the same sample space. This will be clear throughout the paper and thus we omit it for simplicity.}
We say $\Pcal$ is $\eps$-close to $\Qcal$ if $\tvdist{\Pcal(x)-\Qcal(x)}\le\eps$, and $\eps$-far otherwise.

\begin{fact}\label{fct:tvdist}
Total variation distance has the following equivalent characterizations:
$$
\tvdist{\Pcal-\Qcal}=\max_{\text{event }\Ecal}\Pcal(\Ecal)-\Qcal(\Ecal)=\min_{\substack{\text{random variable }(X,Y)\\\text{$X$ has marginal $\Pcal$ and $Y$ has marginal $\Qcal$}}}\Pr\sbra{X\neq Y}.
$$
\end{fact}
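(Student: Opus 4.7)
The plan is to verify the two equalities separately via standard arguments. For the first equality, I would expand $\tvdist{\Pcal-\Qcal} = \frac12 \sum_x |\Pcal(x) - \Qcal(x)|$ and split the sum according to the sign of $\Pcal(x) - \Qcal(x)$. Since $\sum_x \Pcal(x) = \sum_x \Qcal(x) = 1$, the positive and negative parts of $\Pcal - \Qcal$ have equal total mass, so each equals $\tvdist{\Pcal-\Qcal}$. This identifies the maximizing event as $\Ecal^\star = \cbra{x : \Pcal(x) > \Qcal(x)}$; any alternative event yields a weakly smaller value of $\Pcal(\Ecal) - \Qcal(\Ecal)$, since including a point $x$ with $\Pcal(x) < \Qcal(x)$ or excluding one with $\Pcal(x) > \Qcal(x)$ only decreases the difference.

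For the coupling characterization, I would prove both inequalities. The $\le$ direction leverages the first equality: for any coupling $(X,Y)$ with marginals $\Pcal$ and $\Qcal$ and any event $\Ecal$,
$$\Pcal(\Ecal) - \Qcal(\Ecal) = \Pr\sbra{X \in \Ecal} - \Pr\sbra{Y \in \Ecal} \le \Pr\sbra{X \in \Ecal,\, Y \notin \Ecal} \le \Pr\sbra{X \neq Y},$$
so taking the supremum over events gives $\tvdist{\Pcal - \Qcal} \le \Pr\sbra{X \neq Y}$, and then the infimum over couplings yields the bound.

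For the matching $\ge$ direction, I would exhibit an explicit coupling attaining equality. Set $p(x) = \min\cbra{\Pcal(x), \Qcal(x)}$ and $\alpha = \sum_x p(x)$; applying the first equality to $\Ecal^\star$ gives $\alpha = 1 - \tvdist{\Pcal - \Qcal}$. Define the coupling as follows: with probability $\alpha$, draw $z$ from the normalized distribution $p/\alpha$ and set $X = Y = z$; otherwise, independently draw $X$ from the normalized excess $(\Pcal - p)/(1-\alpha)$ and $Y$ from $(\Qcal - p)/(1-\alpha)$. Verifying the marginals is a direct computation, and since $\Pcal - p$ and $\Qcal - p$ have disjoint supports (at every $x$, at least one of $\Pcal(x) - p(x)$ and $\Qcal(x) - p(x)$ vanishes by definition of $p$), we get $\Pr\sbra{X \neq Y} \le 1 - \alpha = \tvdist{\Pcal-\Qcal}$, so the infimum is attained.

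There is no real obstacle: this is a textbook-standard fact. The only subtlety worth flagging is the disjoint-support observation, which is what upgrades the coupling bound from $\le 1-\alpha$ to equality and hence shows that the infimum in the second characterization is actually achieved.
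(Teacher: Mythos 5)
The paper states \Cref{fct:tvdist} without proof, treating it as a standard textbook result. Your argument is the canonical proof (sign-set characterization for the first equality, plus the one-sided coupling bound and the explicit maximal coupling for the second) and is correct; the only minor point worth noting is that the edge cases $\alpha=0$ and $\alpha=1$ make the excess distributions or the common part degenerate, but the coupling trivially adapts in both.
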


Let $\Pcal_1,\ldots,\Pcal_t$ be distributions.
Then $\Pcal_1\times\cdots\times\Pcal_t$ is a distribution denoting the product of $\Pcal_1,\ldots,\Pcal_t$.
We also use $\Pcal^t$ to denote $\Pcal_1\times\cdots\times\Pcal_t$ if each $\Pcal_i$ is the same as $\Pcal$.
For a finite set $S \subseteq [t]$, we use $\Pcal^S$ to denote the distribution $\Pcal^t$ restricted to the coordinates of $S$.
We say distribution $\Pcal$ is a convex combination (or mixture) of $\Pcal_1,\ldots,\Pcal_t$ if there exist $\alpha_1,\ldots,\alpha_t\in[0,1]$ such that $\sum_{i\in[t]}\alpha_i=1$ and $\Pcal(x)=\sum_{i\in[t]}\alpha_i\cdot\Pcal_i(x)$ for all $x$ in the sample space. When it is clear from context, we will occasionally write mixtures more simply as $\Pcal=\sum_{i\in[t]}\alpha_i\cdot\Pcal_i$.

We will require two inequalities about total variation distance.
The first allows us to argue that two product distributions must be far apart if their marginals do not match.

\begin{lemma}[{\cite[Lemma 4.2]{kane2024locality}}]\label{lem:tvdist_after_product}
Let $\Pcal$, $\Qcal$, and $\Wcal$ be distributions over an $n$-dimensional product space, and let $S\subseteq[n]$ be a non-empty set of size $s$.
Assume
\begin{itemize}
\item $\Pcal[S]$ and $\Wcal[S]$ are two product distributions,
\item $\tvdist{\Pcal[\cbra{i}]-\Wcal[\cbra{i}]}\ge\eps$ holds for all $i\in S$, and
\item $\Wcal(x)\ge\eta\cdot\Qcal(x)$ holds for some $\eta>0$ and all $x$.
\end{itemize}
Then
$$
\tvdist{\Pcal-\Qcal}\ge1-2\cdot e^{-\eps^2s/2}/\eta.
$$
\end{lemma}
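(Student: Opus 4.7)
The plan is to exhibit a single event $\Ecal$ (depending only on the coordinates in $S$) that $\Pcal$ assigns nearly all mass and $\Qcal$ assigns nearly none, via the intermediate distribution $\Wcal$. For each $i\in S$, the assumption $\tvdist{\Pcal[\cbra{i}] - \Wcal[\cbra{i}]}\ge\eps$ together with \Cref{fct:tvdist} yields a one-coordinate event $\Ecal_i$ with $\Pcal[\cbra{i}](\Ecal_i) - \Wcal[\cbra{i}](\Ecal_i)\ge\eps$. Setting $X_i=\indicator[\Ecal_i]$ and $\tau=\tfrac12\sum_{i\in S}\pbra{\Pcal[\cbra{i}](\Ecal_i)+\Wcal[\cbra{i}](\Ecal_i)}$, I would take $\Ecal$ to be the event $\sum_{i\in S}X_i>\tau$.

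Because $\Pcal[S]$ and $\Wcal[S]$ are product distributions, the $X_i$ are independent $[0,1]$-valued under both $\Pcal$ and $\Wcal$. The midpoint choice of $\tau$ makes $\E_\Pcal\sbra{\sum X_i}\ge\tau+s\eps/2$ and $\E_\Wcal\sbra{\sum X_i}\le\tau-s\eps/2$, so two applications of Hoeffding's inequality yield $\Pcal(\Ecal^c)\le e^{-s\eps^2/2}$ and $\Wcal(\Ecal)\le e^{-s\eps^2/2}$. The third hypothesis, summed over $x\in\Ecal$, then gives $\Qcal(\Ecal)\le\Wcal(\Ecal)/\eta\le e^{-s\eps^2/2}/\eta$. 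Summing $\Wcal(x)\ge\eta\cdot\Qcal(x)$ over all $x$ also shows $\eta\le 1$, hence $e^{-s\eps^2/2}\le e^{-s\eps^2/2}/\eta$.

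Combining these through the event-based characterization of total variation distance in \Cref{fct:tvdist},
$$
\tvdist{\Pcal-\Qcal}\ge\Pcal(\Ecal)-\Qcal(\Ecal)\ge 1 - e^{-s\eps^2/2} - e^{-s\eps^2/2}/\eta \ge 1-2\cdot e^{-s\eps^2/2}/\eta,
$$
which is the claimed bound. No step looks genuinely hard; the only thing requiring a moment's thought is the symmetric choice of threshold $\tau$ at the midpoint of the two means, which balances the Hoeffding exponents so that both tails contribute the same factor $e^{-s\eps^2/2}$ and the final bound comes out clean.
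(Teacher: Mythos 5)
Your proof is correct. The paper does not prove this lemma --- it is cited from \cite{kane2024locality} --- but your argument is a clean, self-contained derivation along the standard lines one would expect for such a statement: build a distinguishing event by (i) invoking the event characterization of TV distance (\Cref{fct:tvdist}) coordinatewise, (ii) thresholding the sum of the indicator random variables at the midpoint of the two means and applying Hoeffding (\Cref{fct:hoeffding}) in both directions to get the symmetric exponent $e^{-s\eps^2/2}$, and (iii) transferring the bound from $\Wcal$ to $\Qcal$ via the pointwise domination $\Wcal(x)\ge\eta\Qcal(x)$. The observation that $\eta\le1$ (needed to absorb $\Pcal(\Ecal^c)$ into the $e^{-s\eps^2/2}/\eta$ term) is the one small step that could easily be overlooked, and you handled it correctly by summing the domination inequality over all $x$.
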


The second states that the distance between a distribution $\Dcal$ and a mixture of distributions must be large if the distance between $\Dcal$ and each individual distribution in the mixture is also large.

\begin{lemma}[{\cite[Section 4.1]{viola2020sampling}, \cite[Lemma 4.3]{kane2024locality}}]\label{lem:tvdist_after_conditioning}
Let $\Pcal_1,\ldots,\Pcal_t$ and $\Qcal$ be distributions.
Assume there exists a value $\eps$ such that $\tvdist{\Pcal_i-\Qcal}\ge1-\eps$ for all $i\in [t]$.
Then for any distribution $\Pcal$ as a convex combination of $\Pcal_1,\ldots,\Pcal_t$, we have
$$
\tvdist{\Pcal-\Qcal}\ge1-(t+1)\cdot\eps.
$$
\end{lemma}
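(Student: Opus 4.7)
The plan is to exploit the event-based characterization of total variation distance from \Cref{fct:tvdist} combined with a straightforward union bound. Each hypothesis $\tvdist{\Pcal_i - \Qcal} \geq 1-\eps$ supplies a witness event $E_i$ on which $\Pcal_i$ assigns much more mass than $\Qcal$ does, and the key idea is to collect all these witnesses into a single event $E := \bigcup_{i \in [t]} E_i$ that separates the mixture $\Pcal$ from $\Qcal$ simultaneously.

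Carrying this out, I would first invoke \Cref{fct:tvdist} to obtain events $E_i$ satisfying $\Pcal_i(E_i) - \Qcal(E_i) \geq 1 - \eps$ for each $i \in [t]$; since $\Pcal_i(E_i) \leq 1$, this immediately yields the two useful bounds $\Pcal_i(E_i) \geq 1-\eps$ and $\Qcal(E_i) \leq \eps$. Setting $E := \bigcup_{i \in [t]} E_i$, monotonicity gives $\Pcal_i(E) \geq \Pcal_i(E_i) \geq 1 - \eps$ for each $i$, so writing $\Pcal = \sum_{i \in [t]} \alpha_i \Pcal_i$ as a convex combination with $\sum_i \alpha_i = 1$ yields $\Pcal(E) \geq 1 - \eps$. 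Meanwhile, a union bound delivers $\Qcal(E) \leq \sum_{i \in [t]} \Qcal(E_i) \leq t\eps$. A final appeal to \Cref{fct:tvdist} concludes $\tvdist{\Pcal - \Qcal} \geq \Pcal(E) - \Qcal(E) \geq 1 - (t+1)\eps$, as desired.

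There is no genuine obstacle here — the entire argument is a three-line union bound once the event characterization of total variation distance is invoked. The only conceptual subtlety worth flagging is the decision to merge all witness events into one $E$ before comparing to $\Qcal$; attempting instead to average the individual bounds via convexity of $\tvdist{\cdot - \Qcal}$ in its first argument would only yield an \emph{upper} bound on $\tvdist{\Pcal - \Qcal}$ rather than the lower bound we need, so the union bound route is essentially forced.
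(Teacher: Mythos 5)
Your proof is correct, and it is the standard argument for this lemma: extract a witness event $E_i$ for each component distribution via the event characterization of total variation distance, take the union $E = \bigcup_i E_i$ so that every $\Pcal_i$ (and hence the mixture $\Pcal$) assigns mass at least $1-\eps$ to $E$, apply a union bound to show $\Qcal(E) \le t\eps$, and conclude. The paper does not give its own proof of this lemma (it cites Viola and Kane--Ostuni--Wu), but the argument you give matches the standard one used in those sources, and all the steps check out.
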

Occasionally, we will use a different distance measure between distributions $\Pcal$ and $\Qcal$; namely, the Kolmogorov distance
\[
    \max_t \big| \Pr_{x \sim \Pcal}\sbra{x > t} - \Pr_{y\sim \Qcal}\sbra{y > t}\big|.
\]

\paragraph*{Locality and Hypergraphs.}
Let $f\colon\bin^m\to\bin^n$. We say $f$ is a $d$-local function if each output bit $i\in [n]$ depends on at most $d$ output bits. If unspecified, we will always assume $n,m,d$ are positive integers.
 
We sometimes take an alternative view, using hypergraphs to model the dependency relations in $f$.
Let $G=(V,E)$ be an (undirected) hypergraph.
For each $i\in V$, we use $I_G(i)\subseteq V$ to denote the set of vertices that share an edge with $i$.
We say $G$ has maximum degree $d$ if $|I_G(i)|\le d$ holds for all $i\in V$.
Define $N_G(i)=\cbra{i'\in V\colon I_G(i)\cap I_G(i')\neq\emptyset}$ to be the neighborhood of $i$. We visualize the function $f\colon\bin^m\to\bin^n$ as a hypergraph on the output bits $[n]$ with an edge for each input bit containing all of the output
bits that depend on it.

\paragraph*{Concentration.}
We will need the following standard concentration bounds.
\begin{fact}[Hoeffding's Inequality]\label{fct:hoeffding}
Assume $X_1,\ldots,X_n$ are independent random variables such that $a\le X_i\le b$ holds for all $i\in[n]$.
Then for all $\delta\ge0$, we have
$$
\max\cbra{\Pr\sbra{\frac1n\sum_{i\in[n]}\pbra{X_i-\E[X_i]}\ge\delta},\Pr\sbra{\frac1n\sum_{i\in[n]}\pbra{X_i-\E[X_i]}\le-\delta}}
\le\exp\cbra{-\frac{2n\delta^2}{(b-a)^2}}.
$$
\end{fact}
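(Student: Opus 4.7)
The plan is to deploy the standard Chernoff--Cram\'er method via the moment generating function. First, I would observe that the lower tail reduces to the upper tail by applying the upper tail bound to the variables $-X_i$, which lie in $[-b,-a]$, an interval of the same length $b-a$. So it suffices to prove $\Pr\sbra{\sum_i(X_i-\E[X_i])\ge n\delta}\le \exp\cbra{-2n\delta^2/(b-a)^2}$.

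Step one is Markov's inequality applied to the exponential: for any $t>0$,
\[
\Pr\sbra{\sum_i (X_i-\E[X_i])\ge n\delta}\le e^{-tn\delta}\cdot\E\sbra{e^{t\sum_i(X_i-\E[X_i])}}=e^{-tn\delta}\cdot\prod_i\E\sbra{e^{t(X_i-\E[X_i])}},
\]
where the factorization uses independence. Step two is the key lemma (Hoeffding's lemma): for any mean-zero random variable $Y\in[a',b']$, one has $\E[e^{tY}]\le e^{t^2(b'-a')^2/8}$. Applying this to $Y=X_i-\E[X_i]$ (which lies in an interval of length $b-a$) bounds each factor by $e^{t^2(b-a)^2/8}$, yielding $\exp\cbra{-tn\delta+nt^2(b-a)^2/8}$. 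Optimizing by setting $t=4\delta/(b-a)^2$ gives exactly $\exp\cbra{-2n\delta^2/(b-a)^2}$, as desired.

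The main obstacle is establishing Hoeffding's lemma. My approach would be: by convexity of $y\mapsto e^{ty}$ on $[a',b']$, bound $e^{ty}\le\frac{b'-y}{b'-a'}e^{ta'}+\frac{y-a'}{b'-a'}e^{tb'}$. Taking expectation and using $\E[Y]=0$ yields $\E[e^{tY}]\le\frac{b'}{b'-a'}e^{ta'}-\frac{a'}{b'-a'}e^{tb'}$. Letting $p=-a'/(b'-a')\in[0,1]$ and $h=t(b'-a')$, this rewrites as $e^{L(h)}$ where $L(h)=-ph+\ln\pbra{1-p+pe^h}$. A direct computation shows $L(0)=L'(0)=0$ and $L''(h)=\frac{p(1-p)e^h}{(1-p+pe^h)^2}\le 1/4$ (the last inequality via AM-GM on the denominator). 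Taylor's theorem with remainder then gives $L(h)\le h^2/8=t^2(b'-a')^2/8$, completing the argument. I expect the bookkeeping of the $L''\le 1/4$ bound to be the most delicate step, but it is standard and reduces to a one-variable calculus fact.
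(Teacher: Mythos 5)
Your proof is correct and is the standard Chernoff--Cram\'er argument with Hoeffding's lemma; the paper simply cites this classical inequality as a known fact and provides no proof of its own, so there is nothing to compare against. For completeness: the reduction of the lower tail to the upper tail via $-X_i$, the exponential Markov step, the convexity bound leading to $L(h)=-ph+\ln(1-p+pe^h)$ with $p=-a'/(b'-a')\in[0,1]$ (valid since $a'\le 0\le b'$ for a mean-zero variable), the computation $L''(h)=p(1-p)e^h/(1-p+pe^h)^2\le 1/4$ by AM--GM, and the optimization $t=4\delta/(b-a)^2$ are all carried out correctly and yield exactly the stated bound.
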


\begin{fact}[Chernoff's Inequality]\label{fct:chernoff}
Assume $X_1,\ldots,X_n$ are independent random variables such that $X_i\in[0,1]$ holds for all $i\in[n]$.
Let $\mu=\sum_{i\in[n]}\E[X_i]$.
Then for all $\delta\in[0,1]$, we have
$$
\Pr\sbra{\sum_{i\in[n]}X_i\le(1-\delta)\mu}
\le\exp\cbra{-\frac{\delta^2\mu}2}.
$$
\end{fact}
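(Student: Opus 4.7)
The plan is to prove this standard multiplicative lower-tail Chernoff bound by the exponential-moment (Bernstein) method, which is entirely textbook. I will describe the three steps I would carry out.

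First, I would apply Markov's inequality to $e^{-tS}$, where $S = \sum_{i\in[n]} X_i$ and $t > 0$ is a parameter to be optimized. This gives
$$
\Pr\sbra{S \le (1-\delta)\mu} = \Pr\sbra{e^{-tS} \ge e^{-t(1-\delta)\mu}} \le e^{t(1-\delta)\mu}\cdot\E\sbra{e^{-tS}},
$$
and by independence of the $X_i$'s, $\E\sbra{e^{-tS}} = \prod_{i\in[n]} \E\sbra{e^{-tX_i}}$. For each factor I would use the convexity of $x\mapsto e^{-tx}$ on $[0,1]$, which gives the affine upper bound $e^{-tX_i} \le 1 - X_i(1-e^{-t})$, followed by $1+y \le e^y$, to obtain
$$
\E\sbra{e^{-tX_i}} \le 1 - \E[X_i](1-e^{-t}) \le \exp\cbra{-\E[X_i](1-e^{-t})}.
$$
Multiplying over $i$ and substituting back, the tail bound becomes $\exp\cbra{t(1-\delta)\mu - \mu(1-e^{-t})}$.

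Second, I would optimize by setting $t := -\ln(1-\delta) \ge 0$, so that $e^{-t} = 1-\delta$. The bound then simplifies to $\exp\cbra{\mu \cdot g(\delta)}$ where $g(\delta) := -(1-\delta)\ln(1-\delta) - \delta$. A short calculus check verifies $g(\delta) \le -\delta^2/2$ on $[0,1]$: namely, $g(0) = 0$, $g'(\delta) = \ln(1-\delta)$ so $g'(0)=0$, and $g''(\delta) = -1/(1-\delta) \le -1$ on $[0,1)$, so the second-order Taylor expansion around $0$ yields the desired inequality (and the $\delta = 1$ endpoint follows by continuity using the convention $0\ln 0 := 0$). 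Plugging this in finishes the proof.

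There is essentially no obstacle here, as this is a classical bound with a very short standard proof; indeed, the authors present it as a \emph{Fact} and likely intend simply to cite a reference such as a standard concentration-of-measure text. The only nontrivial analytic input is the inequality $g(\delta) \le -\delta^2/2$, which as noted above follows immediately from the second-order Taylor remainder.
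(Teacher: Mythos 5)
Your proof is correct and is the standard exponential-moment (Bernstein/Cram\'{e}r) derivation of the multiplicative lower-tail Chernoff bound. The paper states this as a \textit{Fact} without proof, as is customary for a textbook concentration inequality, so there is no paper proof to compare against; your derivation, including the convexity bound $e^{-tX_i} \le 1 - X_i(1-e^{-t})$, the choice $t = -\ln(1-\delta)$, and the Taylor-remainder verification that $-(1-\delta)\ln(1-\delta) - \delta \le -\delta^2/2$ on $[0,1]$, is exactly the argument one would find in a reference.
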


\paragraph*{Norms and Hypercontractivity.} 
For a function $g\colon \bin^n \to \bin$ and integer $q \ge 1$, define the norm $\|g\|_q = \pbra{\E_{x \sim \bin^n}[|g(x)|^q]}^{1/q}$. We will use the following \emph{hypercontractive inequality} in several locations to control higher norms.
\begin{lemma}[\cite{bonami1970etude}]\label{lem:hypercontractivity}
    For a degree-$d$ polynomial $p\colon \bin^n \to \mathbb{R}$ and an integer $q \ge 2$, we have
    \[
        \|p\|_q \le \sqrt{q-1}^d \cdot \|p\|_2.
    \]
\end{lemma}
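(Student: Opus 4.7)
My plan is to establish the lemma by induction on $n$, following the classical Bonami approach. First, I would note that the multilinear representation of $p$ and its degree are both unchanged under the affine substitution $x_i \mapsto (1-y_i)/2$, which pushes the uniform measure on $\bin^n$ to the uniform measure on $\binpm^n$. Hence it suffices to prove the inequality for multilinear polynomials $p\colon \binpm^n \to \Rbb$ of degree at most $d$.

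For the inductive step, I would decompose $p(y_1,\ldots,y_n) = A(y') + y_n B(y')$, where $y' = (y_1,\ldots,y_{n-1})$, $\deg A \le d$, and $\deg B \le d-1$. The crucial ingredient is the single-variable ``two-point'' inequality: for every $a,b \in \Rbb$ and $q \ge 2$,
\[
\E_{\epsilon \sim \Ucal(\binpm)}\sbra{|a + \epsilon b|^q} \le \pbra{a^2 + (q-1)b^2}^{q/2}.
\]
Applying this pointwise in $y'$ and taking expectation yields
\[
\|p\|_q^q \le \E_{y'}\sbra{\pbra{A(y')^2 + (q-1) B(y')^2}^{q/2}} = \vabs{A^2 + (q-1) B^2}_{q/2}^{q/2}.
\]
Because $q/2 \ge 1$, the triangle inequality for the $L^{q/2}$ norm gives $\|p\|_q^2 \le \|A\|_q^2 + (q-1)\|B\|_q^2$. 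Invoking the inductive hypothesis on $A$ (degree $\le d$) and $B$ (degree $\le d-1$) and using the orthogonal decomposition $\|p\|_2^2 = \|A\|_2^2 + \|B\|_2^2$ (which follows from $\E[y_n] = 0$ and the independence of $y_n$ from $y'$), the right-hand side is at most $(q-1)^d\pbra{\|A\|_2^2 + \|B\|_2^2} = (q-1)^d \|p\|_2^2$. Taking square roots yields the claim.

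The main obstacle is the two-point inequality itself. By homogeneity (rescaling so $a=1$ and $b=t$), it reduces to the one-variable statement
\[
\frac{(1+t)^q + (1-t)^q}{2} \le \pbra{1 + (q-1)t^2}^{q/2} \qquad \text{for all } t \in [0,1] \text{ and integer } q \ge 2.
\]
Both sides equal $1$ at $t=0$, have vanishing first derivative there, and agree to second order with second derivative $q(q-1)$; beyond second order I would verify the inequality termwise in the polynomial expansions (checking $\binom{q}{2k} \le \binom{q/2}{k}(q-1)^k$ for each relevant $k$, with the left side vanishing once $2k > q$) or by a routine single-variable calculus argument showing the difference of the two sides is non-decreasing on $[0,1]$. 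Once this base estimate is in hand, the rest of the proof is essentially bookkeeping.
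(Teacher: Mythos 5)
The paper does not prove this lemma; it is cited directly from Bonami \cite{bonami1970etude}, so there is no proof of record to compare against. Your inductive framework is the standard Bonami--Beckner argument, and the bookkeeping is correct: the Fourier decomposition $p = A + y_n B$, the pointwise application of the two-point inequality followed by the $L^{q/2}$ triangle inequality (valid since $q/2 \ge 1$), the orthogonal Pythagorean identity $\|p\|_2^2 = \|A\|_2^2 + \|B\|_2^2$, and the application of the inductive hypothesis at degrees $d$ and $d-1$ all fit together exactly as they should.

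The gap is in your verification of the two-point inequality, which is the genuinely nontrivial ingredient and which you correctly flag as ``the main obstacle.'' Two issues. First, the reduction to $t \in [0,1]$ (i.e., $|b| \le |a|$) is not automatic; you need the small extra observation that $\E_\epsilon|a + \epsilon b|^q$ is symmetric in $|a|, |b|$ while the right-hand side $(a^2 + (q-1)b^2)^{q/2}$ only increases when the coefficient $q-1 \ge 1$ is moved to the larger of the two, so the case $|b| > |a|$ follows from the case $|b| \le |a|$. Second and more seriously, the termwise comparison $\binom{q}{2k} \le \binom{q/2}{k}(q-1)^k$ only makes sense when $q$ is even, because only then is $(1 + (q-1)t^2)^{q/2}$ a polynomial with nonnegative coefficients. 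For odd $q$, the right side is $(1+(q-1)t^2)^{(2r+1)/2}$ whose binomial series has alternating-sign tail terms and, worse, does not converge on the whole interval once $(q-1)t^2 \ge 1$ --- i.e., already for $t \ge 1/\sqrt{q-1}$, which is well inside $[0,1]$. The fallback ``routine single-variable calculus'' claim is not substantiated: $F(t) = (1+(q-1)t^2)^{q/2} - \frac{(1+t)^q + (1-t)^q}{2}$ has $F(0) = F'(0) = F''(0) = 0$, so showing $F \ge 0$ requires actual work (for odd $q$ a clean route is to square both sides and compare the resulting polynomials, which does go through termwise in the examples I checked but is not the argument you wrote). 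Since the lemma in the paper is invoked only for integer $q$, you do not need the full real-$q$ Bonami result, but you do need to handle odd integers honestly; as written, that case is not covered.
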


\paragraph*{Binomials and Entropy.}
Let $\Hcal(x)=x\cdot\log\pbra{\frac1x}+(1-x)\cdot\log\pbra{\frac1{1-x}}$ be the binary entropy function.
We will frequently use the following estimates regarding binomial coefficients and the entropy function.

\begin{fact}[{See e.g., \cite[Lemma 17.5.1]{cover2006elements}}]\label{fct:individual_binom}
For $1\le k\le n-1$, we have
$$
\frac{2^{n\cdot\Hcal(k/n)}}{\sqrt{8k(1-k/n)}}\le\binom nk\le\frac{2^{n\cdot\Hcal(k/n)}}{\sqrt{\pi k(1-k/n)}}.
$$
\end{fact}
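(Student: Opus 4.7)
The plan is to derive both inequalities from Stirling's formula in its quantitative form
$\sqrt{2\pi n}(n/e)^n \le n! \le \sqrt{2\pi n}(n/e)^n \cdot e^{1/(12n)}$.
Expanding $\binom{n}{k} = n!/(k!(n-k)!)$ and using the algebraic identity $n^n/(k^k(n-k)^{n-k}) = 2^{n\Hcal(k/n)}$ to extract the entropy factor, both directions reduce to bounding a multiplicative prefactor that, up to Stirling error terms, equals $\frac{1}{\sqrt{2\pi}}\sqrt{n/(k(n-k))}$.

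For the upper bound, using Stirling's upper bound on $n!$ together with its lower bounds on $k!$ and $(n-k)!$ produces a prefactor of $\frac{e^{1/(12n)}}{\sqrt{2\pi}}\sqrt{n/(k(n-k))}$. Since $e^{1/(12n)} \le e^{1/12} < \sqrt{2}$ for every $n \ge 1$, this prefactor is at most $\frac{1}{\sqrt{\pi}}\sqrt{n/(k(n-k))} = 1/\sqrt{\pi k(1-k/n)}$, giving the desired upper bound. For the lower bound, the analogous argument yields a prefactor of $\frac{1}{\sqrt{2\pi}} \cdot e^{-1/(12k)-1/(12(n-k))} \cdot \sqrt{n/(k(n-k))}$, so it suffices to verify that $e^{1/(12k)+1/(12(n-k))} \le 2/\sqrt{\pi}$. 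When $k, n-k \ge 2$ the exponent is at most $1/12$, and the inequality $e^{1/12} < 2/\sqrt{\pi}$ follows from a direct numerical estimate.

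The main obstacle is not conceptual but combinatorial bookkeeping at the boundary cases $k \in \{1, n-1\}$, where the Stirling error factor $e^{1/(12 \cdot 1)}$ alone can already be too large to fit under $2/\sqrt{\pi}$ when the other coordinate is also small. Here the target bound must be verified directly: for $k=1$, using the identity $2^{n\Hcal(1/n)} = n \cdot (n/(n-1))^{n-1}$, the claim reduces to $(n/(n-1))^{n-1} \le \sqrt{8(n-1)/n}$, which for large $n$ is $e \le 2\sqrt{2}$ (strict) by the classical bound $(1+1/(n-1))^{n-1} \le e$, and is verified by hand at the small residual cases (with equality holding at $n=2$). The case $k = n-1$ follows immediately by the symmetry $\binom{n}{k} = \binom{n}{n-k}$ in both inequalities. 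Since this fact is a well-known estimate from information theory (as the citation to Cover-Thomas indicates), I would not belabor the details beyond noting these boundary computations.
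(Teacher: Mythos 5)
The paper states this as a known fact and delegates the proof to the citation (Cover--Thomas, Lemma 17.5.1), so there is no internal proof to compare against. Your Stirling-based derivation is the standard route and is essentially correct: the algebraic identity $n^n/(k^k(n-k)^{n-k}) = 2^{n\Hcal(k/n)}$ is right, the prefactor $\frac{1}{\sqrt{2\pi}}\sqrt{n/(k(n-k))} = \frac{1}{\sqrt{2\pi k(1-k/n)}}$ is right, and the bookkeeping of Stirling error terms gives precisely the two comparisons you state: $e^{1/(12n)} \le \sqrt{2}$ for the upper bound, and $e^{1/(12k)+1/(12(n-k))} \le 2/\sqrt{\pi}$ for the lower bound (with $2/\sqrt{\pi} \approx 1.128$).

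One cleanup note on your boundary analysis for the lower bound. Your Stirling argument already succeeds whenever $1/(12k)+1/(12(n-k)) \le \ln(2/\sqrt{\pi}) \approx 0.1208$, which for $k=1$ means $n \ge 4$. So the only genuinely exceptional pairs are $(n,k) \in \{(2,1),(3,1),(3,2)\}$, and those are a three-line direct check (equality at $(2,1)$; $9/4 \le \sqrt{16/3}$ at $(3,1)$; symmetry for $(3,2)$). Your sentence ``which for large $n$ is $e \le 2\sqrt{2}$ by the classical bound $(1+1/(n-1))^{n-1} \le e$'' reads as if you were trying to prove the reduced inequality $(n/(n-1))^{n-1} \le \sqrt{8(n-1)/n}$ for \emph{all} $n$ by this route; that specific chain does not close on its own for moderate $n$ (the right side sits below $e$ until roughly $n \ge 14$). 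It is not a real gap here because Stirling already covers $n \ge 4$, but the exposition would be clearer if you simply said: Stirling handles everything except $n \in \{2,3\}$, and those are checked by hand.
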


\begin{fact}[{See e.g., \cite{wiki:Binary_entropy_function}}]\label{fct:entropy}
For any $x\in[-1,1]$, we have
$$
1-x^2\le\Hcal\pbra{\frac{1+x}2}=1-\frac1{2\ln(2)}\sum_{n=1}^{+\infty}\frac{x^{2n}}{n\cdot(2n-1)}\le1-\frac{x^2}{2\ln(2)}.
$$
\end{fact}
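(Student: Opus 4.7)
The plan is to reduce everything to a Taylor expansion of a single one-variable function and then extract both bounds from the resulting power series. Set $p=(1+x)/2$, so $1-p=(1-x)/2$, and rewrite
\[
\Hcal\pbra{\tfrac{1+x}{2}} = 1 - \frac{h(x)}{2\ln 2}, \qquad h(x) := (1+x)\ln(1+x)+(1-x)\ln(1-x),
\]
using $\log_2(\cdot)=\ln(\cdot)/\ln 2$ and $-p\log_2 p - (1-p)\log_2(1-p) = 1 - [p\ln p + (1-p)\ln(1-p)]/\ln 2$.

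Next I would compute the Taylor series of $h$. Direct differentiation gives $h'(x)=\ln(1+x)-\ln(1-x)$ and $h''(x)=\frac{1}{1+x}+\frac{1}{1-x}=\frac{2}{1-x^2}=2\sum_{n\ge 0} x^{2n}$, valid for $|x|<1$. Since $h(0)=h'(0)=0$, integrating the series termwise twice yields
\[
h(x) \;=\; \sum_{m=1}^{\infty} \frac{x^{2m}}{m(2m-1)},
\]
and the series converges absolutely on $[-1,1]$ (by comparison with $\sum 1/m^2$), also handling the boundary $x=\pm 1$. Plugging back into the rewrite establishes the middle equality.

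The upper bound is then immediate: every term of the series is non-negative, so isolating the $m=1$ contribution gives $h(x)\ge x^2$, which rearranges to $\Hcal((1+x)/2)\le 1-x^2/(2\ln 2)$. For the lower bound, I need $h(x)\le (2\ln 2)\, x^2$ on $[-1,1]$. Writing $h(x)/x^2 = \sum_{m\ge 1} x^{2m-2}/[m(2m-1)]$, each summand is non-decreasing in $x^2\in[0,1]$, so the maximum over $[-1,1]$ is attained at $x=\pm 1$. At that point the sum equals $\sum_{m\ge 1}\frac{1}{m(2m-1)}$, which by the partial-fraction identity $\frac{1}{m(2m-1)} = \frac{2}{2m-1}-\frac{1}{m}$ telescopes against the Mercator series to
\[
\sum_{m=1}^{\infty}\pbra{\frac{2}{2m-1}-\frac{2}{2m}} \;=\; 2\sum_{k=1}^{\infty}\frac{(-1)^{k+1}}{k} \;=\; 2\ln 2.
\]
Hence $h(x)\le (2\ln 2)\, x^2$ on $[-1,1]$, which is exactly the lower bound $1-x^2 \le \Hcal((1+x)/2)$.

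The only mildly delicate point is the lower bound at the boundary $x=\pm 1$, where the series has to be summed in closed form; but this reduces to the partial-fraction trick above. Everything else is routine calculus, so I do not anticipate any real obstacle.
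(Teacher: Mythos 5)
Your proof is correct, and the paper itself offers none (the statement is a \textbf{Fact} cited to a reference), so you have supplied a self-contained derivation the paper omits. Every step checks out: the algebraic identity $\Hcal\bigl(\tfrac{1+x}{2}\bigr)=1-h(x)/(2\ln 2)$ with $h(x)=(1+x)\ln(1+x)+(1-x)\ln(1-x)$, the twice-integrated geometric series giving $h(x)=\sum_{m\ge 1}x^{2m}/\bigl(m(2m-1)\bigr)$, the upper bound by isolating the $m=1$ term, and the lower bound by monotonicity of each summand in $x^2$ together with the partial-fraction/Mercator evaluation $\sum_{m\ge 1}\tfrac{1}{m(2m-1)}=2\ln 2$. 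The only point worth making explicit, which you already flag, is that at $x=\pm 1$ one uses $h(\pm 1)=2\ln 2$ directly (or Abel's theorem for the boundary of the power series), and the terms are $O(1/m^2)$ so absolute convergence on the closed interval is clear. No gaps.
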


\begin{fact}[{See e.g., \cite{wiki:Binomial_coefficient,binomial_sum_mo}}]\label{fct:binom_tail_asym}
For $1\le k\le n/2$, we have
$$
\sum_{i=0}^k\binom ni\le\min\cbra{2^{n\cdot\Hcal(k/n)},\binom nk\cdot\frac{n-k+1}{n-2k+1}}.
$$
\end{fact}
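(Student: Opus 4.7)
The plan is to prove the two upper bounds in the minimum separately, both by elementary manipulations.

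For the entropy bound $\sum_{i=0}^k \binom{n}{i} \le 2^{n\Hcal(k/n)}$, I would use a probabilistic reweighting trick. Set $p = k/n$, so $p \le 1/2$ and hence $p \le 1-p$. This makes the weight $p^i(1-p)^{n-i}$ non-increasing in $i$, so for every $i \in \{0,\ldots,k\}$ we have $p^i(1-p)^{n-i} \ge p^k(1-p)^{n-k}$. Combining this with the trivial bound
$$1 = \sum_{i=0}^n \binom{n}{i} p^i (1-p)^{n-i} \ge \sum_{i=0}^k \binom{n}{i} p^i (1-p)^{n-i} \ge p^k(1-p)^{n-k} \sum_{i=0}^k \binom{n}{i}$$
and rearranging gives $\sum_{i=0}^k \binom{n}{i} \le p^{-k}(1-p)^{-(n-k)} = (n/k)^k (n/(n-k))^{n-k} = 2^{n \Hcal(k/n)}$.

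For the second bound I would compare each summand to the largest term $\binom{n}{k}$ via a geometric series. The ratio identity $\binom{n}{i+1}/\binom{n}{i} = (n-i)/(i+1)$ shows that for $0 \le i \le k-1$ one has $\binom{n}{i}/\binom{n}{i+1} = (i+1)/(n-i) \le k/(n-k+1)$. Iterating yields $\binom{n}{k-j} \le \binom{n}{k} \cdot \pbra{k/(n-k+1)}^j$ for every $0 \le j \le k$, and summing (which converges since $k \le n/2$ forces $k/(n-k+1) < 1$) gives
$$\sum_{i=0}^k \binom{n}{i} \le \binom{n}{k} \sum_{j=0}^k \pbra{\frac{k}{n-k+1}}^j \le \binom{n}{k} \cdot \frac{1}{1 - k/(n-k+1)} = \binom{n}{k} \cdot \frac{n-k+1}{n-2k+1}.$$

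I do not expect a genuine obstacle here; the only things to keep track of are the correct direction of monotonicity in the first argument (which requires $p \le 1-p$, and hence the hypothesis $k \le n/2$) and the convergence of the geometric series in the second (which also uses $k \le n/2$ to ensure $n-2k+1 \ge 1$, handling the boundary case $k = n/2$ when $n$ is even).
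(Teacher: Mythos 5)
Your proof is correct. The paper states this as a \Cref{fct:binom_tail_asym} with citations to external sources and provides no proof of its own, so there is nothing internal to compare against; your two arguments — the probabilistic reweighting via $1=\sum_i \binom{n}{i}p^i(1-p)^{n-i}$ for the entropy bound, and the geometric-series comparison with ratio at most $k/(n-k+1)<1$ for the second bound — are the standard derivations and match what the cited references contain. Both uses of the hypothesis $k\le n/2$ (monotonicity of $p^i(1-p)^{n-i}$ and strict inequality $k<n-k+1$) are correctly identified, and the algebra checks out.
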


\section{Classification of Locally Sampleable Distributions}\label{sec:nc0}

In this section, we will prove a general classification result for uniform distributions with symmetric support that can be sampled by local functions.
Let $\Psi\subseteq\cbra{0,1,\ldots,n}$ be a non-empty set. We define $\Dcal_\Psi$ to be the uniform distribution over $x\in\bin^n$ conditioned on $|x|\in\Psi$.

We will show that if the output distribution of a local function is close to $\Dcal_\Psi$, then it is in fact close to one of the following six specific symmetric distributions: $\Dzeros$, $\Dones$, $\Dzeroones$, $\Deven$, $\Dodd$, and $\Dall$, where
\begin{itemize}
\item $\Dzeros=\Dcal_{\cbra{0}}$, $\Dones=\Dcal_{\cbra{n}}$, and $\Dzeroones=\Dcal_{\cbra{0,n}}$,
\item $\Deven=\Dcal_{\cbra{\text{even numbers in }\cbra{0,1,\ldots,n}}}$ and $\Dodd=\Dcal_{\cbra{\text{odd numbers in }\cbra{0,1,\ldots,n}}}$,
\item $\Dall=\Dcal_{\cbra{0,1,\ldots,n}}$.
\end{itemize}

\begin{theorem}\label{thm:nc0}
Let $d \in \Nbb$ and $f\colon\bin^m\to\bin^n$ be a $d$-local function with $n$ sufficiently large (in terms of $d$).
Assume $\tvdist{f(\Ucal^m)-\Dcal_\Psi}\le\eps$ for some $\Psi\subseteq\cbra{0,1,\ldots,n}$.
Then
$$
    \tvdist{f(\Ucal^m)-\Dcal}\le O(\eps)
$$
for some $\Dcal\in\cbra{\Dzeros,\Dones,\Dzeroones,\Deven,\Dodd,\Dall}$.
\end{theorem}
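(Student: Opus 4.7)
The plan is to partition the analysis by the location of the element $s \in \Psi$ closest to $n/2$, separating into a tail case $|s - n/2| > n^{2/3}$ and a bulk case $|s - n/2| \le n^{2/3}$. In the tail case, the rapid decay of binomial coefficients (\Cref{fct:binom_tail_asym}) implies that a constant fraction of the mass of $\Dcal_\Psi$ is concentrated on $O(n^{1/3})$ Hamming slices. I would then invoke the results of \cite{filmus2023sampling, kane2024locality} that every single-slice distribution $\Dcal_{\cbra{k}}$ with $k \notin \cbra{0, n}$ is $(1 - O_d(n^{-1/2}))$-far from any $d$-locally sampleable uniform symmetric distribution, and combine them via a union-bound-style conditioning (\Cref{lem:tvdist_after_conditioning}) across the relevant slices to force $\Psi \subseteq \cbra{0, n}$, immediately yielding $\Dcal_\Psi \in \cbra{\Dzeros, \Dones, \Dzeroones}$.

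The bulk case is the substantive one. Here $\Dcal_\Psi$ is locally close to uniform, so I would apply the graph-elimination reduction of \cite{kane2024locality}: after conditioning on $o(n)$ input bits, the remaining function is a mixture of subfunctions, each exhibiting $\Omega_d(n)$ mutually non-connected, constant-sized neighborhoods whose outputs are therefore independent. I would classify each subcube as Type-II, where many neighborhood marginals are noticeably biased from uniform (hence contributing nearly equally to the distance against $\Dcal_\Psi$ and against any candidate special distribution, making this mass transferable into the final $O(\eps)$ bound), or as Type-I, where the output bits are approximately $k$-wise independent with no input bit influencing too many outputs. For Type-I cubes I would combine hypercontractivity (\Cref{lem:hypercontractivity}) with standard $k$-wise-independence-fools-threshold results and the structure of low-degree $\Fbb_2$ polynomials to conclude that the resulting weight distribution is Kolmogorov-close to a binomial respecting parity. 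Pairing this with a continuity estimate of the form $|\Pr\sbra{|f(\Ucal^m)| = x} - \Pr\sbra{|f(\Ucal^m)| = x + \Delta}| = O_d(|\Delta|/n)$ for even $\Delta \in \Zbb$ yields a local limit theorem: $f(\Ucal^m)$ is close to some mixture $\Mcal$ of $\Deven$ and $\Dodd$. Since $\Dcal_\Psi$ assigns every string either zero mass or one fixed nonzero mass, a triangle-inequality argument forces $\Mcal$ (and hence $f(\Ucal^m)$) to be close to one of $\Deven$, $\Dodd$, or $\Dall$.

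The main obstacle is removing the dependence on $d$ from the implicit constant: a naive application of the continuity bound over $\Omega_d(\sqrt{n})$ paired weights only yields $O_d(\eps)$. My plan is to first prove, for any prescribed constant $\delta > 0$, the qualitative conclusion $\tvdist{f(\Ucal^m) - \Dcal} \le O(\tvdist{f(\Ucal^m) - \Dcal_\Psi}) + \delta$ valid once $n$ is large in terms of $\delta$ and $d$. I would then fix $\delta$ to a small absolute constant and handle the residual regime $\tvdist{f(\Ucal^m) - \Dcal} \le \delta$ separately. In that regime two strong structural features become available: $f(\Ucal^m)$ must match many low-order moments with the uniform distribution, and $\supp{f(\Ucal^m)} \subseteq \supp{\Dcal}$. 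Together these force any mass missing from a Hamming slice $x \in \Psi$ to redistribute to more extreme weights rather than toward the center. A refined pairing argument restricted to an $O(\sqrt{n})$-window around $n/2$, where the mass that $\Dcal_\Psi$ places strictly exceeds the continuity error, then turns this deficit into a lower bound on $\tvdist{f(\Ucal^m) - \Dcal_\Psi}$ comparable to $\tvdist{f(\Ucal^m) - \Dcal}$, closing the gap with a $d$-independent constant.
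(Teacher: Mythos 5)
Your proposal is correct and tracks the paper's own argument essentially step for step: the tail/central split at $|s - n/2| \lessgtr n^{2/3}$, the union-bound reduction to single slices in the tail, the graph-elimination decomposition into Type-I/Type-II subcubes, the $k$-wise-independence plus $\Fbb_2$-polynomial Kolmogorov bound, the continuity estimate and resulting local limit theorem producing a mixture of $\Deven$ and $\Dodd$, and the final residual-regime argument using moment matching and support containment to run a $d$-independent pairing inside an $O(\sqrt{n})$ window. There is no meaningful divergence from the paper's route.
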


\begin{remark}\label{rmk:nc0}
We show the qualitative tightness of \Cref{thm:nc0} from different angles.
\begin{itemize}
\item The six special distributions admit local sampling schemes: $\Dzeros$ and $\Dones$ can be sampled by a $0$-local function; $\Dall$ and $\Dzeroones$ can be sampled by a $1$-local function; $\Deven$ and $\Dodd$ can be sampled by a $2$-local function.
\item The lower bound on $n$ is necessarily depending on $d$. 
If $n\le d$, then one can sample the uniform distribution over any support $S\subseteq\bin^n$ of size $|S|$ dividing $2^d$. This can be achieved by fixing a regular mapping $\pi\colon\bin^d\to S$ and using the $d$ input bits to compute it.
Also if $n$ is a power of two and $d=\log(n)$, then one can directly sample a uniform string of Hamming weight one, which is uniform symmetric. 
\item The unspecified distance assumption $\eps$ cannot be replaced by a constant, i.e., local functions are indeed able to \emph{arbitrarily} closely approximate uniform symmetric distributions.

Starting from $\Deven$, we randomly flip the first $c\in[n]$ output bits with probability $1/4$. 
This distribution is $4$-local since both $\Deven$ and the $1/4$-biased flipping are $2$-local.
It is easy to see that this distribution is at distance $2^{-\Theta(c)}$ to $\Dall$ and $\Deven$, and is much farther from other uniform symmetric distributions.
This shows that $\eps$ can be arbitrarily small even when $d$ is a fixed constant.

\item The distance blowup from $\tvdist{f(\Ucal^m)-\Dcal_\Psi}$ to $\tvdist{f(\Ucal^m)-\Dcal}$ is qualitatively necessary, i.e., the local distribution can be closer to a uniform symmetric distribution than to one of the six special ones. In particular, we identify the following example which rules out a bound of the form $(1+o(1))\eps$.

Consider the distribution $\Dcal$ that with probability $3/4$ is $\Deven$ and with probability $1/4$ is $\Dodd$. Observe $\Dcal$ can be sampled by a 3-local function via a similar strategy to that for $\Deven$. The uniform distribution over $n$-bit strings of Hamming weight $0, 1, 2,$ or $4 \bmod 6$ is approximately $(1/6)$-close to $\Dcal$; however, all six special distributions are at least (1/4)-far from $\Dcal$. Thus, the implicit constant in \Cref{thm:nc0} must be at least $3/2$.
\end{itemize}
\end{remark}

To prove \Cref{thm:nc0}, we will classify $\Psi$ into two cases and handle them separately.
To this end, define $\iota(\Psi)\in \Psi$ to be the Hamming weight in $\Psi$ that is closest to the middle: 
$$
\iota(\Psi)=\argmin_{s\in \Psi}\abs{s-n/2}
$$ 
where we break ties arbitrarily.
Intuitively, since $\iota(\Psi)$ is the dominating Hamming weight under the binomial distribution, $\Dcal_\Psi$ is close to either $\Dcal_{\cbra{\iota(\Psi)}}$ or $\frac12\pbra{\Dcal_{\cbra{\iota(\Psi)}}+\Dcal_{\cbra{n-\iota(\Psi)}}}$.
Based on this intuition, we divide $\Psi$ into the following cases:

\begin{itemize}
\item \textsc{Tail Regime}:
$\iota(\Psi) < n/2 - n^{2/3}$ or $\iota(\Psi) > n/2 + n^{2/3}$.
\item \textsc{Central Regime}:
$n/2 - n^{2/3}\le \iota(\Psi)\le n/2 + n^{2/3}$.
\end{itemize}

In the tail regime, we wish to show that $\Dcal_\Psi$ is essentially $\Dzeros$, $\Dones$, or $\Dzeroones$. 
The following result will be proved in \Cref{sec:middle_regime}.
\begin{theorem}\label{thm:middle_regime}
    Let $f \colon \bin^m \to \bin^n$ be a $d$-local function with $n$ sufficiently large (in terms of $d$). If  $\tvdist{f(\Ucal^m) - \Dcal_\Psi} \le 1/3$ for some $\Psi$ in the tail regime, then $\Dcal_\Psi \in \cbra{\Dzeros, \Dones, \Dzeroones}$.
\end{theorem}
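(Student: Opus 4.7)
The plan is to prove the contrapositive: assuming $\Psi$ is in the tail regime but $\Dcal_\Psi\notin\cbra{\Dzeros,\Dones,\Dzeroones}$, I show $\tvdist{f(\Ucal^m)-\Dcal_\Psi}>1/3$. Since those three exceptional distributions correspond exactly to the nonempty subsets $\Psi\subseteq\cbra{0,n}$, the assumption forces $\Psi$ to contain some $k^\star\in\cbra{1,\ldots,n-1}$.

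From here I would execute the three-step recipe sketched in the overview. First, apply \Cref{clm:small_support} to obtain a core $W\subseteq\Psi$ with $|W|=O(n^{1/3})$ capturing at least half the mass of $\Dcal_\Psi$. Removing $\cbra{0,n}$ from $W$, whose combined mass under $\Dcal_\Psi$ is at most $2/\binom{n}{k^\star}\le 2/n$, gives $W'\subseteq\cbra{1,\ldots,n-1}$ with $\mu:=\Dcal_\Psi(|x|\in W')\ge 1/2-o(1)$; this forces $W'$ to be nonempty for $n$ large. Second, let $\Dcal_W$ be the distribution obtained by conditioning $\Dcal_\Psi$ on $|x|\in W'$, which is then a convex combination of at most $|W'|\le O(n^{1/3})$ slice distributions $\Dcal_{\cbra{k}}$ with $k\in\cbra{1,\ldots,n-1}$. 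The cited single-slice lower bounds of \cite{filmus2023sampling,kane2024locality} give $\tvdist{f(\Ucal^m)-\Dcal_{\cbra{k}}}\ge 1-O_d(n^{-1/2})$ for each such $k$, so \Cref{lem:tvdist_after_conditioning} combines them into
\[
    \tvdist{\Dcal_W-f(\Ucal^m)}\ge 1-O(n^{1/3})\cdot O_d(n^{-1/2})=1-O_d(n^{-1/6}).
\]
Third, lift this bound back to $\Dcal_\Psi$ by taking a witness event $S^\star\subseteq\supp{\Dcal_W}$ for the previous inequality. Since $\Dcal_\Psi(S^\star)=\mu\cdot\Dcal_W(S^\star)$ and $\Dcal_W(S^\star)\le 1$, a short rearrangement gives
\[
    \tvdist{f(\Ucal^m)-\Dcal_\Psi}\ge\Dcal_\Psi(S^\star)-f(\Ucal^m)(S^\star)\ge \pbra{\Dcal_W(S^\star)-f(\Ucal^m)(S^\star)}-(1-\mu)\ge\mu-O_d(n^{-1/6}),
\]
which exceeds $1/3$ for $n$ sufficiently large in terms of $d$, yielding the desired contradiction.

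The principal thing to manage is the three-way tradeoff between $\mu$, the blowup factor $|W'|+1$ from \Cref{lem:tvdist_after_conditioning}, and the $O_d(n^{-1/2})$ single-slice bound. The tail regime assumption is precisely what makes all three compatible: the rapid binomial decay allows a core of size $O(n^{1/3})\ll n^{1/2}$ to already carry at least half the mass of $\Dcal_\Psi$, so after the union-bound loss we preserve nearly all of $\mu$, which comfortably exceeds $1/3$. Beyond carefully bookkeeping these constants, the proof should reduce to a clean assembly of \Cref{clm:small_support}, the prior single-slice lower bounds, and \Cref{lem:tvdist_after_conditioning}.
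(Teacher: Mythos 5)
Your proposal is correct and follows essentially the same route as the paper: apply \Cref{clm:small_support} to extract a small core, prune out $\{0,n\}$ at a cost of at most $2/n$ in mass (using $|\supp{\Dcal_\Psi}|\ge\binom{n}{k^\star}\ge n$), invoke the single-slice lower bound from \cite{filmus2023sampling,kane2024locality} together with \Cref{lem:tvdist_after_conditioning}, and lift back to $\Dcal_\Psi$. The only cosmetic difference is in the lifting step: you argue via an explicit witness event $S^\star$ with $\Dcal_\Psi(S^\star)=\mu\cdot\Dcal_W(S^\star)$, whereas the paper simply strings together two triangle inequalities ($\tvdist{f(\Ucal^m)-\Dcal_\Psi}\ge\tvdist{f(\Ucal^m)-\Dcal_{\Psi^\dag}}-\tvdist{\Dcal_{\Psi^\dag}-\Dcal_{\bar\Psi}}-\tvdist{\Dcal_{\bar\Psi}-\Dcal_\Psi}$); both yield the identical numeric bound $\ge 1/2-2/n-O_d(n^{-1/6})>1/3$. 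Your witness-event calculation is valid (the maximizing event is supported on $\supp{\Dcal_W}$, so the identity $\Dcal_\Psi(S^\star)=\mu\cdot\Dcal_W(S^\star)$ holds), so this is a faithful reconstruction of the argument.
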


In the central regime, we aim to show that $\Dcal_\Psi$ is essentially $\Deven$, $\Dodd$, or $\Dall$ by \Cref{thm:central_regime}, which will be proved in \Cref{sec:central_regime}.

\begin{theorem}\label{thm:central_regime}
    Let $f \colon \bin^m \to \bin^n$ be a $d$-local function with $n$ sufficiently large (in terms of $d$). If $\tvdist{f(\Ucal^m) - \Dcal_\Psi} \le \eps$ for some $\Psi$ in the central regime, then $\tvdist{f(\Ucal^m) - \Dcal} \le O(\eps)$ for some $\Dcal \in \cbra{\Deven, \Dodd, \Dall}$.
\end{theorem}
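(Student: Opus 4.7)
The plan is to follow the strategy sketched in \Cref{sec:overview_nc0}. First I would apply the graph elimination technique of \cite[Proposition 5.20]{kane2024locality} to the dependency hypergraph of $f$, conditioning on $o(n)$ input bits to express $f(\Ucal^m)$ as a convex combination of pieces, each sampled by a local function whose hypergraph has $\Omega_d(n)$ non-connected neighborhoods of size $O_d(1)$. Since $\iota(\Psi)$ lies within $n^{2/3}$ of $n/2$, the marginal of $\Dcal_\Psi$ on any $O_d(1)$-size coordinate subset is close to the uniform distribution, so on each such piece we can meaningfully compare the marginal distributions on the independent neighborhoods against uniform coordinates.

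Next I would carry out a Type-I/Type-II subcube decomposition via the inductive tree argument (\Cref{tree lemma}). Type-II subcubes are those where many independent neighborhoods have marginals noticeably bounded away from uniform; by \Cref{lem:tvdist_after_product} their distribution is already far from $\Dcal_\Psi$, and since each of $\Deven,\Dodd,\Dall$ is likewise approximately uniform on constant-size coordinate sets in the central regime, the contribution of Type-II mass to $\tvdist{f(\Ucal^m)-\Dcal_\Psi}$ essentially matches the corresponding contribution to the distance from the nearest special distribution. Type-I subcubes have approximately uniform neighborhood marginals with bounded per-input influence, which is the regime in which $k$-wise independence bounds on threshold functions (\Cref{thm:kwise_ind}), structural results for low-degree $\Fbb_2$-polynomials (\Cref{parity randomization theorem}), and anticoncentration (\Cref{continuity prop}, \Cref{thm:comp_llt_C=1_special}) all become available.

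For the Type-I mass, I would combine the continuity inequality $\big||f(\Ucal^m)|(x)-|f(\Ucal^m)|(x+\Delta)\big|=O_d(|\Delta|/n)$ for even $\Delta$ with the moment-matching identity (\Cref{moment matching lemma}) to control the Kolmogorov distance between $|f(\Ucal^m)|$ and the parity-restricted binomial distributions (\Cref{k indep lem}, \Cref{Kol prop}). Integrating these yields a local limit theorem (\Cref{LLT Theorem}) asserting that $f(\Ucal^m)$ is close to a mixture $\Mcal=\alpha\Deven+(1-\alpha)\Dodd$. The triangle inequality then forces $\Dcal_\Psi$ to be close to $\Mcal$; because $\Dcal_\Psi$ is two-valued (constant mass on its support, zero elsewhere), any nearby mixture of $\Deven$ and $\Dodd$ must itself be close to one of $\Deven,\Dodd,\Dall$ (\Cref{close special cor}), giving $\tvdist{f(\Ucal^m)-\Dcal}\le O(\eps)+\delta$ for any constant $\delta>0$ once $n$ is large enough in terms of $\delta$ and $d$.

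The main obstacle—and the reason an earlier version of the paper carried a $d$-dependent implicit constant—is closing the last gap: removing the additive $\delta$ when $\eps$ itself is smaller than the threshold at which the local limit theorem becomes quantitatively meaningful. Here I would exploit that in this regime $f(\Ucal^m)$ matches many low-order moments with $\Ucal^n$ (\Cref{moment matching lemma}) and, when $\Dcal\in\cbra{\Deven,\Dodd}$, that $\supp{f(\Ucal^m)}\subseteq\supp{\Dcal}$ up to small loss (\Cref{constant parity lem}). These constraints restrict $|f(\Ucal^m)|$ enough that the pairing argument of \Cref{prop:mass_from_S}, when run only over a short interval near $n/2$ where $\Dcal_\Psi$'s slice-mass dominates the $d$-dependent continuity error, suffices to match any missing Hamming slices against a comparable amount of distance to the nearest special distribution. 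This depth-free final step is where I expect the analysis to be most delicate.
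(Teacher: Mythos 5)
Your proposal is correct and follows essentially the same strategy as the paper: apply the local limit theorem (via \Cref{Kol prop} and \Cref{continuity prop}, combined in \Cref{LLT Theorem}) and \Cref{close special cor} to obtain $\tvdist{f(\Ucal^m)-\Dcal}\le O(\eps)+\delta$, then close the gap for small $\eps$ by exploiting moment matching (\Cref{moment matching lemma}), support containment (\Cref{constant parity lem}), and the missing-mass analysis of \Cref{prop:mass_from_S} to run a pairing argument over weights near $n/2$ where $\Dcal_\Psi$'s slice mass dominates the $d$-dependent continuity error. One small attribution note: the pairing argument itself lives in the proof of \Cref{thm:central_regime} rather than inside \Cref{prop:mass_from_S} (which only establishes that $\kappa=\Omega(\lambda)$ and that $S$ carries constant mass within $O(\sqrt{n})$ of $n/2$), and the Type-I/Type-II subcube split via \Cref{tree lemma} is internal to \Cref{Kol prop} rather than being invoked directly here, but these are cosmetic and your structure matches the paper's.
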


We can now easily establish \Cref{thm:nc0}.
\begin{proof}[Proof of \Cref{thm:nc0}]
    We assume $d \ge 1$, as otherwise $f(\Ucal^m)$ is $\Omega(1)$-far from every uniform symmetric distribution other than $\Dzeros$ or $\Dones$, in which case
    \[
        \tvdist{f(\Ucal^m) - \Dcal} \le 1 \le O(\tvdist{f(\Ucal^m) - \Dcal_\Psi}).
    \]
    Similarly, we may assume $\tvdist{f(\Ucal^m) - \Dcal_\Psi} \le 1/3$.
    Applying \Cref{thm:middle_regime} or \Cref{thm:central_regime} depending on which regime $\Psi$ is in yields the result.
\end{proof}

\subsection{Tail Regime}\label{sec:middle_regime}

In this section, we deal with the regime where strings from $\Dcal_\Psi$ are spread out in the tail layers, but no weight is extremely close to the center, i.e., $\iota(\Psi)< n/2 - n^{2/3}$ or $\iota(\Psi) > n/2 + n^{2/3}$ is the Hamming weight in $\Psi$ closest to $n/2$.

\begin{theorem*}[\Cref{thm:middle_regime} Restated]
    Let $f \colon \bin^m \to \bin^n$ be a $d$-local function with $n$ sufficiently large (in terms of $d$). If  $\tvdist{f(\Ucal^m) - \Dcal_\Psi} \le 1/3$ for some $\Psi$ in the tail regime, then $\Dcal_\Psi \in \cbra{\Dzeros, \Dones, \Dzeroones}$.
\end{theorem*}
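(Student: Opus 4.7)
I would argue the contrapositive: assuming $\Psi$ is in the tail regime with $\Psi \not\subseteq \cbra{0, n}$, I would show $\tvdist{f(\Ucal^m) - \Dcal_\Psi} > 1/3$. First, if $\iota(\Psi) \in \cbra{0, n}$, then by definition every $k \in \Psi$ satisfies $|k - n/2| \ge n/2$, forcing $\Psi \subseteq \cbra{0, n}$, a contradiction. Hence $\iota := \iota(\Psi) \in \cbra{1, \ldots, n-1}$ and, by flipping output bits if necessary (which preserves $d$-locality), I may assume $\iota \le n/2 - n^{2/3}$.

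The core of the argument is to exhibit a small set of Hamming weights on which most of $\Dcal_\Psi$'s mass concentrates, and then to invoke the known single-slice lower bound together with \Cref{lem:tvdist_after_conditioning}. Using the ratio bound $\binom{n}{\iota - j}/\binom{n}{\iota} \le (\iota/(n-\iota+1))^j \le (1 - \Omega(n^{-1/3}))^j$ (valid for $\iota \le n/2 - n^{2/3}$), I would produce a set $T \subseteq \Psi$ with $|T| = O(n^{1/3})$ and $\sum_{k \in T}\binom{n}{k} \ge \frac{1}{2}\sum_{k \in \Psi}\binom{n}{k}$, namely the $O(n^{1/3})$ weights of $\Psi$ closest to $\iota$ and to $n - \iota$. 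This is essentially the concentration statement referenced as \Cref{clm:small_support} in the overview.

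Setting $T^\circ := T \setminus \cbra{0, n}$, the removed mass under $\Dcal_\Psi$ is at most $2/\sum_{k \in \Psi}\binom{n}{k} \le 2/n$ since $\binom{n}{\iota} \ge n$ for $\iota \in \cbra{1, \ldots, n-1}$. A direct calculation then gives $\tvdist{\Dcal_\Psi - \Dcal_{T^\circ}} = 1 - \sum_{k \in T^\circ}\binom{n}{k}/\sum_{j \in \Psi}\binom{n}{j} \le 1/2 + 2/n$. Meanwhile, $\Dcal_{T^\circ}$ is a convex combination of at most $O(n^{1/3})$ distributions $\Dcal_{\cbra{k}}$ with $k \in \cbra{1, \ldots, n-1}$; each such component lies at distance $\ge 1 - O_d(n^{-1/2})$ from $f(\Ucal^m)$ by the lower bounds of \cite{filmus2023sampling, kane2024locality} cited in the overview. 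Applying \Cref{lem:tvdist_after_conditioning} yields $\tvdist{\Dcal_{T^\circ} - f(\Ucal^m)} \ge 1 - O_d(n^{-1/6})$, and so by the triangle inequality,
\[
\tvdist{f(\Ucal^m) - \Dcal_\Psi} \ge \tvdist{\Dcal_{T^\circ} - f(\Ucal^m)} - \tvdist{\Dcal_{T^\circ} - \Dcal_\Psi} \ge \tfrac{1}{2} - O_d(n^{-1/6}) - \tfrac{2}{n},
\]
which exceeds $1/3$ for $n$ sufficiently large in terms of $d$, yielding the contradiction.

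The main obstacle is producing the $O(n^{1/3})$ bound on $|T|$ in the concentration step: it must be strong enough that $(|T^\circ|+1) \cdot O_d(n^{-1/2}) = o_d(1)$ in \Cref{lem:tvdist_after_conditioning}, so a weaker bound like $O(n^{1/2})$ would render the argument useless. Everything else is bookkeeping with the binomial estimates \Cref{fct:individual_binom} and \Cref{fct:binom_tail_asym} together with triangle inequalities. The $1/2$-of-mass concentration is used essentially tightly, since the final lower bound $1/2 - o_d(1)$ just exceeds $1/3$.
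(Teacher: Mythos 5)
Your proof follows essentially the same approach as the paper: prove the contrapositive, replace $\Psi$ by a small subset $T$ capturing half the mass (Claim~\ref{clm:small_support}), prune $\{0,n\}$, apply the single-slice lower bound (\Cref{thm:single_hamming_slice}) together with \Cref{lem:tvdist_after_conditioning}, and finish by the triangle inequality. The structure, the quantitative bookkeeping, and the final bound $1/2 - O_d(n^{-1/6}) - 2/n > 1/3$ all match.

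The one place your sketch is slightly loose is the concentration step, which you correctly flag as the crux. The ratio bound $\binom{n}{\iota-j}/\binom{n}{\iota} \le (1-\Omega(n^{-1/3}))^j$ is true, but summing this geometric decay picks up an extra $\Theta(n^{1/3})$ factor: if $T$ is the $\ell$ closest weights, you get
\[
\sum_{s\in\Psi\setminus T}\binom{n}{s} \le \binom{n}{\iota}\cdot(1-\Omega(n^{-1/3}))^{\ell}\cdot O(n^{1/3}),
\]
so to make this at most $\binom{n}{\iota}$ you need $\ell = \Omega(n^{1/3}\log n)$ rather than $O(n^{1/3})$. That weaker bound is still fine for the downstream application (since $n^{1/3}\log n \cdot n^{-1/2} = o_d(1)$), so your argument goes through, but it does not yield the claimed $O(n^{1/3})$. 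The paper's \Cref{clm:small_support} gets the sharper bound by a different, iterative argument: ordering $\Psi$ by distance to $n/2$ and peeling one element at a time, each peeled element has \emph{conditional} mass $\ge \Omega(n^{-1/3})$ (here the binomial tail estimate \Cref{fct:binom_tail_asym} plays the key role, not the ratio bound), so after $O(n^{1/3})$ peels the residual is multiplicatively reduced below $1/2$. If you want the tighter set size, adopt this peeling step; otherwise, just state your set has size $O(n^{1/3}\log n)$ and the rest carries through unchanged.
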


The main idea is that in this regime, the output must be concentrated on few weights. Thus, we can apply known error bounds on the individual weights, and use \Cref{lem:tvdist_after_conditioning} to combine the errors. We formalize this idea below.

\begin{claim}\label{clm:small_support}
    Let $\Psi \subseteq \cbra{0,1,\dots, n}$ be a non-empty set satisfying $|\iota(\Psi) - n/2| > n^{2/3}$. If $n$ is sufficiently large, then there exists a non-empty set $\bar\Psi \subseteq \Psi$ of size $|\bar\Psi| \le O(n^{1/3})$ such that
    $$
    \tvdist{\Dcal_\Psi-\Dcal_{\bar\Psi}}\le 1/2.
    $$
\end{claim}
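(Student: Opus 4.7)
The plan is to rewrite the total variation distance in terms of binomial sums and then construct $\bar\Psi$ by greedily keeping the elements of $\Psi$ with the largest binomial coefficients. A direct calculation shows that for any $\bar\Psi \subseteq \Psi$,
$$
\tvdist{\Dcal_\Psi - \Dcal_{\bar\Psi}} = 1 - \frac{\sum_{s \in \bar\Psi}\binom{n}{s}}{\sum_{s \in \Psi}\binom{n}{s}},
$$
so the goal reduces to finding $\bar\Psi$ of size $O(n^{1/3})$ capturing at least half the binomial mass of $\Psi$.

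Using the involution $s \mapsto n-s$, which preserves both $\binom{n}{s}$ and $\abs{s - n/2}$, I may assume $s^* := \iota(\Psi)$ satisfies $s^* < n/2 - n^{2/3}$, and set $t := n/2 - s^* > n^{2/3}$. By the definition of $\iota$, every element of $\Psi$ lies in $[0, s^*] \cup [n - s^*, n]$, so I split $\Psi = \Psi_L \sqcup \Psi_R$ and handle each side separately (the right side follows symmetrically). With $K := \lceil n^{1/3}\rceil$, let $\bar\Psi_L$ be the $\min(K, \abs{\Psi_L})$ elements of $\Psi_L$ with the largest binomial coefficients, and define $\bar\Psi_R$ analogously; then $\abs{\bar\Psi_L \cup \bar\Psi_R} \le 2K = O(n^{1/3})$.

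The crux is proving $\sum_{s \in \bar\Psi_L}\binom{n}{s} \ge \tfrac{1}{2}\sum_{s \in \Psi_L}\binom{n}{s}$. The case $\abs{\Psi_L} \le K$ is immediate, so enumerate $\Psi_L = \{s_1 > s_2 > \cdots > s_M\}$ with $M > K$. Two complementary bounds drive the argument, both exploiting that the $s_j$ are distinct integers in $[0, s^*]$:
\begin{itemize}
\item \emph{Missing upper bound:} $\{s_{K+1}, \ldots, s_M\} \subseteq [0, s_K - 1]$, so \Cref{fct:binom_tail_asym} together with $n - 2s_K + 3 \ge 2t + 3 > 2n^{2/3}$ and monotonicity of $\binom{n}{\cdot}$ on $[0, n/2]$ yields $\sum_{j > K}\binom{n}{s_j} \le n^{1/3}\binom{n}{s_K}$ (for $s_K \ge 2$; the edge cases $s_K \in \{0, 1\}$ are direct).
\item \emph{Kept lower bound:} distinctness forces $s_j \ge s_K + (K - j)$ for $j \le K$, and $s_K + (K - 1) \le s_1 = s^* \le n/2$, so monotonicity gives $\sum_{j \le K}\binom{n}{s_j} \ge \sum_{i=0}^{K-1}\binom{n}{s_K + i} \ge K\binom{n}{s_K}$.
\end{itemize}
Combining, the ratio kept/missing is at least $K/n^{1/3} \ge 1$, so $\bar\Psi_L$ captures at least half the mass.

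The main obstacle is the kept lower bound: the naive estimate $\binom{n}{s^*}$ alone loses the crucial $K$ factor and fails when $\Psi_L$ is spread out enough that the total binomial mass far exceeds $\binom{n}{s^*}$. The resolution is the pigeonhole observation that any $K$ distinct integers in $[0, s^*]$ must together dominate $K$ consecutive binomials starting at $s_K$, producing the $K$ factor that precisely matches the $n^{1/3}$ in the tail bound.
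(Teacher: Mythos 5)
Your proof is correct. Both your argument and the paper's begin from the same observation that $\tvdist{\Dcal_\Psi - \Dcal_{\bar\Psi}} = 1 - \sum_{s\in\bar\Psi}\binom{n}{s}/\sum_{s\in\Psi}\binom{n}{s}$ and both retain the $O(n^{1/3})$ weights nearest $n/2$ (equivalently, with the largest binomial coefficients), relying on the tail bound of \Cref{fct:binom_tail_asym} to control the discarded mass. The mechanism for closing the argument differs, however. The paper orders $\Psi$ by distance from $n/2$, treats retaining weights one at a time, and shows that each newly retained weight $s_i$ captures an $\Omega(n^{-1/3})$ fraction of the still-remaining mass; after $Cn^{1/3}$ steps the residual mass is at most $(1-\Omega(n^{-1/3}))^{Cn^{1/3}} \le 1/2$. (As an aside, the paper's phrase ``ordered nonincreasing in their distance from $n/2$'' should read ``nondecreasing'' for the denominator bound to go through.) You instead make an additive comparison of kept versus missing mass, supplying the needed lower bound on the kept side via the pigeonhole observation that $K$ distinct integers in $[0,n/2]$ jointly dominate $K$ consecutive binomial coefficients, contributing a factor of $K$ that exactly cancels the $n^{1/3}$ blow-up in the tail estimate. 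This is a genuine extra ingredient that the paper's conditioning trick avoids; conversely, your route sidesteps the slightly delicate product-of-conditionals bookkeeping and gives a one-shot inequality. Minor points you should make precise in a final write-up: after the left/right split you should observe that if $n/2\in\Psi$ the hypothesis $|\iota(\Psi)-n/2|>n^{2/3}$ is violated, so the split is exhaustive; and the factor of $2$ from $\bar\Psi_L\cup\bar\Psi_R$ is absorbed into the $O(n^{1/3})$ bound as you note.
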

\begin{proof}
    Let $\Psi = \{s_1, \ldots, s_k\}$, where the weights are ordered nonincreasing in their distance from $n/2$. Additionally, let $\ell = Cn^{1/3}$ for a sufficiently large constant $C$. If $k < \ell$, simply set $\bar\Psi = \Psi$. Otherwise we will show it suffices to set $\bar\Psi = \cbra{s_1, \ldots, s_\ell}$.

    Consider an arbitrary index $i \in [k]$, and assume without loss of generality that $s_i \le n/2$. Note that $|s_i - n/2| > n^{2/3}$ by assumption, so we in fact have $s_i < \frac{n}{2} - n^{2/3}$. Then,
    \begin{align*}
        \Pr_{x\sim \Dcal_\Psi}\sbra{|x| = s_i \bigm| |x| \not\in \cbra{s_1, s_2, \dots,s_{i-1}}} &\ge \frac{\binom{n}{s_i}}{\sum_{\substack{j\le s_i \\ j \ge n-s_i}} \binom nj} = \frac{\binom{n}{s_i}}{2\sum_{j\le s_i} \binom nj} \\
        &\ge \frac{\binom{n}{s_i}}{2\binom{n}{s_i}\frac{n-s_i+1}{n-2s_i+1}} \tag{by \Cref{fct:binom_tail_asym}} \\
        &\ge \frac{n - 2\pbra{\frac{n}{2} - n^{2/3}}}{2n} = \Theta(n^{-1/3}).
    \end{align*}
    Thus, 
    \begin{align*}
        \Pr_{x\sim \Dcal_\Psi}\sbra{|x| \not\in \cbra{s_1, \dots, s_\ell}} &= \prod_{i=1}^{\ell} \Pr_{x\sim \Dcal_\Psi}\sbra{|x| \ne s_i \bigm| |x| \not\in \cbra{s_1, s_2, \dots,s_{i-1}}} \\
        &\le \pbra{1- \Theta(n^{-1/3})}^\ell \le \exp\cbra{-\ell \cdot \Theta(n^{-1/3})} \le 1/2.
    \end{align*}
    In other words, setting $\bar\Psi = \cbra{s_1, \ldots, s_\ell}$ yields $\tvdist{\Dcal_\Psi - \Dcal_{\bar\Psi}} \le 1/2.$
\end{proof}

\begin{theorem}[Combination of {\cite[Theorem 1.2]{filmus2023sampling} and \cite[Theorem 5.10]{kane2024locality}}]\label{thm:single_hamming_slice}
    Let $1 \le k \le n-1$ be an integer, and let $f\colon \bin^m \to \bin^n$ be a $d$-local function. Then
    \[
        \tvdist{f(\Ucal^m) - \Dcal_{\cbra{k}}} \ge 1 - O_d(n^{-1/2}).
    \]
\end{theorem}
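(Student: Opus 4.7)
The plan is to execute the hypergraph ``graph elimination'' strategy of \cite{kane2024locality} (sketched in \Cref{sec:KOW_summary}) for the target distribution $\Dcal_{\cbra{k}}$ and close via anticoncentration. First I would view $f$ as a hypergraph $G$ on the output bits $[n]$, with one hyperedge per input bit collecting all the output bits it influences; $d$-locality forces every vertex of $G$ to lie in at most $d$ edges. Invoking \cite[Proposition~5.20]{kane2024locality}, I remove a set of $o(n)$ input bits so that the resulting hypergraph admits $r = \Omega_d(n)$ pairwise non-connected neighborhoods $N(v_1), \dots, N(v_r)$, each of size $O_d(1)$. Conditioning on any assignment $\sigma$ to the removed input bits expresses $f(\Ucal^m)$ as a mixture of restricted $d$-local functions $f_\sigma$ whose output blocks on distinct $N(v_i)$'s depend on disjoint remaining inputs and are therefore mutually independent.

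Next I would run a win-win argument on each $f_\sigma$. In the first case, a constant fraction of the neighborhood marginals $f_\sigma[N(v_i)]$ are $\Omega(1)$-far from the corresponding marginals $(\Dcal_{\cbra{k}})[N(v_i)]$; a block-coordinate version of \Cref{lem:tvdist_after_product} then yields $\tvdist{f_\sigma - \Dcal_{\cbra{k}}} \ge 1 - \exp(-\Omega_d(n))$. In the second case, most neighborhood marginals agree with those of $\Dcal_{\cbra{k}}$ up to $o(1)$ error. Because $1 \le k \le n-1$ makes the single-bit marginals $\mathrm{Ber}(k/n)$ non-degenerate, a refined form of the marginal matching (cf.\ \cite[Claims~5.16~\&~5.23]{kane2024locality}) shows that for a constant fraction of $i$'s the integer-valued local weight $|f_\sigma[N(v_i)]|$ is non-constant with variance $\Omega_d(1)$. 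The total output weight of $f_\sigma$ is thus a sum of $r$ independent $O_d(1)$-bounded integer random variables, $\Omega_d(n)$ of which are non-constant. A standard anticoncentration inequality (e.g., \cite[Theorem~3]{ushakov1986upper}) gives that this sum hits any particular value (in particular $k$) with probability $O_d(n^{-1/2})$, so $\tvdist{f_\sigma - \Dcal_{\cbra{k}}} \ge 1 - O_d(n^{-1/2})$. Combining both cases across $\sigma$ via \Cref{lem:tvdist_after_conditioning} yields the claimed bound.

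The main obstacle is the second case of the win-win when $k$ is in the tails (say $k = O(1)$ or $k = n - O(1)$). In this regime the single-bit marginals of $\Dcal_{\cbra{k}}$ are essentially deterministic, so matching marginals on a neighborhood does not automatically force $|f_\sigma[N(v_i)]|$ to have constant-order variance; naive anticoncentration would only give $\Omega_d(1)$ rather than $O_d(n^{-1/2})$. The fix is to extract non-degeneracy from the joint structure of $\Dcal_{\cbra{k}}$ on $N(v_i)$ rather than from single-bit marginals alone---conditional on at least one $1$ appearing in the neighborhood, the local weight remains genuinely spread---and to use a matching argument across a constant fraction of neighborhoods simultaneously. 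This is the technical heart of \cite{kane2024locality}, and its combination with the complementary regime handled by \cite{filmus2023sampling} is exactly what produces the uniform $O_d(n^{-1/2})$ error in the stated theorem.
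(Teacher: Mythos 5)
The paper does not actually prove this statement: \Cref{thm:single_hamming_slice} is presented as a citation (``Combination of \cite[Theorem 1.2]{filmus2023sampling} and \cite[Theorem 5.10]{kane2024locality}''), with no proof given in the text. Your proposal is therefore an attempt to reconstruct the argument from the cited sources, and it is reasonable to compare it against what those results actually provide.

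Your sketch of the graph-elimination plus win-win plus anticoncentration strategy is a fair account of how \cite{kane2024locality} handles $k = \Theta(n)$: delete a small set of hyperedges to obtain $\Omega_d(n)$ independent neighborhoods, then either many neighborhood marginals are far from those of $\Dcal_{\{k\}}$ (apply \Cref{lem:tvdist_after_product}) or they roughly match, in which case the output weight becomes a sum of independent non-degenerate bounded integer variables and \cite[Theorem~3]{ushakov1986upper} gives the $O_d(n^{-1/2})$ density bound. This part is correct.

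The gap is exactly where you flag it, but your proposed fix does not close it. For $k$ in the tails (say $k = 1$ and neighborhoods of size $t = O_d(1)$), the claim that ``conditional on at least one $1$ appearing in the neighborhood, the local weight remains genuinely spread'' is false: conditioned on at least one $1$ appearing, the marginal of $\Dcal_{\{1\}}$ on a size-$t$ neighborhood is supported on weight exactly $1$, so the conditional local weight is a constant. More generally, when $k = o(n)$, matching marginals on a constant-size block only forces the block weight to be nonzero with probability $O(t/n)$, so the $\Omega_d(n)$ blocks do not supply the $\Omega_d(n)$ non-degenerate summands that the Ushakov-type anticoncentration bound needs; the argument only yields an $\Omega_d(1)$ TV lower bound in that regime, as you note. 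The tail regime genuinely requires a different argument, which is precisely why the paper cites a second result, \cite[Theorem 1.2]{filmus2023sampling}, rather than relying on a single win-win. Your attribution is also slightly off: you describe the tail fix as ``the technical heart of \cite{kane2024locality},'' but it is the complementary theorem of \cite{filmus2023sampling} that supplies the tail regime; \cite{kane2024locality} is the source for the $k = \Theta(n)$ regime you sketch. As written, your proposal does not constitute a self-contained proof for the full range $1 \le k \le n-1$.
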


We now proceed to the proof of the section's main result.

\begin{proof}[Proof of \Cref{thm:middle_regime}]
    We will prove the contrapositive. Suppose $\Psi \not\subseteq \cbra{0,n}$. If $|\Psi| \le n^{1/3}$, let $\bar\Psi = \Psi$; otherwise,
    let $\bar\Psi \subseteq \Psi$ as guaranteed by \Cref{clm:small_support}.
    We further prune $\bar\Psi$ to ensure no individual weight is sampleable by defining $\Psi^\dag = \bar\Psi\setminus \cbra{0,n}$. Observe that our previous assumptions imply $\Psi^\dag$ is non-empty. Thus, this removal changes the distribution minimally, as the support of $\Dcal_{\bar\Psi}$ has size at least $n$, and we removed at most two elements. In particular,
    \[
        \tvdist{\Dcal_{\Psi^\dag} - \Dcal_{\bar\Psi}} \le \frac{2}{n}.
    \]
    
    For each $s \in \Psi^\dag$, we apply \Cref{thm:single_hamming_slice} to obtain $\tvdist{f(\Ucal^m)-\Dcal_{\cbra{s}}}\ge1-O_d(n^{-1/2})$. Since $\Dcal_{\Psi^\dag}$ is the convex combination of $\Dcal_{\cbra{s}}$ for $s\in\Psi^\dag$, \Cref{lem:tvdist_after_conditioning} implies
    $$
        \tvdist{f(\Ucal^m)-\Dcal_{\Psi^\dag}}\ge 1 - O_d\pbra{\frac{|\Psi^\dag|}{\sqrt{n}}} \ge 1 - O_d\pbra{n^{-1/6}}.
    $$
    We conclude by applying the triangle inequality to deduce 
    \begin{align*}
        \tvdist{f(\Ucal^m)-\Dcal_{\Psi}} &\ge \tvdist{f(\Ucal^m)-\Dcal_{\Psi^\dag}} - \tvdist{\Dcal_{\Psi^\dag} - \Dcal_{\bar\Psi}}  - \tvdist{\Dcal_{\bar\Psi} - \Dcal_{\Psi}} \\
        &\ge 1 - O_d\pbra{n^{-1/6}} - \frac{2}{n} - \frac{1}{2} > \frac{1}{3}. \qedhere
    \end{align*}
\end{proof}

\subsection{Central Regime}\label{sec:central_regime}

In this section, we handle the regime where some Hamming weight is very close to the center, i.e., $n/2 - n^{2/3} \le \iota(\Psi)\le n/2 + n^{2/3}$.  

\begin{theorem*}[\Cref{thm:central_regime} Restated]
    Let $f \colon \bin^m \to \bin^n$ be a $d$-local function with $n$ sufficiently large (in terms of $d$). If $\tvdist{f(\Ucal^m) - \Dcal_\Psi} \le \eps$ for some $\Psi$ in the central regime, then $\tvdist{f(\Ucal^m) - \Dcal} \le O(\eps)$ for some $\Dcal \in \cbra{\Deven, \Dodd, \Dall}$.
\end{theorem*}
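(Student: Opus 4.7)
The plan is to follow the strategy sketched in \Cref{sec:overview_nc0}. Since $\Psi$ lies in the central regime, $\Dcal_\Psi$ concentrates on weights within an $n^{2/3}$ window around $n/2$, so it is ``uniform-ish'' in the sense that its marginal on any constant number of coordinates is very close to uniform. I would first invoke the graph elimination reduction of \cite[Proposition 5.20]{kane2024locality} to find $o(n)$ edges of the dependency hypergraph whose deletion produces $\Omega_d(n)$ mutually non-adjacent output neighborhoods of size $O_d(1)$. Conditioning on the corresponding input bits expresses $f(\Ucal^m)$ as a mixture of structured restrictions, each of which factors across neighborhoods into a product over many coordinate blocks.

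Next I would perform the tree-based subcube decomposition (\Cref{tree lemma}) to split $\bin^m$ into Type-I subcubes $C$ on which $f$ applied to $\Ucal(C)$ is approximately $k$-wise independent with no input bit affecting too many output bits, and Type-II subcubes whose image under $f$ is already noticeably far from $\Dcal_\Psi$. The Type-II subcubes are essentially ``free'': their contribution to $\tvdist{f(\Ucal^m)-\Dcal_\Psi}$ matches their contribution to the distance from the nearest special distribution up to a constant. For a Type-I subcube $C$, I would combine a result on $k$-wise independence fooling threshold functions (\Cref{thm:kwise_ind}) with a structural statement about low-degree $\Fbb_2$-polynomials (\Cref{parity randomization theorem}) to conclude (\Cref{k indep lem}) that the weight distribution of $f$ applied to $\Ucal(C)$ is close, in Kolmogorov distance and even when refined by parity, to the binomial $\mathrm{Bin}(n,1/2)$. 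Aggregating across subcubes yields the global Kolmogorov property (\Cref{Kol prop}).

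Independently, within each neighborhood the independence together with $\Dcal_\Psi$ being nearly uniform on each block forces many of the per-neighborhood weight increments to be non-constant modulo every $q\ge 2$, in the spirit of \cite[Claims 5.16 \& 5.23]{kane2024locality}. A Fourier/anticoncentration argument (\Cref{continuity prop}, \Cref{thm:comp_llt_C=1_special}) then yields a continuity bound of the shape $\bigl||f(\Ucal^m)|(x)-|f(\Ucal^m)|(x+\Delta)\bigr| \le O_d(|\Delta|/n)$ for even $\Delta\in\Zbb$. The Kolmogorov property supplies exactly the slack needed to absorb the $d$-dependent constants: concretely it identifies a mixture $\Mcal$ of $\Deven$ and $\Dodd$ with $\tvdist{f(\Ucal^m)-\Mcal}\le O(\tvdist{f(\Ucal^m)-\Dcal_\Psi})+\delta$ for any prescribed $\delta>0$, provided $n$ is large enough in terms of $\delta$ and $d$; this is \Cref{LLT Theorem}. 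Because $\Dcal_\Psi$ assigns any two equal-weight strings either identical mass or zero, the triangle inequality then forces $\Dcal_\Psi$ close to $\Mcal$, which in turn forces $\Mcal$ close to one of $\Deven,\Dodd,\Dall$ (\Cref{close special cor}).

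The main obstacle is that this pathway delivers only $\tvdist{f(\Ucal^m)-\Dcal}\le O(\eps)+\delta$ for a prescribed $\delta>0$, rather than $O(\eps)$ outright: we cannot freely take $\delta = O(\eps)$ because $n$ might not be large enough (in terms of $\eps$ and $d$) to activate the LLT at that scale. To close the gap I would handle the residual case $\tvdist{f(\Ucal^m)-\Dcal}\le \delta_0$ for some sufficiently small absolute constant $\delta_0$ separately, exploiting two rigidity properties: moment matching of $f(\Ucal^m)$ with $\Ucal^n$ on low-degree moments (\Cref{moment matching lemma}) and support containment $\supp{f(\Ucal^m)}\subseteq\supp{\Dcal}$ (\Cref{constant parity lem}). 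Together with \Cref{prop:mass_from_S}, these force any deficit of $f(\Ucal^m)$-mass on Hamming slices near $n/2$ to be compensated by mass shifted outward. A pairing argument restricted to a constant-width window around $n/2$---where the target mass $\Dcal_\Psi$ assigns to each slice dominates the $O_d(1/n)$ per-step continuity error---then removes the $d$-dependence and yields the desired $O(\eps)$ bound.
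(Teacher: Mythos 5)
Your proposal tracks the paper's actual proof essentially step for step: establish the LLT (\Cref{LLT Theorem}) via the tree decomposition, Kolmogorov bound, and continuity bound; pass to a special distribution via \Cref{close special cor}; and then settle the residual small-error case by combining the rigidity consequences (\Cref{moment matching lemma}, \Cref{constant parity lem}, \Cref{prop:mass_from_S}) with a pairing argument near $n/2$ leveraging \Cref{continuity prop}. The one small imprecision is that the pairing window is $O(\sqrt{n})$-wide (not constant-width), and the $d$-dependence in the continuity error is absorbed not merely because each slice has mass $\Omega(1/\sqrt{n})$, but because the accumulated pairing error $O_d(\lambda/\sqrt{n})$ is beaten by $\Omega(1/\sqrt{n})$ once $\lambda$ is below a $d$-dependent threshold, which is guaranteed by choosing $\delta$ small in terms of $d$ when invoking the LLT.
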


A crucial structural result in our analysis of this regime is the following local limit theorem, which says that any locally sampleable distribution close to a roughly centered uniform symmetric distribution must also be near to a mixture of $\Deven$ and $\Dodd$. The proof is somewhat involved, so we defer it to its own section (\Cref{sec:LLT}) for clarity. 

\begin{theorem}\label{LLT Theorem}
Let $\delta>0$ and $f\colon\{0,1\}^m\rightarrow \{0,1\}^n$ be a $d$-local function with $n$ sufficiently large (in terms of $\delta$ and $d$). Let $\Psi \subseteq \{0,1,2,\ldots,n\}$ be a set containing some element $n(1/2\pm c(d,\delta))$ for some $c(d,\delta)>0$ a small enough function of $d$ and $\delta$. Then there exists a distribution $\Mcal$ which is a mixture of $\Deven$ and $\Dodd$ so that
$$
\tvdist{f(\Ucal^m) - \Mcal} \leq O(\tvdist{f(\Ucal^m) - \Dcal_\Psi}) + \delta.
$$
\end{theorem}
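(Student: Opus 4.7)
The plan is to implement the Type-I/Type-II decomposition outlined in \Cref{sec:overview_nc0}. Let $\eps := \tvdist{f(\Ucal^m) - \Dcal_\Psi}$. I will apply the Tree Lemma (\Cref{tree lemma}) to decompose $\bin^m$ as the disjoint union of a Type-II region $C_\mathrm{II}$ and a family of Type-I cubes $\{C_i\}$, with two properties: (a) the Type-II mass already contributes $O(\eps) + \delta/3$ to $\tvdist{f(\Ucal^m) - \Dcal_\Psi}$, and (b) on each Type-I cube the restricted function $f|_{C_i}$ is near $k$-wise independent for a suitably chosen $k = k(d,\delta)$, with no input bit affecting too many output bits. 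The Type-II contribution absorbs directly into the target $O(\eps)$ term: whatever mixture of $\Deven$ and $\Dodd$ I assign to ``cover'' that portion, both $\tvdist{f(\Ucal^m)-\Mcal}$ and $\tvdist{f(\Ucal^m)-\Dcal_\Psi}$ pay at most this much, so the overhead passes harmlessly through the triangle inequality. Thus the real task is to show, cube by cube, that $f(\Ucal(C_i))$ is within $O(\delta)$ of some mixture $\alpha_i \Deven + (1-\alpha_i)\Dodd$, and then average to construct $\Mcal$.

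Fix a Type-I cube $C$. Two tools combine to yield Kolmogorov closeness of $|f(\Ucal(C))|$ to a parity-weighted binomial: the fact that $k$-wise independent distributions fool threshold functions (\Cref{thm:kwise_ind}) gives closeness to $\mathrm{Bin}(n,p_C)$, while the parity randomization theorem (\Cref{parity randomization theorem}) — applied to the low-degree $\Fbb_2$-polynomial computing the output parity — decouples the even and odd parity classes. Packaged together, this is \Cref{k indep lem}, and aggregated over cubes, \Cref{Kol prop}. The bias $p_C$ is forced near $1/2$: the assumption that $\Psi$ contains some weight of the form $n(1/2 \pm c(d,\delta))$ combined with $\tvdist{f(\Ucal^m) - \Dcal_\Psi} \le \eps$ implies that the aggregate weight distribution places most mass in a central window, and by $k$-wise concentration any Type-I cube with $|p_C - 1/2|$ too large would violate this; such cubes can be absorbed into $C_\mathrm{II}$ at negligible cost.

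To upgrade Kolmogorov closeness to total variation closeness on the weight distribution, I invoke the continuity proposition \Cref{continuity prop}, which is a consequence of the compound local limit theorem \Cref{thm:comp_llt_C=1_special}: for every even $\Delta \in \Zbb$ and every weight $x$,
\[
\abs{\Pr\sbra{|f(\Ucal(C))|=x} - \Pr\sbra{|f(\Ucal(C))|=x+\Delta}} \le O_d(|\Delta|/n).
\]
A weight distribution that is simultaneously Kolmogorov-close to the correct parity-binomial and pointwise-continuous at this rate must in fact be within $\delta/3$ (in TV) of $\alpha_C|\Deven| + (1-\alpha_C)|\Dodd|$ for some $\alpha_C\in[0,1]$. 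To lift this to a statement about the full string distribution, I use that near $k$-wise independence matches all $\le k$-order Fourier coefficients with uniformity (\Cref{moment matching lemma}); restricted to a fixed parity class, this Fourier-matching together with the correct weight marginal forces $f(\Ucal(C))$ to be within $\delta/3$ (in TV) of $\alpha_C \Deven + (1-\alpha_C) \Dodd$. Averaging over the Type-I cubes (weighted by cube measure) assembles the global mixture $\Mcal$ and yields the total bound $O(\eps) + \delta$.

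The main obstacle is the tension between the $d$-dependent continuity error $O_d(|\Delta|/n)$ from \Cref{thm:comp_llt_C=1_special} and the $d$-free target error $O(\eps)+\delta$: summing $O_d(|\Delta|/n)$ naively over $|\Delta| = \Theta(n)$ produces a $d$-dependent global TV error that cannot fit into $\delta$ uniformly in $d$. I would resolve this by splitting the weight axis into a central window of radius $O_d(\sqrt{n})$ — where continuity is applied with small $\Delta$ against the parity-binomial profile — and outer tails where the Kolmogorov closeness alone carries the bound (since the binomial tails decay faster than $\delta/n$ for $n$ sufficiently large in terms of $d,\delta$). This partition is precisely the mechanism that hides the $d$-dependence inside the hypothesis ``$n$ sufficiently large in terms of $\delta$ and $d$,'' keeping the final distance estimate free of any multiplicative or additive $d$-dependence beyond the hidden constant in $O(\eps)$.
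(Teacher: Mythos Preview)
Your plan tracks the paper's ingredients closely up through the weight-distribution analysis, but the lifting step---``near $k$-wise independence \ldots\ together with the correct weight marginal forces $f(\Ucal(C))$ to be within $\delta/3$ (in TV) of $\alpha_C\Deven+(1-\alpha_C)\Dodd$''---is not correct, and this is where the argument breaks. Matching low-order Fourier coefficients (constant $k$) does not control total variation distance to a symmetric distribution: a $d$-local, $k$-wise-independent distribution can be $\Omega(1)$-far from its own symmetrization (e.g., sample $\Deven$ on the odd-indexed coordinates and fresh uniform bits on the even-indexed ones; this is $2$-local, $k$-wise independent for any constant $k$, has near-binomial weight, yet is $1/2$-far from every mixture of $\Deven$ and $\Dodd$). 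Your citation of \Cref{moment matching lemma} goes in the wrong direction---that claim deduces $k$-wise independence \emph{from} small TV, not the reverse. The paper avoids this entirely via \Cref{lem:distance_to_sym}: since $\Dcal_\Psi$ is symmetric, $\tvdist{f(\Ucal^m)-f(\Ucal^m)_\sym}=O(\eps)$ automatically, so the full-TV bound reduces to bounding $\tvdist{|f(\Ucal^m)|-|\Mcal|}$. All of the Kolmogorov/continuity machinery is then deployed purely on the weight distribution, and no per-cube lifting is needed.

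A second, structural issue: you apply \Cref{continuity prop} cube-by-cube on the Type-I cubes from \Cref{tree lemma}, but the continuity bound is proved via a \emph{different} decomposition (the graph-elimination of \Cref{neighborhoods cor}), and there is no reason a single Type-I cube should inherit the non-weird neighborhood structure that drives \Cref{thm:comp_llt_C=1_special}. The paper handles this by applying \Cref{Kol prop} and \Cref{continuity prop} each \emph{globally} to $f(\Ucal^m)$, obtaining two separate decompositions $aE+(1-a)X$ and $a'E'+(1-a')X'$ with $a,a'=O(\eps)$; since $\tvdist{X-X'}=O(\eps)$, one can combine the Kolmogorov property of $X$ with the continuity property of $X'$ on a partition of $\{0,\ldots,n\}$ into parity intervals of width $\Theta(\kappa\sqrt n)$. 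Your central-window/tail split is morally the same device as this parity-interval partition, but it must be executed on the global weight distribution, not per cube.
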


The following two results used in proving \Cref{LLT Theorem} will also be useful within this section. (Their proofs can likewise be found in \Cref{sec:LLT}.) The first allows us to reason about the distance between two distributions $A$ and $B$ over $\bin^n$ with $B$ symmetric by reasoning about their weight distributions and $A$'s symmetry. It is essentially a consequence of the observation that $A$ is far from $B$ if $A$ is either far from its own symmetrization or if its weight distribution noticeably differs from $B$'s. Recall $|A|$ denotes the distribution over the Hamming weight of strings $x\sim A$, and $A_\sym$ denotes the distribution resulting from randomly permuting strings $x\sim A$.

\begin{lemma}\label{lem:distance_to_sym}
Let $A$ and $B$ be two distributions on $\{0,1\}^n$ with $B$ symmetric. Then
$$
\tvdist{A-B} = \Theta(\tvdist{|A|-|B|} + \tvdist{A-A_\sym}).
$$
\end{lemma}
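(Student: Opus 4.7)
The plan is to establish the two inequalities implicit in the $\Theta$ separately, each via short arguments involving the triangle inequality and the observation that symmetrization commutes with a symmetric reference distribution.

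\textbf{Upper bound.} First I would show $\tvdist{A-B} \leq \tvdist{A-A_\sym} + \tvdist{|A|-|B|}$. By the triangle inequality, $\tvdist{A-B} \leq \tvdist{A-A_\sym} + \tvdist{A_\sym-B}$. Since both $A_\sym$ and $B$ are symmetric distributions, each is uniform on every Hamming slice conditioned on its weight, so the coupling that matches strings with the same weight inside each slice witnesses $\tvdist{A_\sym - B} = \tvdist{|A_\sym|-|B|}$. Because permuting coordinates preserves the Hamming weight, $|A_\sym| = |A|$, giving the desired bound.

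\textbf{Lower bound.} For the reverse direction, I would separately show $\tvdist{|A|-|B|} \leq \tvdist{A-B}$ and $\tvdist{A - A_\sym} \leq 2\tvdist{A-B}$. The first is immediate from the data processing inequality applied to the map $x \mapsto |x|$. For the second, write $A_\sym = \E_{\pi}[\pi A]$ where $\pi$ is a uniform random permutation of $[n]$ acting on distributions by coordinate permutation. By convexity of total variation distance,
\[
\tvdist{A - A_\sym} \;\leq\; \E_{\pi}\, \tvdist{A - \pi A}.
\]
For each fixed $\pi$, applying the triangle inequality through $B$ and using that $\pi B = B$ by the symmetry of $B$ gives $\tvdist{A - \pi A} \leq \tvdist{A - B} + \tvdist{B - \pi A} = \tvdist{A-B} + \tvdist{\pi B - \pi A} = 2\tvdist{A-B}$. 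Taking expectations over $\pi$ yields $\tvdist{A - A_\sym} \leq 2 \tvdist{A - B}$. Summing the two pieces yields $\tvdist{|A|-|B|} + \tvdist{A - A_\sym} \leq 3\tvdist{A-B}$, as required.

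\textbf{Obstacles.} I do not anticipate any real obstacles: both directions reduce to the triangle inequality together with the single structural input that the symmetric distribution $B$ is invariant under any coordinate permutation. The only care needed is in verifying the equality $\tvdist{A_\sym - B} = \tvdist{|A_\sym|-|B|}$ for two symmetric distributions, but this is standard because both distributions factor as ``pick a weight, then pick a uniformly random string of that weight.''
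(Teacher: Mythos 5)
Your proof is correct and the overall architecture matches the paper's: triangle inequality through $A_\sym$ for the upper bound (using that $\tvdist{A_\sym - B} = \tvdist{|A|-|B|}$ since both are symmetric), and data processing for the $\tvdist{|A|-|B|}$ part of the lower bound. The one place you diverge is the second piece of the lower bound. The paper obtains $\tvdist{A - A_\sym} \le \tvdist{A-B} + \tvdist{A_\sym - B} = \tvdist{A-B} + \tvdist{|A|-|B|}$ by triangle inequality through $B$, and then combines algebraically with the data-processing bound to get the factor $3$. You instead prove $\tvdist{A-A_\sym} \le 2\tvdist{A-B}$ directly by writing $A_\sym = \E_\pi[\pi A]$, using convexity of total variation, and then the triangle inequality through $B$ together with $\pi B = B$ for each fixed $\pi$; summing with the data-processing bound again gives the constant $3$. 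Both are short and correct; the paper's version avoids the permutation-averaging step and is slightly more self-contained, while your version isolates a cleaner standalone inequality ($\tvdist{A-A_\sym} \le 2\tvdist{A-B}$ whenever $B$ is symmetric) that could be reused elsewhere. Either is acceptable.
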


The second result can be viewed as a continuity property of the weight distribution of $f(\Ucal^m)$ in this section's regime.

\begin{proposition}\label{continuity prop}
Let $f\colon\{0,1\}^m\rightarrow \{0,1\}^n$ be a $d$-local function with $n$ sufficiently large (in terms of $d$). Let $\Psi \subseteq \{0,1,\ldots,n\}$ be a set containing an element $n(1/2\pm c(d))$ for some $c(d)>0$ a small enough function of $d$. Then the distribution $f(\Ucal^m)$ can be written as a mixture $aE + (1-a)X$ with $a=O(\tvdist{f(\Ucal^m) - \Dcal_\Psi})$ so that for any even $\Delta$ and $x\in\{0,1,\ldots,n\}$,
$$
\big|\Pr\sbra{|X| = x} - \Pr\sbra{|X| = x+\Delta} \big| = O_d\pbra{\frac{|\Delta|}{n}}.
$$
\end{proposition}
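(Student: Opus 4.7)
The overall plan is to follow the ``graph elimination'' framework outlined in \Cref{sec:KOW_summary} and \Cref{sec:overview_nc0}: decompose $f(\Ucal^m)$ into a mixture of product distributions across many independent neighborhoods, peel off a small $O(\eps)$ fraction of the mass whose marginals are too skewed to match the (nearly uniform) marginals of $\Dcal_\Psi$, and apply the compound local limit theorem \Cref{thm:comp_llt_C=1_special} to the residual distribution to obtain the $O_d(|\Delta|/n)$ continuity bound for even shifts $\Delta$.

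Concretely, I would first apply \cite[Proposition 5.20]{kane2024locality} to the dependency hypergraph of $f$ to find $o(n)$ input edges whose deletion leaves $r=\Omega_d(n)$ pairwise non-connected neighborhoods $N_1,\ldots,N_r$, each of size $O_d(1)$. Conditioning on the inputs along these edges, together with a further conditioning on any input bits that do not feed any $N_i$, decomposes $f(\Ucal^m)$ as $\sum_j \alpha_j \Pcal_j$, where the restriction of each $\Pcal_j$ to $\bigcup_i N_i$ factors as a product across the $N_i$'s and the complementary output bits become fully determined. In particular, under $\Pcal_j$ the total weight of the output splits as $Y_0^{(j)} + \sum_{i=1}^r Y_i^{(j)}$, with $Y_0^{(j)}$ deterministic and the $Y_i^{(j)}$'s mutually independent integer random variables.

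Next, I would classify each $\Pcal_j$ as \emph{good} or \emph{bad}. Since $|\iota(\Psi)-n/2|\le c(d)\,n$, a direct binomial computation shows that the marginal $\Dcal_\Psi[N_i]$ on any $O_d(1)$-size neighborhood is $O_d(c(d))$-close to the uniform distribution on $\bin^{|N_i|}$; choose $c(d)$ small enough that this closeness is at most $\eta(d)/2$ for an appropriate threshold $\eta(d)$. Call $\Pcal_j$ bad if at least a $\gamma(d)$-fraction of its marginals $\Pcal_j[N_i]$ are $\eta(d)$-far from the corresponding marginal of $\Dcal_\Psi$. By \Cref{lem:tvdist_after_product} (taking $\Wcal$ to be a uniform-like lower bound for $\Dcal_\Psi$), every bad $\Pcal_j$ satisfies $\tvdist{\Pcal_j-\Dcal_\Psi}=1-2^{-\Omega_d(r)}=1-o(1)$, and combined with the hypothesis $\tvdist{f(\Ucal^m)-\Dcal_\Psi}\le\eps$ this forces the total bad mass to be $O(\eps)$. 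Setting $aE$ to be the mixture restricted to bad indices and $(1-a)X$ to the good ones gives $a=O(\eps)$ as required.

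Finally, for each good $\Pcal_j$, the fact that most $\Pcal_j[N_i]$ are close to $\Dcal_\Psi[N_i]$ and hence close to uniform on $\bin^{|N_i|}$ means that, for every odd $q\ge 3$, $\Omega_d(n)$ of the $Y_i^{(j)}$ are non-constant modulo $q$. The compound local limit theorem \Cref{thm:comp_llt_C=1_special} then gives $\bigl|\Pr[|f|=x\mid \Pcal_j]-\Pr[|f|=x+\Delta\mid \Pcal_j]\bigr|=O_d(|\Delta|/n)$ for all even $\Delta$, and averaging over good $j$'s yields the bound for $X$ by linearity. The main obstacle will be verifying the hypotheses of \Cref{thm:comp_llt_C=1_special} with the correct $d$-dependence: one needs $\Omega_d(n)$ neighborhoods to be non-constant modulo \emph{every} odd prime simultaneously, and since each $N_i$ is only $O_d(1)$ bits wide this must be extracted from the collective product structure rather than any single neighborhood. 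The restriction to even $\Delta$ arises because a global parity constraint on the output is the one obstruction this argument cannot remove, and it is exactly what makes \Cref{continuity prop} feed into the mixture-of-$\Deven$-and-$\Dodd$ conclusion of \Cref{LLT Theorem}.
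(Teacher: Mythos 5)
Your high-level plan matches the paper's: decompose via the graph-elimination machinery, peel off an $O(\eps)$ error mass, and apply \Cref{thm:comp_llt_C=1_special} to the remainder. But there is a structural gap in how you set up the conditioning, and it is precisely the subtlety the paper's proof is built to handle. You condition on the removed edges and on ``any input bits that do not feed any $N_i$'' simultaneously, and then classify each resulting $\Pcal_j$ as good or bad according to whether most marginals $\Pcal_j[N_i]$ are close to uniform. If ``feed $N_i$'' means ``affect some output in $N_i$,'' then the claim that the complementary outputs become deterministic is false: an input edge can contain a vertex of $N_i$ (at distance $\le 2$ from $v_i$) together with an output at distance exactly $3$ from every central $v_j$, so $Y_0^{(j)}$ is neither deterministic nor independent of the $Y_i^{(j)}$'s, and the decomposition needed for \Cref{thm:comp_llt_C=1_special} breaks. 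If instead you condition on the paper's ``extraneous inputs'' (those not affecting any \emph{central} element), the decomposition holds, but then the marginal $\Pcal_j[N_i]$ can be far from uniform even when $f$ is a perfect sampler: e.g.\ for the $\Deven$-sampler $f(x)_i = x_i \oplus x_{i+1}$, once the extraneous bits bordering a neighborhood are fixed, the three outputs of that neighborhood are supported on only $4$ of $8$ strings. Under your classification every subcube would be bad, forcing $a=1$, yet $\tvdist{f(\Ucal^m) - \Dcal_\Psi} = 0$. Either reading of your conditioning therefore fails.

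The paper avoids this by a two-stage conditioning. It first conditions \emph{only} on the removed edge set $S$ and calls a subcube weird if at least half the marginals $f(\Ucal(C))[N_i]$ fail to be exactly uniform; this classification happens before the extraneous inputs are fixed, so the $\Deven$ example is correctly not-weird. Only inside a non-weird subcube does it condition on extraneous inputs, and there it needs a genuinely new idea to show the resulting $X_i$'s are usually non-constant mod $s$: the central coordinate $v_i$ stays unbiased under the extraneous conditioning, and uniformity of $f(\Ucal(C))[N_i]$ forces the weight-mod-$s$ distribution to differ between $v_i=0$ and $v_i=1$, so $X_i$ cannot be constant mod $s$ for all extraneous settings. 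Then $d$-locality gives a $2^{-dt}$ lower bound on the probability that $X_i$ is $2^{-d}$-far from constant, and non-adjacency makes these events independent, so Chernoff produces $\Omega_d(n)$ usable $X_i$ with high probability. This ``central-element-plus-Chernoff'' argument is the crux that your proposal omits, and without it the hypotheses of \Cref{thm:comp_llt_C=1_special} cannot be verified from a per-$\Pcal_j$ marginal check.
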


Here, one should view $E$ as the ``error'' part of $f(\Ucal^m)$ which is far from every roughly centered uniform symmetric distribution, while the remaining part $X$ has a smooth weight distribution (modulo parity constraints). These properties are ultimately inherited via the structure of the independent neighborhoods obtained after conditioning, as sketched in \Cref{sec:overview}. $E$ corresponds to neighborhoods that are far from uniform, while the smoothness of $X$ is a consequence of the anticoncentration properties of many roughly unbiased independent neighborhoods (formalized in \Cref{app:comp_llt}).

We proceed with an important corollary of \Cref{LLT Theorem}:
\begin{corollary}\label{close special cor}
Let $\delta>0$ and $f\colon\{0,1\}^m\rightarrow \{0,1\}^n$ be a $d$-local function with $n$ sufficiently large (in terms of $\delta$ and $d$). Let $\Psi \subseteq \{0,1,2,\ldots,n\}$ contain some element within $n^{2/3}$ of $n/2$. Then there exists a distribution $\Dcal \in \cbra{\Deven, \Dodd, \Dall}$ so that
$$
\tvdist{f(\Ucal^m) - \Dcal} \leq O(\tvdist{f(\Ucal^m) -\Dcal_\Psi}) + \delta.
$$
\end{corollary}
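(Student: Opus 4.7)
The plan is to use \Cref{LLT Theorem} as a black box and then reduce to a purely combinatorial question about when a mixture $\Mcal = \alpha\Deven + (1-\alpha)\Dodd$ can be close to $\Dcal_\Psi$. First, I would verify the hypothesis of \Cref{LLT Theorem}: since $\Psi$ contains an element within $n^{2/3}$ of $n/2$, i.e., at position $n(1/2 \pm n^{-1/3})$, for $n$ sufficiently large (in terms of $d$ and $\delta$) we have $n^{-1/3} \le c(d,\delta)$, as required. Applying \Cref{LLT Theorem} yields a mixture $\Mcal = \alpha\Deven + (1-\alpha)\Dodd$ satisfying $\tvdist{f(\Ucal^m) - \Mcal} \le O(\tvdist{f(\Ucal^m) - \Dcal_\Psi}) + \delta =: \eta$. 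By the triangle inequality, $\tvdist{\Mcal - \Dcal_\Psi} \le O(\eta)$. The corollary then follows from one more triangle inequality provided we can establish $\min_{\Dcal \in \cbra{\Deven, \Dodd, \Dall}} \tvdist{\Mcal - \Dcal} \le O(\eta)$.

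For this key step, I would explicitly expand $\tvdist{\Mcal - \Dcal_\Psi}$ using that both distributions are symmetric and that $\Mcal$ is uniform on each parity class. Letting $E_\Psi = \sum_{k \in \Psi,\, k \text{ even}} \binom{n}{k}$, $O_\Psi = \sum_{k \in \Psi,\, k \text{ odd}} \binom{n}{k}$, $M = E_\Psi + O_\Psi$, $q = E_\Psi/2^{n-1}$, $r = O_\Psi/2^{n-1}$, and $p = E_\Psi/M$, a direct computation gives
$$
2\tvdist{\Mcal - \Dcal_\Psi} = \alpha(1-q) + (1-\alpha)(1-r) + |\alpha q - p| + |(1-\alpha)r - (1-p)|.
$$
Moreover, $\tvdist{\Mcal - \Deven} = 1-\alpha$, $\tvdist{\Mcal - \Dodd} = \alpha$, and $\tvdist{\Mcal - \Dall} = |\alpha - 1/2|$. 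The identity $p = q/(q+r)$ (following from the fact that the set of even and the set of odd binary strings both have size $2^{n-1}$) will also play a role.

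The remaining technical work is a careful case analysis on $\alpha$. When $\alpha \le 1/4$, either $q \le 1/2$ --- in which case $\alpha(1-q) \le 2\eta$ forces $\alpha \le 4\eta$ --- or $q > 1/2$, which forces $p = q/(q+r) \ge 1/3$; combined with the cross-term bound $|(1-\alpha)r - (1-p)| \le 2\eta$ (noting $(1-\alpha)r$ is $\Omega(1)$ while $1-p \le 2/3$), this yields $\eta = \Omega(1)$, in which case the trivial bound suffices. A symmetric argument handles $\alpha \ge 3/4$. When $\alpha \in [1/4, 3/4]$, the first two terms of the expansion force $1-q,\, 1-r \le O(\eta)$, so $p = q/(q+r)$ is within $O(\eta)$ of $1/2$; the third term then places $\alpha$ within $O(\eta)$ of $p$, yielding $|\alpha - 1/2| \le O(\eta)$. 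In each case, one of $\alpha$, $1-\alpha$, $|\alpha - 1/2|$ is $O(\eta)$, giving the required distance to the corresponding special distribution. I expect this elementary but somewhat delicate case analysis to be the main technical portion of the proof; no substantially new ideas beyond \Cref{LLT Theorem} are needed.
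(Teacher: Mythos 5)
Your high-level route is identical to the paper's: apply \Cref{LLT Theorem} to get $\Mcal = \alpha\Deven + (1-\alpha)\Dodd$ close to $f(\Ucal^m)$, reduce via the triangle inequality to showing $\min_{\Dcal}\tvdist{\Mcal - \Dcal}\le O(\tvdist{\Mcal-\Dcal_\Psi})$, and then close with more triangle inequalities. The difference is entirely in how the core inequality is established. The paper partitions $\{0,1\}^n$ into $2^{n-1}$ pairs $(x_e,x_o)$ with one element of each parity and observes that, because $\Dcal_\Psi$ is uniform on its support, each pair contributes at least $\zeta/2^{n-1}$ to $2\tvdist{\Mcal - \Dcal_\Psi}$ where $\zeta \coloneqq \min\{\alpha, 1-\alpha, |\alpha-1/2|\} = \tvdist{\Mcal - \Dcal}$; this gives the uniform sharp bound $\tvdist{\Mcal-\Dcal}\le 2\tvdist{\Mcal-\Dcal_\Psi}$ with no casework on $\alpha$. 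You instead aggregate those pairs into the four global sums involving $q,r,p$ and split on $\alpha \in \{[0,1/4],[1/4,3/4],[3/4,1]\}$. This is correct in outline (and of course your four-term expansion is exactly the pair decomposition summed), but it is lengthier and quantitatively lossier.

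There is also one under-justified step in your sub-case $\alpha\le 1/4$, $q>1/2$. You write ``noting $(1-\alpha)r$ is $\Omega(1)$,'' but $r$ can be arbitrarily small a priori; you need to first invoke the second-term bound $(1-\alpha)(1-r)\le O(\eta)$ (valid only when $\eta$ is below a small constant, else the trivial bound $\tvdist{\Mcal-\Dcal}\le 1$ already suffices) to conclude $r\ge 1-O(\eta)$, and only then does $(1-\alpha)r\ge 3/4-O(\eta)$ exceed $1-p\le 2/3$ by a constant. Equivalently, you could bypass the fourth term entirely and note that the third term satisfies $|\alpha q - p|\ge p-\alpha q \ge 1/3 - 1/4 = 1/12$ directly from $p\ge 1/3$ and $\alpha q\le 1/4$. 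With either patch your case analysis is sound; the paper's per-pair argument simply sidesteps these considerations.
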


\begin{proof}
    For sufficiently large $n$, we may apply \Cref{LLT Theorem} to deduce that for some mixture $\Mcal = \eta \cdot \Deven + (1-\eta)\cdot \Dodd$, we have
    \begin{equation}\label{close special cor:eq:2}
        \tvdist{f(\Ucal^m) - \Mcal} \leq O(\tvdist{f(\Ucal^m) - \Dcal_\Psi}) + \frac{\delta}{3}.
    \end{equation}
    We will show a similar upper bound for $\tvdist{f(\Ucal^m) - \Dcal}$.
    This essentially follows from a number of applications of the triangle inequality. We have
        \begin{equation}\label{eq:lem:result_via_close_to_mixture}
        \tvdist{f(\Ucal^m) - \Dcal} \le \tvdist{f(\Ucal^m) - \Dcal_\Psi} + \tvdist{\Dcal_\Psi - \Dcal},
    \end{equation}
    so it remains to bound $\tvdist{\Dcal_\Psi - \Dcal}$. We first show $\tvdist{\Mcal - \Dcal} \le 2\tvdist{\Mcal - \Dcal_\Psi}$. 
    Let $S$ be a partition of $\{0,1\}^n$ into pairs of elements $(x_e, x_o)$ such that each pair contains one element $x_e$ of even weight and one element $x_o$ of odd weight. Then,
    \begin{align*}
        \tvdist{\Mcal - \Dcal} &= \frac{1}{2} \sum_{x\in \{0,1\}^n} |\Mcal(x) - \Dcal(x)| \\
        &= \frac{1}{2} \sum_{(x_e, x_o)\in S} |\Mcal(x_e) - \Dcal(x_e)| + |\Mcal(x_o) - \Dcal(x_o)| \\
        &= 2^{n-2} \pbra{\left|\frac{\eta}{2^{n-1}} - \Dcal(x_e)\right| + \left|\frac{1-\eta}{2^{n-1}} - \Dcal(x_o)\right|}.
    \end{align*}
    Breaking into cases, we find that
    \[
        \tvdist{\Mcal - \Dcal} =
        \begin{cases}
            1 - \eta & \text{ if } \Dcal = \Deven, \\
            \eta & \text{ if } \Dcal = \Dodd, \\
            \left|\eta - \frac{1}{2}\right| & \text{ if } \Dcal = \Dall.
        \end{cases}
    \]
    By choosing $\Dcal$ appropriately, we may assume $\tvdist{\Mcal - \Dcal} = \zeta \coloneqq \min\cbra{\eta, |\eta - 1/2|, 1-\eta}$. Similarly, 
    \begin{align*}
        \tvdist{\Mcal - \Dcal_\Psi} &= \frac{1}{2} \sum_{x\in \{0,1\}^n} |\Mcal(x) - \Dcal_\Psi(x)| \\
        &= \frac{1}{2} \sum_{(x_e, x_o)\in S} \left|\frac{\eta}{2^{n-1}} - \Dcal_\Psi(x_e)\right| + \left|\frac{1-\eta}{2^{n-1}} - \Dcal_\Psi(x_o)\right|.
    \end{align*}
    Since $\Dcal_\Psi$ is uniform, each pair $(x_e, x_o) \in S$ either satisfies $\Dcal_\Psi(x_e) = \Dcal_\Psi(x_o)$ or one of $\Dcal_\Psi(x_e), \Dcal_\Psi(x_o)$ is zero. Again breaking into cases,
    \[
        2^{n-1}\pbra{\left|\frac{\eta}{2^{n-1}} - \Dcal_\Psi(x_e)\right| + \left|\frac{1-\eta}{2^{n-1}} - \Dcal_\Psi(x_o)\right|} \ge \begin{cases}
            \eta & \text{ if } \Dcal_\Psi(x_e) = 0, \\
            1-\eta & \text{ if } \Dcal_\Psi(x_o) = 0, \\
            |2\eta-1| & \text{ if } \Dcal_\Psi(x_e) = \Dcal_\Psi(x_o).
        \end{cases} 
    \]
    In other words,
    \[
        \left|\frac{\eta}{2^{n-1}} - \Dcal_\Psi(x_e)\right| + \left|\frac{1-\eta}{2^{n-1}} - \Dcal_\Psi(x_o)\right| \ge \frac{\zeta}{2^{n-1}},
    \]
    so
    \begin{equation}\label{close special cor:eq:1}
        \tvdist{\Mcal - \Dcal_\Psi} \ge \frac{1}{2} \cdot 2^{n-1} \cdot \frac{\zeta}{2^{n-1}} = \frac{1}{2}\tvdist{\Mcal - \Dcal}.
    \end{equation}
    Therefore,
    \begin{align*}
        \tvdist{\Dcal_\Psi - \Dcal} &\le \tvdist{\Dcal_\Psi - \Mcal} + \tvdist{\Mcal - \Dcal} \\
        &\le 3\tvdist{\Mcal - \Dcal_\Psi} \tag{by \Cref{close special cor:eq:1}} \\
        &\le 3\pbra{\tvdist{\Mcal - f(\Ucal^m)} + \tvdist{f(\Ucal^m) - \Dcal_\Psi}}.
    \end{align*}
    Combining with \Cref{close special cor:eq:2} and \Cref{eq:lem:result_via_close_to_mixture} yields the result.
\end{proof}

Ideally, we would want to prove \Cref{thm:central_regime} by simply applying \Cref{close special cor} with $\delta = O(\tvdist{f(\Ucal^m) - \Dcal_\Psi})$. However, if $\delta$ is too small, $n$ may not be large enough (in terms of $\delta$ and $d$) to satisfy the assumption of \Cref{close special cor}. Thus, we can only deduce that \Cref{thm:central_regime} holds unless $\tvdist{f(\Ucal^m) - \Dcal}$ is less than a small constant. Hence, we will turn our attention to this special case. The two main consequences of such an assumption can be encapsulated in the following claims.

\begin{claim}\label{constant parity lem}
Let $f\colon\bin^m\to\bin^n$ be a $d$-local function. If $\tvdist{f(\Ucal^m) - \Dcal} < 2^{-d}$ for some $\Dcal \in \{\Deven,\Dodd\}$, then $\supp{f(\Ucal^m)}\subseteq\supp{\Dcal}$.
\end{claim}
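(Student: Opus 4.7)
The plan is to translate the containment $\supp{f(\Ucal^m)}\subseteq\supp{\Dcal}$ into a statement about a low-degree $\Fbb_2$-polynomial and then invoke the minimum-weight bound for Reed--Muller codes. Let $c\in\cbra{0,1}$ denote the parity of $\Dcal$ (so $c=0$ if $\Dcal=\Deven$ and $c=1$ if $\Dcal=\Dodd$), and define
$$
h(x) \coloneqq c\oplus\bigoplus_{i=1}^n f_i(x).
$$
Strings in $\supp{\Dcal}$ are precisely those $y\in\bin^n$ with $|y|\equiv c \pmod 2$, so $\supp{f(\Ucal^m)}\subseteq\supp{\Dcal}$ is equivalent to asserting $h\equiv 0$ as a function on $\bin^m$.

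Since $f$ is $d$-local, each $f_i$ depends on at most $d$ input bits and thus admits a multilinear $\Fbb_2$-representation of degree at most $d$; summing over $i$ (and adding the constant $c$) shows that $h$ itself is an $\Fbb_2$-polynomial of degree at most $d$. Applying the event-based characterization in \Cref{fct:tvdist} to the wrong-parity event $\cbra{y\in\bin^n : |y|\not\equiv c \pmod 2}$, which has mass $0$ under $\Dcal$, we obtain
$$
\Pr_{x\sim\Ucal^m}\sbra{h(x) = 1} = \Pr_{y\sim f(\Ucal^m)}\sbra{|y|\not\equiv c \pmod 2} \le \tvdist{f(\Ucal^m) - \Dcal} < 2^{-d}.
$$

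To conclude, I would invoke the classical minimum-distance bound for Reed--Muller codes (equivalently, a Schwartz--Zippel-type inequality over $\Fbb_2$, provable by a short induction on $d$): any nonzero polynomial in $\Fbb_2[x_1,\dots,x_m]$ of degree at most $d$ evaluates to $1$ on at least a $2^{-d}$ fraction of inputs. The strict inequality above is incompatible with $h$ being nonzero, so $h\equiv 0$, which yields the desired containment. No essential obstacle arises here; the entire argument hinges on the observation that the parity $\bigoplus_i f_i$ has $\Fbb_2$-degree at most $d$ by locality, after which the Reed--Muller bound closes the gap between the tight threshold $2^{-d}$ and the assumed tolerance.
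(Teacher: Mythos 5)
Your proposal is correct and takes essentially the same route as the paper: the paper observes that $p(x) = |f(x)| \bmod 2$ is a degree-$d$ $\Fbb_2$-polynomial and that a nonconstant such polynomial attains each value with probability at least $2^{-d}$, which is exactly the Reed--Muller minimum-weight fact you invoke after shifting by the constant $c$. The only cosmetic difference is that you bake the target parity into $h = c \oplus p$ so that the conclusion is ``$h\equiv 0$'' rather than ``$p$ is the correct constant,'' but the substance is identical.
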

\begin{proof}
Observe that we may write $|f(x)| \bmod 2$ as an $\mathbb{F}_2$-polynomial $p$ (of the input bits) of degree at most $d$. If $p$ is not constant, then it must take each possible value with probability at least $2^{-d}$, contradicting our distance assumption.
\end{proof}

We call a distribution over $\bin^n$ \emph{k-wise independent} if the projection onto any $k'\le k$ indices is uniformly distributed over $\bin^{k'}$.

\begin{claim}\label{moment matching lemma}
Let $f\colon\bin^m\to\bin^n$ be a $d$-local function. If $\tvdist{f(\Ucal^m) - \Dcal} < 2^{-kd}$ for some $\Dcal \in \{\Deven,\Dodd,\Dall\}$ and positive integer $k<n$, then $f(\Ucal^m)$ is a $k$-wise independent distribution.
\end{claim}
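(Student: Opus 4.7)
The plan is to reduce $k$-wise independence to parity-bias conditions via the standard Fourier characterization: a distribution on $\bin^n$ is $k$-wise independent if and only if for every non-empty $T \subseteq [n]$ with $|T| \le k$, the parity $q_T(y) := \bigoplus_{i \in T} y_i$ has mean exactly $1/2$. Arguing the contrapositive, I would suppose $f(\Ucal^m)$ is not $k$-wise independent, fix a witnessing $T$ with $1 \le |T| \le k$ under which $q_T$ is biased, and aim to show this forces $\tvdist{f(\Ucal^m) - \Dcal} \ge 2^{-kd}$, contradicting the hypothesis.

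\textbf{Step 1 (Parities are unbiased under $\Dcal$).} For $\Dcal = \Dall$ this is immediate. For $\Dcal \in \{\Deven, \Dodd\}$, the assumption $|T| \le k < n$ guarantees that $T$ is a proper non-empty subset of $[n]$, so I can pick some $i \in T$ and some $j \in [n] \setminus T$. The involution $y \mapsto y \oplus e_i \oplus e_j$ preserves $|y| \bmod 2$ (hence fixes $\supp{\Dcal}$ and acts measure-preservingly for the uniform distribution on it) while flipping $q_T(y)$. This pairs up the zero and one preimages of $q_T$ inside $\supp{\Dcal}$, so $q_T$ is unbiased under $\Dcal$.

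\textbf{Step 2 (Quantization of the bias under $f(\Ucal^m)$).} Since $f$ is $d$-local, each output bit $f(x)_i$ depends on at most $d$ input coordinates, so $q_T \circ f$ depends on at most $|T|\cdot d \le kd$ input coordinates. Therefore $\Pr_{x \sim \Ucal^m}[q_T(f(x)) = 1]$ is an integer multiple of $2^{-kd}$, and any deviation of this probability from $1/2$ must be at least $2^{-kd}$.

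\textbf{Step 3 (Data processing to conclude).} Applying the deterministic map $q_T$ cannot increase total variation distance, so
$$
\tvdist{f(\Ucal^m) - \Dcal} \;\ge\; \tvdist{q_T(f(\Ucal^m)) - q_T(\Dcal)}.
$$
The right-hand side is the total variation distance between a Bernoulli variable whose bias from $1/2$ is at least $2^{-kd}$ (by Step 2) and an unbiased Bernoulli variable (by Step 1); this is exactly the absolute bias, and so is at least $2^{-kd}$, contradicting $\tvdist{f(\Ucal^m) - \Dcal} < 2^{-kd}$. The main obstacle is really only Step 1 for $\Deven$ and $\Dodd$: without the $k < n$ assumption the case $T = [n]$ would make $q_T$ deterministic under these distributions and the argument would break, which is precisely where the hypothesis $k < n$ is needed; the bit-flip involution deals with this cleanly, and the rest is routine.
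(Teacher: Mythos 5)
Your proof is correct. It reaches the conclusion by a slightly different route than the paper: you pass through the standard Fourier characterization of $k$-wise independence (every non-empty parity on at most $k$ coordinates is unbiased), verify that parities of size $<n$ are unbiased under $\Deven,\Dodd$ via a bit-flip involution, and then apply the data processing inequality to the single-bit map $q_T$ with a $2^{-kd}$-granularity argument. The paper instead applies the data processing inequality directly to the restriction map to a $k$-coordinate set $T$: it observes that $\Dcal[T]$ is exactly uniform on $\bin^T$ (since $\Deven,\Dodd,\Dall$ are all $(n-1)$-wise independent, which is where the hypothesis $k<n$ is used), and that every probability of $f(\Ucal^m)[T]$ is a multiple of $2^{-kd}$, forcing $\tvdist{f(\Ucal^m)[T]-\Dcal[T]}$ to be a multiple of $2^{-kd}$ and hence zero. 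Both arguments share the same skeleton (data processing plus quantization from locality); yours reduces to one-dimensional Bernoulli comparisons at the cost of invoking the Fourier equivalence, while the paper's stays with the multi-coordinate marginal and needs only the observation that TV distance inherits the $2^{-kd}$-granularity. Either version implicitly requires $d\ge 1$ so that $1/2$ is itself a multiple of $2^{-kd}$, which is fine in the paper's applications.
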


\begin{proof}
For any set $T \subseteq [n]$ of coordinates of size $k$, the data processing inequality implies 
\begin{equation}\label{eq:granularity_upper}
    2^{-kd} > \tvdist{f(\Ucal^m) - \Dcal} \ge \tvdist{f(\Ucal^m)[T] - \Dcal[T]}.
\end{equation}
Note that $\Dcal[T]$ is the uniform distribution on $\{0,1\}^T$. On the other hand, the outputs of $f$ on $T$ only depend on at most $kd$ input bits, so the probability that $f(\Ucal^m)[T]$ assigns to any given string must be a multiple of $2^{-kd}$. Thus, $\tvdist{f(\Ucal^m)[T] - \Dcal[T]}$ is also a multiple of $2^{-kd}$. Combining this observation with \Cref{eq:granularity_upper} yields $f(\Ucal^m)[T] = \Dcal[T]$, concluding the proof.
\end{proof}

Using these claims, we will show that the distribution on weights of our sampled distribution $|f(\Ucal^m)|$ is ``missing'' a noticeable amount of mass around $n/2$. For this, we record the observation that we can write $|f(\Ucal^m)|(x) = |\Dcal|(x) + \kappa A(x) - \kappa S(x)$ for some disjoint distributions $A,S$ and $\kappa \coloneqq \tvdist{|f(\Ucal^m)| - |\Dcal|} \le \tvdist{f(\Ucal^m) - \Dcal}$. 

\begin{proposition}\label{prop:mass_from_S}
Let $f\colon \bin^m \to \bin^n$ be a $d$-local function with $n$ sufficiently large (in terms of $d$). 
Suppose $\tvdist{f(\Ucal^m) - \Dcal} = \lambda$ for some $\Dcal \in \{\Deven,\Dodd,\Dall\}$ and $\lambda$ less than a sufficiently small constant (in terms of $d$). 
Moreover, let $A,S$ be distributions with disjoint support satisfying $|f(\Ucal^m)| = |\Dcal| + \kappa A - \kappa S$ for some $\kappa \le \lambda$. 
If $\tvdist{f(\Ucal^m) - \Dcal_\Psi} \le O(\lambda)$ for some sufficiently small implicit constant and $\Psi\subseteq \cbra{0,1,\dots, n}$, then
\begin{enumerate}
    \item $\kappa = \Omega(\lambda)$, and

    \item $S$ assigns at least a constant fraction of its mass to weights within $O(\sqrt{n})$ of $n/2$.
\end{enumerate} 
\end{proposition}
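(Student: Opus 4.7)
The plan is to derive both parts from a single pointwise inequality combined with \Cref{lem:distance_to_sym} and a binomial tail bound. The key pointwise fact, used throughout, is that $\kappa S(s) \le |\Dcal|(s)$ for every weight $s$: since $A$ and $S$ have disjoint support, at any $s$ where $S(s) > 0$ we must have $A(s) = 0$, and nonnegativity of $|f(\Ucal^m)|(s) = |\Dcal|(s) - \kappa S(s)$ yields the bound. The same disjointness implies the convenient identity $\tvdist{|f(\Ucal^m)| - |\Dcal|} = \kappa$.

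For part~1, I would invoke \Cref{lem:distance_to_sym} twice, using that both $\Dcal$ and $\Dcal_\Psi$ are symmetric. Applied to $(f(\Ucal^m), \Dcal_\Psi)$, the lemma bounds $\tvdist{f(\Ucal^m) - f(\Ucal^m)_\sym}$ by a small multiple of $\lambda$, where the multiplier can be made arbitrarily small by choosing the implicit constant in the hypothesis $\tvdist{f(\Ucal^m) - \Dcal_\Psi} \le O(\lambda)$ small relative to the absolute constants coming out of \Cref{lem:distance_to_sym}. Applied to $(f(\Ucal^m), \Dcal)$ and using the identity above, the lemma rewrites $\lambda = \Theta(\kappa + \tvdist{f(\Ucal^m) - f(\Ucal^m)_\sym})$. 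Combining the two estimates and rearranging yields $\kappa = \Omega(\lambda)$.

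For part~2, I would sum the pointwise bound over the tail $\cbra{s : |s - n/2| > C\sqrt n}$ to obtain
\[
    \kappa \cdot \Pr_S\sbra{|X - n/2| > C\sqrt n}
    \le \Pr_{|\Dcal|}\sbra{|X - n/2| > C\sqrt n}
    \le 4 e^{-2C^2},
\]
where the final inequality is \Cref{fct:hoeffding} applied to the binomial $B(n,1/2)$, noting that conditioning on a single parity (as in $\Deven$ or $\Dodd$) at most doubles pointwise masses and hence tail probabilities. Dividing by $\kappa = \Omega(\lambda)$ from part~1 and choosing $C$ a sufficiently large constant depending on $\lambda$ (itself a constant in terms of $d$) makes the resulting tail probability at most $1/2$, so at least a constant fraction of $S$'s mass lies within $C\sqrt n = O(\sqrt n)$ of $n/2$.

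I do not anticipate any substantial obstacle; the whole argument is a short combination of the pointwise bound $\kappa S \le |\Dcal|$, \Cref{lem:distance_to_sym}, and a standard binomial tail bound. The only delicate point is the constant bookkeeping in part~1: the implicit constant in the hypothesis $\tvdist{f(\Ucal^m) - \Dcal_\Psi} \le O(\lambda)$ must be chosen small enough relative to the $\Theta$-constants of \Cref{lem:distance_to_sym} so that the rearrangement indeed yields $\kappa = \Omega(\lambda)$.
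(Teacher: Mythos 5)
Your argument for Part~1 is correct and essentially the same as the paper's: apply \Cref{lem:distance_to_sym} to $(f(\Ucal^m),\Dcal)$ and to $(f(\Ucal^m),\Dcal_\Psi)$, use the hypothesis $\tvdist{f(\Ucal^m)-\Dcal_\Psi}\le O(\lambda)$ (with a small enough implicit constant) to force $\tvdist{f(\Ucal^m)-f(\Ucal^m)_\sym}$ to be a small fraction of $\lambda$, and conclude $\kappa=\Omega(\lambda)$.

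Part~2, however, has a genuine gap, and it stems from a misreading of the hypothesis. You parenthetically treat $\lambda$ as ``a constant in terms of $d$,'' but the hypothesis only says $\lambda$ is \emph{at most} a small constant depending on $d$; it is allowed to be arbitrarily small, and in the application (proof of \Cref{thm:central_regime}) it genuinely is. Your pointwise bound $\kappa S(s)\le|\Dcal|(s)$ combined with Hoeffding gives
\[
\Pr_{S}\!\bigl[|X-n/2|>C\sqrt n\bigr]\;\le\;\frac{4e^{-2C^2}}{\kappa}\;=\;O\!\left(\frac{e^{-2C^2}}{\lambda}\right),
\]
so to make this at most $1/2$ you must take $C=\Theta(\sqrt{\log(1/\lambda)})$. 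That means the interval has width $\Theta(\sqrt{n\log(1/\lambda)})$, not $O(\sqrt n)$ with an absolute constant. This weakening is fatal for the intended use: in the proof of \Cref{thm:central_regime}, the paper needs $|\Dcal_\Psi|(y_i)\ge\Omega(1/\sqrt n)$ for $y_i$ in the interval, which relies on the interval having width $O(\sqrt n)$ with an absolute implicit constant; with width $\Theta(\sqrt{n\log(1/\lambda)})$ the binomial density there is only $\lambda^{\Theta(1)}/\sqrt n$, and the final lower bound degrades to $\Omega(\lambda^{1+\Theta(1)})$, which no longer contradicts $\tvdist{f(\Ucal^m)-\Dcal_\Psi}=O(\lambda)$. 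The paper's proof avoids this by combining the pointwise bound with a Chebyshev-polynomial argument: by \Cref{moment matching lemma}, $W$ matches $\Theta(\log(1/\lambda)/d)$ moments with $|\Dcal|$, hence $\E[p(A)]=\E[p(S)]$ for a carefully chosen low-degree Chebyshev-based $p$ concentrated on $|y|\le O(1)$; one then separately shows $A$ puts $\Omega(1)$ mass near $n/2$, and uses hypercontractivity (\Cref{lem:hypercontractivity}) together with the pointwise bound to control the tail contribution of $p(S)$ beyond $|y|\ge r$. It is precisely this moment-matching structure --- invisible to a bare Hoeffding tail bound --- that buys the $\lambda$-independent constant in the $O(\sqrt n)$ window.
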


Note that if the assumption $\tvdist{f(\Ucal^m) - \Dcal_\Psi} \le O(\lambda)$ does not hold, there is nothing left to prove (i.e., the conclusion of \Cref{thm:central_regime} is already satisfied).

\begin{remark}
    In the following proof, we will want to apply hypercontractivity (\Cref{lem:hypercontractivity}) for expectations over any $\Dcal \in \cbra{\Deven, \Dodd, \Dall}$ (as opposed to over $\Ucal^n$). Observe, however, that each of these distributions is $(n-1)$-wise independent. Thus, $\E[p(\Ucal^n)] = \E[p(\Dcal)]$ for any polynomial $p$ of degree at most $n-1$. In particular, \Cref{lem:hypercontractivity} still holds when $q$ is even and $q\cdot\deg(p) < n$.

    Similarly, we will need to apply \Cref{fct:hoeffding} for $|\Dcal|$. In the case of $\Dcal = \Deven$, for example, we observe that
    \[
        \Pr_{x \sim |\Dcal|}[x = z] = \Pr_{x \sim |\Ucal^n|}[x = z \big| |x| \text{ is even}] \le \frac{\Pr_{x \sim |\Ucal^n|}[x = z]}{\Pr_{x \sim |\Ucal^n|}[|x| \text{ is even}]} = 2\Pr_{x \sim |\Ucal^n|}[x = z].
    \]
    Thus, we may use \Cref{fct:hoeffding} up to some small constant loss.
\end{remark}

\begin{proof}[Proof of \Cref{prop:mass_from_S}]
For clarity, let $W = |f(\Ucal^m)|$. To prove the first part of the claim, we note that by \Cref{lem:distance_to_sym},
\begin{align*}
    \lambda = \tvdist{f(\Ucal^m) - \Dcal} 
    &= \Theta\pbra{\tvdist{f(\Ucal^m) - f(\Ucal^m)_\sym} + \tvdist{W - |\Dcal|}} \\
    &= \Theta(\tvdist{f(\Ucal^m) - f(\Ucal^m)_\sym}+\kappa).
\end{align*}
This means that either $\kappa = \Omega(\lambda)$ or $\tvdist{f(\Ucal^m) - f(\Ucal^m)_\sym} = \Omega(\lambda)$. 
However, if the latter is true, then \Cref{lem:distance_to_sym} shows that $\tvdist{f(\Ucal^m) - \Dcal_\Psi} \ge \Omega(\lambda)$. 
This contradicts our assumption that $\tvdist{f(\Ucal^m) - \Dcal_\Psi} \le O(\lambda)$ for a sufficiently small implicit constant. 
Henceforth, we will assume 
\begin{equation}\label{eq:tvd_lb_W_minus_Dcal}
    \kappa = \Theta(\lambda).
\end{equation}

We now turn to showing $S$ assigns a constant fraction of its mass to weights close to $n/2$. First note that $\supp{W} \subseteq \supp{|\Dcal|}$ by \Cref{constant parity lem}. Therefore, we may assume that $\Psi \subseteq \supp{|\Dcal|}$ as well, since replacing $\Psi$ with $\Psi \cap \supp{|\Dcal|}$ will not increase the distance between $\Dcal_\Psi$ and $f(\Ucal^m)$.

In particular, one should view $|\Dcal_\Psi|$ as the result of redistributing mass from weights in $\supp{|\Dcal|}\setminus \Psi$ evenly over the elements in $\Psi$. Since $W$ is close to $|\Dcal_\Psi|$, we can roughly view $S$ as corresponding to the missing weights, and $A$ as corresponding to the small ``boost'' from the redistributed mass. A priori, it is difficult to reason about $S$'s distribution, as we have limited information about where the missing weights are. However, 
\begin{equation}\label{eq:DvsDpsi}
    \gamma \coloneqq \tvdist{\Dcal_\Psi - \Dcal} \le \tvdist{\Dcal_\Psi - f(\Ucal^m)} + \tvdist{f(\Ucal^m) - \Dcal} \le O(\lambda)
\end{equation}
is less than a small constant, so many weights around $n/2$ must be included in $\Psi$, which will receive further weight from $A$. Therefore, we will proceed by arguing $A$ assigns substantial mass to weights in a small interval around $n/2$, and then convert that conclusion to one about $S$.

We begin by observing that all but a small constant fraction of the support of $|\Dcal|$ in $I = [n/2 - \sqrt{n}/10,n/2 + \sqrt{n}/10]$ are in $\Psi$. Indeed by \Cref{fct:individual_binom} and \Cref{fct:entropy}, each element of $\supp{|\Dcal|} \cap (I \setminus \Psi)$ contributes at least
\[
    2^{-n}\binom{n}{\frac{n}{2} + \frac{\sqrt{n}}{10}} \ge \frac{2^{n\pbra{\Hcal\pbra{\frac{1}{2}-\frac{1}{10\sqrt{n}}} - 1}}}{\sqrt{2n\pbra{1 - \frac{1}{25n}}}} \ge \frac{1}{2\sqrt{n}}
\]
to the distance $\tvdist{\Dcal_\Psi - \Dcal}$, so by \Cref{eq:DvsDpsi} such elements can only consist of an $O(\lambda)$-fraction of $I$.

Next, note that
\[
    1 = \sum_{s \in \Psi} |\Dcal_\Psi|(s) = \gamma + \sum_{s \in \Psi}|\Dcal|(s),
\]
so $|\Dcal_\Psi|(s) = |\Dcal|(s)/(1-\gamma)$ for all $s\in\Psi$. This implies
\begin{align}
    \sum_{x\in I}\max\cbra{|\Dcal_\Psi|(x) - |\Dcal|(x), 0} &= \sum_{x \in I \cap \Psi} \frac{|\Dcal|(x)}{1-\gamma} - |\Dcal|(x) \notag \\
    &\ge \gamma\sum_{x \in I \cap \Psi}|\Dcal|(x) \notag \\
    &\ge \tvdist{\Dcal_\Psi - \Dcal}\cdot \Omega(1-\lambda) \notag \\
    &\ge \Omega(\tvdist{f(\Ucal^m) - \Dcal} - \tvdist{f(\Ucal^m) - \Dcal_\Psi}) \tag{by triangle inequality} \\
    &\ge \Omega(\lambda). \label{eq:max_diff_LB}
\end{align}
Now recall we originally assumed for some sufficiently small implicit constant that
\begin{align*}
    O(\lambda) &= \tvdist{f(\Ucal^m) - \Dcal_\Psi} \\
    &\ge \tvdist{W - |\Dcal_\Psi|} \tag{by data processing inequality} \\
    &\ge \sum_{x\in I}\max\cbra{|\Dcal_\Psi|(x) - W(x), 0} \\
    &\ge \sum_{x\in I}\max\cbra{|\Dcal_\Psi|(x) - |\Dcal|(x) - \kappa A(x), 0} \\
    &\ge \sum_{x\in I}\max\cbra{|\Dcal_\Psi|(x) - |\Dcal|(x), 0} - \kappa A(x) \\
    &\ge \Omega(\lambda) - \sum_{x\in I}\kappa A(x) \tag{by \Cref{eq:max_diff_LB}}.
\end{align*}
Therefore, $\sum_{x\in I}\kappa A(x) = \Omega(\lambda)$. Moreover, $A$ must assign constant mass to $I$ by \Cref{eq:tvd_lb_W_minus_Dcal}. 

Our strategy will be to leverage this fact to show $S$ must also assign a constant fraction of its mass to an interval of width $O(\sqrt{n})$ around $n/2$. Let $r$ be the largest odd integer so that $2r < \log(1/\lambda)/d$. Combining with \Cref{eq:tvd_lb_W_minus_Dcal}, we record the following equation for later convenience:
\begin{equation}\label{eq:1overlambdabound}
    \frac{1}{\kappa} = \Theta\pbra{\frac{1}{\lambda}} = \exp\pbra{O(dr)}.
\end{equation}
By \Cref{moment matching lemma}, we have that $W$ matches its first $2r$ moments with the binomial distribution $|\Ucal^n|$, or equivalently, with $|\Dcal|$. Thus, any polynomial $p$ of degree at most $2r$ satisfies $\E[p(W)] = \E[p(|\Dcal|)]$. In particular by the definition of $W$ and the fact that for any distribution $\Pcal$ and function $q$,  $\E_{x\sim \Pcal}[q(x)]$ is a linear functional on $\Pcal$, we have
\begin{equation}\label{eq:matching_moments_polys}
    \E[p(A)] = \E[p(S)].
\end{equation}

Suppose momentarily we were able to set $p$ to be the indicator for the interval $I$. Then the proposition 
would follow, since our earlier deduction about $A$'s distribution gives $\E[p(A)] = \Pr[A \in I] = \Omega(1)$, so \Cref{eq:matching_moments_polys} implies $\Pr[S \in I] = \Omega(1)$. While this scenario is of course unrealistic, as the indicator for $I$ will have an unaffordably high degree, the overall idea motivates our strategy of working with a carefully chosen polynomial $p$. 

We morally want to choose $p$ to approximate the indicator for $I$. However, we must be a bit discerning in our choice; polynomials blow-up in their tails, so any choice of $p$ will inevitably be a poor approximation over the entire domain. We can distill the desired behavior for our low-degree polynomial $p$ into the following three constraints:
\begin{enumerate}
    \item \label{itm:high_level_cheby_1} $p$ puts substantial mass on weights around $n/2$,

    \item \label{itm:high_level_cheby_2} $p$ puts minimal mass on weights further from the center, and

    \item \label{itm:high_level_cheby_3} $p$'s inevitable blow-up occurs as far into the tails as possible.
\end{enumerate}

Ultimately, we set $p(x) = (T_r(y/r)/y)^2$ where $y\coloneqq(x-n/2)/\sqrt{n}$ and $T_r$ is the Chebyshev polynomial (of the first kind) $T_r(\cos(\theta)) = \cos(r\theta)$. Intuitively, Chebyshev polynomials remain relatively flat on the interval $[-1,1]$, which allows us to (partially) achieve \Cref{itm:high_level_cheby_2}. The division by $r$ ``stretches'' the flat region to further guarantee \Cref{itm:high_level_cheby_2} and to address \Cref{itm:high_level_cheby_3} by pushing the blow-up further into the tails. Finally, the additional $1/y$ factor and the choice of the largest possible (affordable) degree satisfies \Cref{itm:high_level_cheby_1}. We formalize these properties below.

\begin{fact}\label{p facts}
We have that:
\begin{enumerate}
\item \label{p facts:1} $p(x)$ is a polynomial of degree at most $2r$.
\item \label{p facts:2} $p(x) \geq 1/2$ when $|y| \le 1/10.$
\item \label{p facts:3} $p(x) \leq \min(1,1/y^2)$ when $|y|\leq r$.
\item \label{p facts:4} $p(x) \leq O(y/r)^{2r}$ when $|y| \geq r$.
\item \label{p facts:5} $p(x) \geq 0$ for all $x$.
\end{enumerate}
\end{fact}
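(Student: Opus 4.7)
My plan is to leverage the classical trigonometric and hyperbolic representations of the Chebyshev polynomial $T_r$: namely $T_r(\cos\theta)=\cos(r\theta)$ on $[-1,1]$ and $T_r(\cosh\eta)=\cosh(r\eta)$ on $[1,\infty)$. Property~(\ref{p facts:1}) will follow immediately from $r$ being odd: $T_r$ is then an odd polynomial, so $T_r(y/r)/y$ is a genuine polynomial in $y$ of degree $r-1$, and hence $p$ has degree at most $2r-2\le 2r$ in $y$ (equivalently in $x$). Property~(\ref{p facts:5}) is immediate since $p$ is the square of a real polynomial evaluated at a real input.

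For properties~(\ref{p facts:2}) and~(\ref{p facts:3}), which both concern the regime $|y|\le r$, I would substitute $y/r=\sin\phi$ for $\phi\in[-\pi/2,\pi/2]$. Plugging $\theta=\pi/2-\phi$ into the defining identity and using $\cos(r\pi/2)=0$ for odd $r$ yields $T_r(y/r)=\pm\sin(r\phi)$, whence
\[
\frac{T_r(y/r)}{y}=\pm\frac{\sin(r\phi)}{r\sin\phi}=\pm\frac{U_{r-1}(\cos\phi)}{r},
\]
where $U_{r-1}$ is the Chebyshev polynomial of the second kind. The classical bound $|U_{r-1}|\le r$ on $[-1,1]$ immediately gives $|T_r(y/r)/y|\le 1$, which combined with the trivial $|T_r(y/r)|\le 1$ on $|y/r|\le 1$ establishes~(\ref{p facts:3}). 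For~(\ref{p facts:2}), when $|y|\le 1/10$ I would use Taylor expansions to write $r\phi=r\arcsin(y/r)=y+O(y^3/r^2)$ and $\sin(r\phi)=r\phi-(r\phi)^3/6+\cdots$, from which $\sin(r\phi)/y=1-y^2/6+O(y^2/r^2)$. For $|y|\le 1/10$ this is at least $0.99$ in absolute value, so $p(x)\ge 0.98\ge 1/2$.

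For property~(\ref{p facts:4}) in the regime $|y|\ge r$, I would switch to the hyperbolic form: writing $|y|/r=\cosh\eta$ with $\eta\ge 0$, we have $|T_r(y/r)|=\cosh(r\eta)\le e^{r\eta}\le(2|y|/r)^r$ via $e^\eta\le 2\cosh\eta$. Dividing by $y^2\ge 1$ gives $p(x)\le(2|y|/r)^{2r}$, which matches the stated $O(y/r)^{2r}$ bound with the absolute constant absorbed inside the exponent base. I expect the main obstacle to be the sharp $p\le 1$ bound in~(\ref{p facts:3}): the naive $|T_r|\le 1$ estimate only yields $p\le 1/y^2$, which blows up near $y=0$, so one genuinely needs the second-kind Chebyshev identity (or an equivalent Bernstein/Markov-type derivative estimate) in order to control $T_r(y/r)/y$ uniformly as $y\to 0$.
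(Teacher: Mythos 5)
Your proof is correct and follows essentially the same approach as the paper's: the paper also uses the trigonometric substitution $\sin\theta = y/r$ (giving $p = |\sin(r\theta)/(r\sin\theta)|^2$) for items~(\ref{p facts:2}) and~(\ref{p facts:3}), and the paper parametrizes $y/r = (z+z^{-1})/2$ for item~(\ref{p facts:4}), which is your $\cosh\eta$ substitution in disguise. The only cosmetic differences are that the paper bounds $\sin(w)/w \in [0.99,1]$ directly rather than via a Taylor expansion for item~(\ref{p facts:2}), and states the $|\sin(r\theta)/(r\sin\theta)|\le 1$ bound without naming $U_{r-1}$.
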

\begin{proof} ~
\begin{enumerate}
    \item Recall $T_r$ is a polynomial of degree $r$. Since $r$ is odd, $T_r$ is odd, and thus has a root at $y=0$. In particular, $T_r(y/r)/y$ is a polynomial of degree at most $r-1$.

    \item Let $\sin(\theta)=y/r$. Then $p(x) = |\sin(r\theta)/y|^2.$ Note that $|\theta| \le 1/(5r)$ when $|y| < 1/10$, so both $\sin(r\theta)/(r\theta)$ and $\sin(\theta)/\theta$ are in $[0.99,1]$. Thus, 
    $$
    p(x) = |\sin(r\theta)/(r\sin(\theta))|^2 = |\sin(r\theta)/(r\theta)|^2|\sin(\theta)/\theta|^{-2} \geq 1/2.
    $$

    \item Note that $|T_r(x)|\leq 1$ for $|x|\leq 1$ and so $p(x) \leq 1/y^2$. Finally, as above
    $$
    p(x) = |\sin(r\theta)/(r\sin(\theta))|^2 \leq 1.
    $$

    \item If $a=y/r$ we note that $a=(z+z^{-1})/2$ for some real $z$ with $1\leq |z| \leq 2|a|$. Setting $z = e^{i\theta}$, we can write $\cos(\theta) = (z+z^{-1})/2$, so $T_r(a) = T_r((z+z^{-1})/2) = (z^r+z^{-r})/2 \le (2|a|)^r.$ 
    
    \item Note that $p$ is a square. \qedhere
\end{enumerate}
\end{proof}

Recall $A$ must assign constant mass to $I = [n/2 - \sqrt{n}/10,n/2 + \sqrt{n}/10]$. Moreover, \Cref{p facts:2} and \Cref{p facts:5} imply $\E[p(A)] = \Omega(1).$ In particular, for some large enough constant $C$, we have that $\E[p(A)] > 1/C$. Therefore $\E[p(S)]$ is also at least $1/C$ by \Cref{eq:matching_moments_polys}.

We consider the contribution to $\E[p(S)]$ coming from three ranges: $|y| \leq C$, $C \leq |y| \leq r$, and $|y| \geq r$. 
(If $C \ge r$, we ignore the second and third ranges.) 
By \Cref{p facts:3}, the contributions from the first and second ranges are at most $\Pr[|S-n/2| \leq C\sqrt{n}]$ and $1/C^2$, respectively. 
For the third range, recall that $0 \le W(x) = |\Dcal|(x) - \kappa S$ for any $x \in \supp{S}$. 
Thus, $\kappa S$ must assign less mass to each point than $|\Dcal|$ does. 
Therefore this contribution to $\E[p(S)]$ is at most
\begin{align*}
    & (1/\kappa) \E_{x\sim |\Dcal|}[p(x)\mathbbm{1}\{|x-n/2| > r\sqrt{n}\}] \\
    \le \: & (1/\kappa) \E_{x\sim |\Dcal|}[p^2(x)\mathbbm{1}\{|x-n/2| > r\sqrt{n}\}]^{1/2} \Pr_{x\sim |\Dcal|}\sbra{|x-n/2| > r\sqrt{n}}^{1/2} \tag{by Cauchy-Schwarz} \\
    \le \: & (1/\kappa)\E_{x\sim |\Dcal|}\sbra{O\pbra{\frac{|x-n/2|/\sqrt{n}}{r}}^{4r}}^{1/2}\exp(-\Omega(r^2)) \tag{by \Cref{p facts:4} \& \Cref{fct:hoeffding}} \\
    \le \: & (1/\kappa)\cdot r^{-2r} \cdot O(r)^{r}\cdot \exp(-\Omega(r^2)) \tag{by \Cref{lem:hypercontractivity}} \\
    = \: & \exp(O(dr)-\Omega(r^2)) = \exp(-\Omega(r^2)) \tag{by \Cref{eq:1overlambdabound}},
\end{align*}
with the last line holding for $r$ large enough relative to $d$ (which is achievable by enforcing $\lambda$ sufficiently small).
Thus, 
\[
    \frac{1}{C} < \E[p(S)] \le \Pr[|S-n/2| \leq C\sqrt{n}] + \frac{1}{C^2} + \exp(-\Omega(r^2)),
\]
so $S$ must assign at least constant mass to the interval $[n/2-C\sqrt{n},n/2+C\sqrt{n}].$
\end{proof}

We can now prove the main result of this subsection: \Cref{thm:central_regime}. Recall we aim to show that $\tvdist{f(\Ucal^m) - \Dcal} \le O(\tvdist{f(\Ucal^m) - \Dcal})$ for $\Psi$ containing an element close to $n/2$. Our strategy will be to restrict to the small error case via our local limit theorem, and then use the preceding results to enact a pairing argument (as sketched in \Cref{sec:overview}). 
\begin{proof}
    
By \Cref{close special cor}, there exists a distribution $\Dcal \in \cbra{\Deven, \Dodd, \Dall}$ so that
$$
\tvdist{f(\Ucal^m) - \Dcal} \leq O(\tvdist{f(\Ucal^m) -\Dcal_\Psi}) + \delta
$$
for some sufficiently small $\delta > 0$.
The result immediately follows if $\tvdist{f(\Ucal^m) -\Dcal_\Psi} > \delta$, so we may assume $\tvdist{f(\Ucal^m) - \Dcal_\Psi} \le \delta$. In particular, we have $\lambda \coloneqq \tvdist{f(\Ucal^m) - \Dcal}$ is less than some sufficiently small constant. From here, we assume by contradiction that
\begin{equation}\label{eq:contradiction_assumpt}
    \tvdist{f(\Ucal^m) -\Dcal_\Psi} \le O(\lambda)
\end{equation}
for a sufficiently small implicit constant. 

As above, let $W \coloneqq |f(\Ucal^m)| = |\Dcal| + \kappa A - \kappa S$ for $\kappa \coloneqq \tvdist{W - |\Dcal|} \le \lambda$ and $A, S$ some distributions with disjoint support. 
Once again by \Cref{constant parity lem}, we have that $W$ is supported only on the elements in the support of $|\Dcal|$, and we assume without loss of generality that $\Psi \subseteq \supp{|\Dcal|}$. This implies
\begin{enumerate}
    \item\label{itm:supp_A} $A(x) = 0$ for $x \in \supp{|\Dcal|}$,
    \item\label{itm:supp_S} $S(x) = 0$ for $x \not\in \supp{|\Dcal|}$, and
    \item\label{itm:relative_weight_in_psi} $|\Dcal_\Psi|(x) \ge |\Dcal|(x)$ for $x \in \Psi$.
\end{enumerate}

Next, we apply \Cref{prop:mass_from_S} to deduce
\begin{equation}\label{eq:kappa_omega_lambda}
    \lambda \ge \kappa \ge \Omega(\lambda)
\end{equation}
and $S$ assigns at least a constant fraction of its mass to weights within an interval $I$ of width $O(\sqrt{n})$ around $n/2$. In particular, \Cref{itm:supp_S} gives
\begin{equation}\label{eq:S_has_mass_near_middle}
    \sum_{x\in I\cap \supp{|\Dcal|}} S(x) \ge \Omega(1).
\end{equation}
At a high level, we aim to show there exist many nearby pairs $x,y \in I \cap \supp{|\Dcal|}$ with $x \not\in \Psi$ and $y \in \Psi$. Summing over all pairs will produce the desired contradiction $\tvdist{f(\Ucal^m) -\Dcal_\Psi} = \Omega(\lambda)$.

We first show there are many $x\in I \cap \supp{|\Dcal|}$ with $x\not\in \Psi$. Note that $|\Dcal|(x) - W(x) = \kappa S(x)$ for $x\in \supp{|\Dcal|}$ by \Cref{itm:supp_A}. Thus,
\begin{align*}
    \sum_{x\in I\cap \supp{|\Dcal|} \cap \Psi} S(x) &\le  \sum_{x\in I\cap \supp{|\Dcal|} \cap \Psi} \frac{|\Dcal_\Psi|(x) - W(x)}{\kappa} \tag{by \Cref{itm:relative_weight_in_psi}} \\
    &\le \frac{2}{\kappa}\cdot \tvdist{|\Dcal_\Psi| - W} \\
    &\le \frac{2}{\kappa}\cdot \tvdist{\Dcal_\Psi - f(\Ucal^m)} \tag{by data processing inequality} \\
    &\le \frac{2}{\kappa}\cdot O(\lambda) = O(1). \tag{by \Cref{eq:contradiction_assumpt} \& \Cref{eq:kappa_omega_lambda}}
\end{align*}
for some sufficiently small implicit constant. Combining with \Cref{eq:S_has_mass_near_middle} yields
\[
    \sum_{x\in I\cap \supp{|\Dcal|} \cap \Psi^c} S(x) = \Omega(1).
\]
However for any $x \in I \cap \supp{|\Dcal|} \cap \Psi^c$, we have
\[
    S(x) = \frac{|\Dcal|(x)}{\kappa} = \Theta\pbra{\frac{1}{\kappa\sqrt{n}}} = \Theta\pbra{\frac{1}{\lambda\sqrt{n}}}.
\]
In conclusion, it must be the case that there are $\Theta(\lambda \sqrt{n})$ many $x\in I \cap \supp{|\Dcal|}$ with $x\not\in \Psi$.

We consider a procedure where we take these elements one at a time and pair them with the closest unpaired element of $I\cap \Psi$ with the same parity. As each element only has to avoid at most $O(\lambda\sqrt{n})$ elements that are either not in $\Psi$ or have already been paired, we end up with $\Theta(\lambda\sqrt{n})$ disjoint pairs of elements $(x_i,y_i)\in I \cap \supp{|\Dcal|}$ so that $x_i \not\in \Psi$, $y_i\in \Psi$, $|x_i-y_i| = O(\lambda\sqrt{n})$, and $x_i$ and $y_i$ have the same parity. Note for any such $y_i \in I \cap \Psi$, \Cref{fct:individual_binom} and \Cref{fct:entropy} imply
\begin{equation}\label{clm:ys_are_heavy}
    |\Dcal_\Psi|(y_i) \ge 2^{-n}\binom{n}{\frac{n}{2} + O(\sqrt{n})} \ge \frac{2^{n\pbra{\Hcal\pbra{\frac{1}{2}-O\pbra{n^{-1/2}}} - 1}}}{\sqrt{2n\pbra{1 - O\pbra{n^{-1}}}}} \ge \Omega\pbra{\frac{1}{\sqrt{n}}}.
\end{equation}

Now we apply \Cref{continuity prop} to write $W= aE+(1-a)X$ where $a=O(\tvdist{f(\Ucal^m) - \Dcal_\Psi})$ and 
\begin{equation}\label{eq:continuity Ws}
    \big|X(x_i) - X(y_i)\big| = O_d(\lambda/\sqrt{n})
\end{equation}
for each $i$. In particular, we have that
\begin{align*}
\tvdist{f(\Ucal^m) - \Dcal_\Psi} & \geq \tvdist{|f(\Ucal^m)| - |\Dcal_\Psi|} \tag{by data processing inequality} \\
& \geq \frac{1}{2}\sum_i \big|W(x_i) - |\Dcal_\Psi|(x_i)\big| + \big|W(y_i) - |\Dcal_\Psi|(y_i)\big|\\
& \geq \frac{1}{2}\sum_i \big||\Dcal_\Psi|(x_i)-|\Dcal_\Psi|(y_i)\big| - \big|W(x_i) - W(y_i)\big| \tag{by triangle inequality} \\
& \geq \frac{1}{2}\sum_i |\Dcal_\Psi|(y_i) - (1-a)\big|X(x_i) - X(y_i)\big| - a\big|E(x_i) - E(y_i)\big| \tag{since $x_i \not\in \Psi$} \\
& \geq \frac{1}{2}\sum_i \Omega\pbra{\frac{1}{\sqrt{n}}} - \big|X(x_i) - X(y_i)\big| + a\pbra{\big|X(x_i) - X(y_i)\big| - \big|E(x_i) - E(y_i)\big|} \tag{by \Cref{clm:ys_are_heavy}} \\
& \geq \frac{1}{2}\sum_i \Omega\pbra{\frac{1}{\sqrt{n}}} - O_d\pbra{\frac{\lambda}{\sqrt{n}}} - a\sbra{|X(x_i) - E(x_i)| + |X(y_i) - E(y_i)|} \tag{by \Cref{eq:continuity Ws} and triangle inequality} \\
& = \Theta(\lambda\sqrt{n})\cdot\Omega\pbra{\frac{1}{\sqrt{n}}} - a\sum_{i=0}^n W(i)\\
& = \Omega(\lambda) - O\pbra{\tvdist{f(\Ucal^m) - \Dcal_\Psi}}.
\end{align*}
That is, $\tvdist{f(\Ucal^m) - \Dcal_\Psi} = \Omega(\lambda)$, which contradicts our assumption that $\tvdist{f(\Ucal^m) - \Dcal_\Psi} = O(\lambda)$ for a sufficiently small implicit constant.
\end{proof}

\section{Local Limit Theorem}\label{sec:LLT}

In this section, we will prove that any low-depth sampleable distribution that is reasonably close (in total variation distance) to a roughly centered uniform symmetric distribution is also close to a mixture of $\Deven$ and $\Dodd$. More precisely, we show:
\begin{theorem*}[\Cref{LLT Theorem} Restated]
Let $\delta>0$ and $f\colon\{0,1\}^m\rightarrow \{0,1\}^n$ be a $d$-local function with $n$ sufficiently large (in terms of $\delta$ and $d$). Let $\Psi \subseteq \{0,1,2,\ldots,n\}$ be a set containing some element $n(1/2\pm c(d,\delta))$ for some $c(d,\delta)>0$ a small enough function of $d$ and $\delta$. Then there exists a distribution $\Mcal$ which is a mixture of $\Deven$ and $\Dodd$ so that
$$
\tvdist{f(\Ucal^m) - \Mcal} \leq O(\tvdist{f(\Ucal^m) - \Dcal_\Psi}) + \delta.
$$
\end{theorem*}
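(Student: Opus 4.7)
The plan is to combine the two structural properties of $f(\Ucal^m)$ foreshadowed in the overview: the smoothness of the weight distribution provided by \Cref{continuity prop}, and a Kolmogorov-distance bound between $|f(\Ucal^m)|$ and a parity-weighted binomial, which I would prove separately via the tree decomposition. By \Cref{lem:distance_to_sym} applied to the symmetric distribution $\Dcal_\Psi$, the symmetrization error $\tvdist{f(\Ucal^m)-f(\Ucal^m)_\sym}$ is already absorbed into $O(\tvdist{f(\Ucal^m)-\Dcal_\Psi})$; applying the same lemma to any candidate mixture $\Mcal = \eta\Deven + (1-\eta)\Dodd$ shows that it suffices to produce $\eta \in [0,1]$ with $\tvdist{|f(\Ucal^m)|-|\Mcal|} \le O(\tvdist{f(\Ucal^m)-\Dcal_\Psi}) + \delta$. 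The natural choice is $\eta = \Pr\sbra{|f(\Ucal^m)| \equiv 0 \pmod 2}$, reducing the task to bounding, parity class by parity class, the TV distance between the conditional weight distribution of $|f(\Ucal^m)|$ and the corresponding $|\Deven|$ or $|\Dodd|$.

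For the Kolmogorov ingredient, I would implement the inductive tree decomposition sketched in \Cref{sec:overview}, partitioning $\bin^m$ into subcubes whose Type-II leaves have total mass $O(\tvdist{f(\Ucal^m)-\Dcal_\Psi})$ (via \Cref{lem:tvdist_after_conditioning}, since their conditional outputs are far from $\Dcal_\Psi$), and whose Type-I leaves satisfy that the output bits of $f$ are nearly $k$-wise independent with no input affecting too many outputs. Inside each Type-I leaf, \Cref{thm:kwise_ind} lets $k$-wise independence fool the threshold-function family defining the Kolmogorov distance, yielding closeness of the conditional weight distribution to the binomial. The parity randomization theorem (\Cref{parity randomization theorem}) handles the low-degree $\Fbb_2$-polynomial $|f|\bmod 2$, ensuring that conditioning on parity either does essentially nothing (when the polynomial is near-constant inside the leaf) or substitutes a genuinely random bit that leaves the Kolmogorov estimate intact. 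Averaging across the tree yields that, outside a mass of $O(\tvdist{f(\Ucal^m)-\Dcal_\Psi})$, each parity-conditioned weight distribution is within Kolmogorov distance $\delta'$ of the parity-restricted binomial, where $\delta'$ can be made arbitrarily small by taking $n$ large in terms of $d$ and $\delta'$.

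To upgrade Kolmogorov closeness to total variation, I would combine the bound with \Cref{continuity prop}: write $|f(\Ucal^m)| = aE + (1-a)X$ with $a = O(\tvdist{f(\Ucal^m)-\Dcal_\Psi})$ and $|X(x) - X(x+\Delta)| \le O_d(|\Delta|/n)$ for even $\Delta$. The binomial restricted to a parity class is itself $O(1/n)$-smooth between consecutive same-parity integers, so both distributions in each parity class are smooth. Partitioning the effective support (of width $O(\sqrt{n})$ around $n/2$) into blocks of length $L$, the Kolmogorov bound forces the block masses of $X$ and of the target to agree up to $O(\delta')$, while smoothness forces each block's mass to equal $L$ times a representative probability up to $O_d(L^2/n)$. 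This gives pointwise agreement of order $O(\delta'/L) + O_d(L/n)$, and summing over $O(\sqrt{n}/L)$ blocks yields $\tvdist{X - |\Mcal|} = O_d(\sqrt{\delta'})$ after optimizing $L \asymp \sqrt{\delta' n}$; choosing $\delta'$ small enough in terms of $d$ and $\delta$ delivers the required $\delta$-slack. Weighting the two parity classes by $\eta$ and $1-\eta$ and absorbing the contributions from $E$ and from Type-II cubes into $O(\tvdist{f(\Ucal^m)-\Dcal_\Psi})$ then completes the argument.

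The main obstacle is the Kolmogorov ingredient itself, specifically proving the claimed Kolmogorov closeness inside Type-I subcubes: one must pick $k$ large enough that $k$-wise independence fools the CDF-indicator family to tolerance $\delta'$, but small enough that the $k$-dependence in the fooling and tree-lemma constants does not swamp $\delta'$ or force $n$ unacceptably large. Threading the parity randomization theorem through the conditioning is also delicate, because the Kolmogorov bound must survive \emph{after} conditioning on $|f|\bmod 2$. Once the Kolmogorov property is in hand, the final combine step outlined above is essentially a one-parameter smoothing/balancing argument and should proceed routinely.
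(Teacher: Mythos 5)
Your proposal follows essentially the same route as the paper: reduce to the weight distribution via \Cref{lem:distance_to_sym}, prove a parity-aware Kolmogorov bound via the tree decomposition (\Cref{tree lemma}), the $k$-wise independence fooling result, and the parity randomization theorem (this is \Cref{Kol prop} / \Cref{k indep lem}), and then combine with the continuity bound (\Cref{continuity prop}) through a block-smoothing argument over intervals of length $\Theta(\sqrt{\delta' n})$, which matches the paper's parity intervals of width $\Theta(\kappa\sqrt n)$ with $\kappa^2$ playing the role of your Kolmogorov tolerance $\delta'$. The only minor divergences are bookkeeping: the paper restricts the block sum to ``big'' intervals and tracks a $\log(1/\delta'')$ factor in the number of such blocks, and it chooses $\eta$ by averaging the per-leaf parities from \Cref{k indep lem} rather than using the global $\Pr[|f(\Ucal^m)|\equiv 0\pmod 2]$, but these do not change the argument's structure.
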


The proof of this result is in two steps. First, we will show (\Cref{Kol prop}) that the output weight distribution $|f(\Ucal^m)|$ is close in Kolmogorov distance to the binomial distribution, even when restricting the weight's parity. Then we will show (\Cref{continuity prop}) that the distribution on $|f(\Ucal^m)|$ satisfies a continuity property. Together these will imply that the distribution on the weights of $f(\Ucal^m)$ is comparable to those of $\Mcal$. To obtain the final result, we apply the following lemma.

\begin{lemma*}[\Cref{lem:distance_to_sym} Restated]
Let $A$ and $B$ be two distributions on $\{0,1\}^n$ with $B$ symmetric. Then
$$
\tvdist{A-B} = \Theta(\tvdist{A-A_\sym} + \tvdist{|A|-|B|}).
$$
\end{lemma*}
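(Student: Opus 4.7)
The plan is to prove the two directions of the $\Theta$ separately, using only symmetry and convexity properties of total variation distance together with the triangle inequality and data-processing inequality.

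For the upper bound, I would write
\[
    \tvdist{A-B} \leq \tvdist{A - A_\sym} + \tvdist{A_\sym - B}
\]
by the triangle inequality. Since both $A_\sym$ and $B$ are symmetric, on each Hamming slice of weight $w$ they are uniform on that slice, so that $A_\sym(x) = |A|(w)/\binom{n}{w}$ and $B(x) = |B|(w)/\binom{n}{w}$ whenever $|x| = w$. Summing over strings and grouping by weight then yields $\tvdist{A_\sym - B} = \tvdist{|A_\sym| - |B|} = \tvdist{|A| - |B|}$, where the last equality is because permuting coordinates does not change the weight distribution. This gives
\[
    \tvdist{A-B} \leq \tvdist{A - A_\sym} + \tvdist{|A| - |B|}.
\]

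For the lower bound, two ingredients suffice. First, the data-processing inequality applied to the weight function $x \mapsto |x|$ gives $\tvdist{A-B} \geq \tvdist{|A|-|B|}$. Second, since $B$ is symmetric, $\pi(B) = B$ for every permutation $\pi \in S_n$, so $\tvdist{\pi(A) - B} = \tvdist{A - B}$ for every $\pi$. Because $A_\sym = \mathbb{E}_{\pi \in S_n}[\pi(A)]$, convexity of total variation distance yields $\tvdist{A_\sym - B} \leq \tvdist{A - B}$. A triangle inequality then gives $\tvdist{A - A_\sym} \leq \tvdist{A-B} + \tvdist{B - A_\sym} \leq 2\tvdist{A-B}$. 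Adding the two lower bounds gives
\[
    \tvdist{A - A_\sym} + \tvdist{|A| - |B|} \leq 3\tvdist{A-B},
\]
which matches the upper bound up to constants and proves the $\Theta$ statement.

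I do not expect a serious obstacle here: the proof is essentially a bookkeeping exercise combining three standard facts (triangle inequality, data-processing, and convexity). The only subtle point to verify cleanly is that $\tvdist{A_\sym - B} = \tvdist{|A|-|B|}$ when both distributions are symmetric, which follows by grouping the $\ell^1$ sum by Hamming weight as above.
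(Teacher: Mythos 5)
Your proof is correct and follows essentially the same route as the paper: the upper bound is identical (triangle inequality plus the observation that $\tvdist{A_\sym - B} = \tvdist{|A|-|B|}$ for symmetric $B$), and the lower bound combines the same two ingredients (data-processing for the weight map and a triangle-inequality bound on $\tvdist{A-A_\sym}$) to reach the same constant $3$. The only cosmetic difference is that you bound $\tvdist{A_\sym - B} \le \tvdist{A-B}$ via convexity of TV and permutation-invariance of $B$, whereas the paper reuses the equality $\tvdist{A_\sym - B}=\tvdist{|A|-|B|}$ together with data-processing; both are one-line arguments and yield the same inequality.
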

\begin{proof}
First, we prove the upper bound. By the triangle inequality,
$$
\tvdist{A-B} \leq \tvdist{A-A_\sym}+\tvdist{A_\sym - B} = \tvdist{A-A_\sym}+\tvdist{|A| - |B|}.
$$
For the lower bound, we note on the one hand by the data processing inequality that
\[
    \tvdist{A-B} \geq \tvdist{|A|-|B|}.
\]
On the other hand by the triangle inequality,
\[
    \tvdist{A-B} \geq \tvdist{A-A_\sym} - \tvdist{A_\sym - B} \geq \tvdist{A-A_\sym} - \tvdist{|A| - |B|}.
\]
Combining, we find
\begin{align*}
    3\tvdist{A-B} &\geq (\tvdist{A-A_\sym} - \tvdist{|A| - |B|}) + 2 \tvdist{|A| - |B|} \\
    &= \tvdist{A-A_\sym} + \tvdist{|A| - |B|}. \qedhere
\end{align*}
\end{proof}

\subsection{The Kolmogorov Bound}\label{Kol Section}

In this section, we prove the following result.
\begin{proposition}\label{Kol prop}
Let $\delta > 0$ and $f\colon\{0,1\}^m\rightarrow \{0,1\}^n$ be a $d$-local function with $n$ sufficiently large (in terms of $\delta$ and $d$). Let $\Psi \subseteq \{0,1,\ldots,n\}$ be a set containing some element $n(1/2 \pm c(d,\delta))$ for some $c(d,\delta)>0$ a small enough function of $d$ and $\delta$. Then the distribution $f(\Ucal^m)$ can be written as a mixture $aE + (1-a)X$ with $a=O(\tvdist{f(\Ucal^m) -\Dcal_\Psi})$ so that for some $\eta\in [0,1]$ and all $t\in\mathbb{R}$:
$$
\big|\Pr\sbra{|X| > t \textrm{ and } |X| \textrm{ is even}} - \eta \Pr\sbra{|\Ucal^n|>t}\big| = O(\delta),
$$
and
$$
\big|\Pr\sbra{|X| > t \textrm{ and } |X| \textrm{ is odd}} - (1-\eta) \Pr\sbra{|\Ucal^n|>t}\big| = O(\delta).
$$
\end{proposition}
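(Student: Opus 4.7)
The plan is to partition $\bin^m$ via a tree-style decomposition, prove a per-cube Kolmogorov bound on the ``good'' cubes using $k$-wise independence for thresholds and the structure of low-degree $\Fbb_2$-polynomials for parity, and absorb the ``bad'' cubes into the error term $aE$.

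First I would invoke the tree lemma sketched in \Cref{sec:overview} to write $\Ucal^m$, up to negligible mass, as a disjoint mixture $\sum_C w_C \Ucal(C)$ over subcubes $C \subseteq \bin^m$. Each cube is classified as Type-I -- the output bits of $f|_C$ are nearly $k$-wise independent for a parameter $k = k(d,\delta)$ chosen large enough, with no input bit affecting too many outputs -- or Type-II -- $f(\Ucal(C))$ has total variation distance $\Omega(1)$ from $\Dcal_\Psi$. Let $a$ be the total mass of Type-II cubes. A hybrid/union-bound argument in the style of \Cref{lem:tvdist_after_conditioning} gives $a = O(\tvdist{f(\Ucal^m) - \Dcal_\Psi})$. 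Setting $E$ to be the mixture of $f(\Ucal(C))$ over Type-II cubes and $X$ the renormalized mixture over Type-I cubes yields $f(\Ucal^m) = aE + (1-a)X$.

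Next I would prove the per-cube bound. Fix a Type-I cube $C$. The parity $|f(x)| \bmod 2$ is an $\Fbb_2$-polynomial $p$ of degree at most $d$. By the parity randomization theorem, one can condition on a small set of additional input bits so that $p$ becomes either (a) a constant $b_C \in \{0,1\}$, in which case we set $\eta_C = b_C$, or (b) a balanced function essentially independent of the remaining output bits of interest, in which case we set $\eta_C = 1/2$. The additional conditioning only costs $O(d)$ in the independence parameter, so the remaining outputs are still $k'$-wise independent for a large $k'$, and \Cref{thm:kwise_ind} ($k$-wise independence fools threshold functions) gives $|\Pr[|f(\Ucal(C))| > t] - \Pr[|\Ucal^n| > t]| = O(\delta)$ uniformly in $t$. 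Combining yields
\[
    \big|\Pr[|f(\Ucal(C))| > t \text{ and even}] - \eta_C \Pr[|\Ucal^n| > t]\big| = O(\delta),
\]
and the analogous odd-parity bound with $1 - \eta_C$. Aggregating with $\eta := \sum_{C \text{ Type-I}} (w_C/(1-a)) \eta_C \in [0,1]$, linearity and the triangle inequality give
\[
    \big|\Pr[|X| > t \text{ and even}] - \eta \Pr[|\Ucal^n| > t]\big| \le \sum_{C \text{ Type-I}} \frac{w_C}{1-a} \cdot O(\delta) = O(\delta),
\]
and the odd-parity analog, completing the proof.

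The main obstacle is composing the two per-cube tools without blowing up the error. We need $k = k(d,\delta)$ large enough that (i) \Cref{thm:kwise_ind} gives $O(\delta)$ error for threshold events, and (ii) after conditioning away $O(d)$ bits to freeze the parity polynomial, the residual $(k - O(d))$-wise independence still suffices. The assumption that $\Psi$ contains some element within $c(d,\delta)n$ of $n/2$ is what powers the tree lemma: it ensures the marginals are close enough to uniform to yield Type-I cubes carrying the bulk of the mass, so that $X$ is well-defined and $a = O(\tvdist{f(\Ucal^m) - \Dcal_\Psi})$; without it, the weight distribution could concentrate in a tail and the decomposition would fail to produce useful $k$-wise independent structure.
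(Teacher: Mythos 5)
Your high-level plan matches the paper's: condition away high-degree inputs, run a tree-style decomposition into structured cubes, and on the ``good'' cubes combine $k$-wise independence (to fool thresholds) with the low-degree $\Fbb_2$-polynomial structure of the parity. The paper does exactly this, via \Cref{tree lemma} followed by \Cref{k indep lem}. However, two steps in your argument are genuinely wrong or missing.

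First, your reading of the parity randomization theorem is incorrect. You claim that conditioning on the small set $R$ makes $p = |f| \bmod 2$ either a constant $b_C$ or an approximately balanced function, and set $\eta_C \in \{0, 1/2, 1\}$ accordingly. \Cref{parity randomization theorem} gives no such dichotomy: it says that after fixing a random $x_{R^c}$, re-randomizing $x_R$ drives $\Pr[p=1]$ close to the global bias $\Pr_x[p(x)=1]$, which can be any value in $[0,1]$. The correct choice (as in the paper) is $\eta_C = \Pr_{x\sim \Ucal(C)}[p(x)=0]$, and this is essential, since for $d$-local functions the parity polynomial is generically neither constant nor balanced.

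Second, and more seriously, you do not establish that the threshold event $\{|f| > t\}$ and the parity event are (approximately) independent, which is the crux of \Cref{k indep lem}. Saying the parity becomes ``essentially independent of the remaining output bits of interest'' is an unjustified assertion, and it is false as stated: the outputs depend on $x_{R^c}$ as much as the parity does. The paper's resolution is to isolate the set $B$ of output bits affected by inputs in $R$ (together with the non-$k$-wise-independent exceptional set $S$), which is small because the preliminary conditioning on high-degree inputs bounds $|B| \le |R|\cdot n/A + |S|$. The threshold event is then replaced by $\textsf{BIG} = \{|f[B^c]| > t - C\}$, which is a function of $x_{R^c}$ alone, so the parity conditioned on $\textsf{BIG}$ is approximately $\eta_C$-biased by the randomization theorem. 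The discrepancy $|f[B]|$ is controlled by Chebyshev via pairwise independence, and $|f[B^c]|$ is compared to the binomial via $k$-wise independence. Without this decoupling through $B$, you cannot pass from separate bounds on $\Pr[|f|>t]$ and $\Pr[\text{even}]$ to a bound on $\Pr[|f|>t \text{ and even}]$. This is not a bookkeeping issue of ``choosing $k$ large enough''; it is a structural argument that your proposal omits.

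A smaller issue: your Type-II classification says each Type-II cube is individually $\Omega(1)$-far from $\Dcal_\Psi$, but what the bound $a = O(\tvdist{f(\Ucal^m) - \Dcal_\Psi})$ actually requires is that the \emph{mixture} over all Type-II cubes is $(1-\kappa)$-far, as \Cref{tree lemma} carefully provides; per-cube distance $\Omega(1)$ does not imply this after mixing. The paper then extracts a single event $\Ecal$ distinguishing $E$ from $\Dcal_\Psi$ and uses $f(\Ucal^m)(\Ecal) \ge a(1-\delta)$ versus $\Dcal_\Psi(\Ecal) \le \delta$ to conclude, handling the corner case $a < 4\delta$ separately by absorbing those cubes into $X$.
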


We briefly note the necessity of $a$ in the above decomposition of $f(\Ucal^m)$. It is possible for $f(\Ucal^m)$ to have some small part that is far from all such $\Dcal_\Psi$ (where ``small'' depends on the distance between $f(\Ucal^m)$ and $\Dcal_\Psi$), but the remainder of the distribution must have weight close in Kolmogorov distance to the binomial distribution $|\Ucal^n|$, even accounting for parity.

We begin by proving a special case. Recall a distribution $\Dcal$ over $\bin^n$ is k-wise independent if the projection of $\Dcal$ onto any $k'\le k$ indices is uniformly distributed over $\bin^{k'}$. The following result handles the case where $f(\Ucal^m)$ is nearly $k$-wise independent and none of the input degrees are too large.

\begin{lemma}\label{k indep lem}
Let $k\ge 2$ be an integer, and let $f\colon\{0,1\}^m \rightarrow \{0,1\}^n$ be a $d$-local function with $n$ sufficiently large. Suppose that no input bit of $f$ affects more than $n/A$ output bits for some $A>0$. Suppose furthermore that there is a set $S\subseteq [n]$ of size $|S|\leq \sqrt{n/A}$ such that $f(\Ucal^m)$ is $k$-wise independent on $[n]\setminus S$.
Then there exists $\eta \in [0,1]$ so that for any $\delta \in (0,1/2)$ and $t \in \mathbb{R}$, we have 
$$
\big|\Pr\sbra{|f(\Ucal^m)| > t \textrm{ and } |f(\Ucal^m)| \textrm{ is even}} - \eta \Pr\sbra{|\Ucal^n|>t} \big|
$$
and
$$
\big|\Pr\sbra{|f(\Ucal^m)| > t \textrm{ and } |f(\Ucal^m)| \textrm{ is odd}} - (1-\eta) \Pr\sbra{|\Ucal^n|>t} \big|
$$
are at most
$$
O\pbra{\frac{\log(1/\delta)^{O(d)^d}}{\sqrt{A\delta}} + \frac{\log(k)}{\sqrt{k}} + \delta}.
$$
\end{lemma}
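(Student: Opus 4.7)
The plan is to control the Hamming-weight magnitude via $k$-wise independence and the parity $P(x) := \bigoplus_{i=1}^n f_i(x)$---an $\mathbb{F}_2$-polynomial of degree $\le d$ in the inputs---via the parity randomization theorem, then combine. I would set $\eta := \Pr_{x \sim \Ucal^m}[P(x) = 0]$ and aim to show that $P$ is approximately independent of the threshold event $\{|f(\Ucal^m)| > t\}$ for every $t$; the even and odd statements then follow simultaneously with biases $\eta$ and $1-\eta$.

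First, I would establish the unrestricted threshold estimate $|\Pr[|f(\Ucal^m)| > t] - \Pr[|\Ucal^n| > t]| = O(\log k/\sqrt{k} + 1/\sqrt{A})$ by applying \Cref{thm:kwise_ind} to $f(\Ucal^m)[[n] \setminus S]$ and noting that the at most $\sqrt{n/A}$ coordinates in $S$ shift the Kolmogorov threshold by only $O(1/\sqrt{A})$. Next, I would invoke \Cref{parity randomization theorem} on $P$ with restriction budget $\delta$, producing a decomposition of $\{0,1\}^m$ into subcubes $C_1,\ldots,C_r$---each obtained by fixing at most $L := \log(1/\delta)^{O(d)^d}$ input bits, where the tower-like exponent comes from iterating the theorem across $d$ degree layers---together with a ``bad'' event of input-probability $\le \delta$, such that on each good $C_j$ the restricted polynomial $P|_{C_j}$ is either constant or $o(1)$-biased. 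Within each good $C_j$ the conditioning spoils at most $L \cdot n/A$ output marginals (since each input touches $\le n/A$ outputs), so $k$-wise independence persists on $[n]$ minus a set of size $O(\sqrt{n/A} + L\cdot n/A)$, and rerunning the first step inside $C_j$ yields a Kolmogorov bound of $O(L/\sqrt{A} + \log k/\sqrt{k})$ there. Since $P|_{C_j}$ is constant or nearly balanced, its value is approximately independent of $\{|f|>t\}$ inside $C_j$; summing with weights $\Pr[C_j]$, absorbing the bad event as an additive $\delta$, and suitably aggregating per-subcube errors (with the $1/\sqrt{\delta}$ in the final bound arising from the reciprocal-square-root of the restriction budget in this aggregation) gives the displayed inequality.

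The main obstacle is the tension between conditioning on input bits---needed to tame $P$ structurally via \Cref{parity randomization theorem}---and preserving the output-side $k$-wise independence that fools threshold functions. The hypothesis that each input affects $\le n/A$ output bits is precisely what makes this tradeoff quantitative, bounding the spoiled output set by $L \cdot n/A$ per subcube; carefully balancing the conditioning budget $\delta$ against the cascading degree-$d$ parity decomposition (whose depth forces the tower-like factor $\log(1/\delta)^{O(d)^d}$) and the resulting degradation of $k$-wise independence is what produces the stated $\log(1/\delta)^{O(d)^d}/\sqrt{A\delta} + \log k/\sqrt{k} + \delta$ bound.
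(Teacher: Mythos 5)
Your proposal identifies the right two ingredients (\Cref{thm:kwise_ind} for the magnitude and \Cref{parity randomization theorem} for the parity), sets $\eta$ correctly, and correctly observes that conditioning on a small set of inputs spoils at most $L\cdot n/A$ outputs. However, there is a genuine gap in how you apply \Cref{parity randomization theorem} and, downstream of that, in the independence claim you need.

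\Cref{parity randomization theorem} produces a small set $R$ of inputs (\(|R|\le L\)) such that if you fix \(x_{R^c}\) --- the \emph{complement}, which is most of the inputs --- and re-randomize \(x_R\), then with probability at least \(1-\delta\) over the fixed \(x_{R^c}\) the residual bias \(\Pr_{x_R}[p(x_{R^c},x_R)=1]\) is within \(\delta\) of the \emph{global} bias \(\eta\). You instead propose to partition \(\{0,1\}^m\) into subcubes ``each obtained by fixing at most \(L\) input bits'' (i.e.\ fixing \(x_R\)), and you assert that on each good subcube \(P|_{C_j}\) is ``either constant or \(o(1)\)-biased.'' The theorem says nothing of the sort about the restriction \(P(\cdot, x_R = \text{const})\): that is an arbitrary degree-\(\le d\) polynomial in \(x_{R^c}\) and may have any bias. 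And even in the right orientation, the conclusion is that the residual bias is close to \(\eta\), not close to \(1/2\). So the decomposition you describe is not what the theorem provides, and the structural claim you rely on fails. Separately, even if the parity were nearly balanced on a subcube, that would not by itself make it approximately independent of the event \(\{|f|>t\}\) there; balancedness and independence are different properties, and this independence is exactly the crux.

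What makes the paper's argument go through is a more delicate observation: define \(B\) as the outputs touched by inputs in \(R\) together with \(S\), and let \(\textsf{BIG}\) be the event \(|f(\Ucal^m)[B^c]|>t-C\). Because \(\textsf{BIG}\) is determined entirely by output bits outside \(B\), it is a function of \(x_{R^c}\) alone; hence conditioning on \(\textsf{BIG}\) does not disturb the uniformity of \(x_R\), and \Cref{parity randomization theorem} can be applied directly \emph{inside} the conditioning to control \(\Pr[\textsf{EVEN}\mid\textsf{BIG}]\). One then relates \(\Pr[\textsf{BIG}]\) to \(\Pr[|\Ucal^n|>t]\) via \(k\)-wise independence on \(B^c\), and controls the contribution of the \(B\)-coordinates by Chebyshev (this is where the \(1/\sqrt{\delta}\) enters concretely). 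Your plan, by contrast, never produces an independence or near-independence statement between the parity and the threshold event, and without one the ``summing with weights \(\Pr[C_j]\)'' step does not combine into the displayed inequality. To repair the argument you would need to reorient the conditioning around \(x_{R^c}\) and make the dependence-structure observation that \(\textsf{BIG}\) ignores \(x_R\), which is effectively the paper's proof.
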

In particular, we note that as long as $k$ and $A$ are sufficiently large, we can make the error as small as we like.

\begin{proof}
The basic idea is to use the fact that $k$-wise independence fools threshold functions. In particular, the following result says that if the output coordinates of $f(\Ucal^m)$ are $k$-wise independent, then the probability that $|f(\Ucal^m)|$ (which is simply the sum of those outputs) is bigger than $t$ will approximate the same expression for the binomial distribution.
\begin{theorem}[\cite{diakonikolas2010bounded}]\label{thm:kwise_ind}
    Let $\Dcal$ be a $k$-wise independent distribution on $\bin^n$, and let $\Ucal^n$ be the uniform distribution over $\bin^n$. Then for any $w_1, \ldots, w_n,\theta \in \Rbb$, we have
    \[
        \left|\Pr_{x \sim \Dcal} [w_1x_1 + \cdots + w_n x_n \ge \theta] - \Pr_{x \sim \Ucal^n}[w_1x_1 + \cdots + w_n x_n \ge \theta]\right| \le O\pbra{\frac{\log(k)}{\sqrt{k}}}.
    \]
\end{theorem}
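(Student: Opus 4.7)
My plan is to prove both parity-restricted Kolmogorov bounds by combining two ingredients: $k$-wise independence (via \Cref{thm:kwise_ind}) to establish an unrestricted Kolmogorov bound between $|f(\Ucal^m)|$ and $|\Ucal^n|$, and the structure of the degree-$d$ $\mathbb{F}_2$-polynomial $p(y) := |f(y)| \bmod 2$ (via the parity randomization theorem hinted at in the overview) to decouple parity from weight.

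For the unrestricted Kolmogorov bound, apply \Cref{thm:kwise_ind} with all unit weights to the $k$-wise independent restriction $f(\Ucal^m)_{[n]\setminus S}$ to obtain
\[
\big|\Pr\sbra{|f(\Ucal^m)_{[n]\setminus S}| > t'} - \Pr\sbra{|\Ucal^{n-|S|}| > t'}\big| = O\pbra{\log(k)/\sqrt{k}} \quad \text{for all } t' \in \Rbb.
\]
Since the outputs indexed by $S$ can shift the weight by at most $|S| \le \sqrt{n/A}$, and the binomial has maximum point-mass $\Theta(1/\sqrt{n})$, passing from $|f(\Ucal^m)_{[n]\setminus S}|$ to $|f(\Ucal^m)|$ and from $|\Ucal^{n-|S|}|$ to $|\Ucal^n|$ each costs $O(|S|/\sqrt{n}) = O(1/\sqrt{A})$ in Kolmogorov distance. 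Together this yields $\big|\Pr[|f(\Ucal^m)| > t] - \Pr[|\Ucal^n| > t]\big| = O(\log(k)/\sqrt{k} + 1/\sqrt{A})$.

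To incorporate parity, note that $p(y) = \sum_i f_i(y) \bmod 2$ is an $\mathbb{F}_2$-polynomial of degree at most $d$. I would then invoke the parity randomization theorem to find a set $T$ of at most $r := \log(1/\delta)^{O(d)^d}$ input bits whose fixings partition the input space into sub-cubes such that, on all but a $\delta$-fraction by mass, the restricted polynomial $p|_\rho$ is either constant or of the form $y_i \oplus c$ for some free input $y_i \notin T$. Setting $\eta := \Pr[p(\Ucal^m) = 0]$, I would decompose
\[
\Pr\sbra{|f|>t,\, p=0} - \eta \Pr\sbra{|\Ucal^n|>t} = \sum_\rho \Pr\sbra{\rho}\pbra{\Pr\sbra{|f|>t,\, p=0 \mid \rho} - \eta \Pr\sbra{|\Ucal^n|>t}}
\]
and handle each good sub-cube as follows. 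Constant sub-cubes $p|_\rho \equiv c$ contribute $\mathbbm{1}\cbra{c=0}\Pr[|f|>t|\rho]$, which matches the corresponding binomial probability up to the sub-cube's Kolmogorov bound. Single-XOR sub-cubes $p|_\rho = y_i \oplus c$ contribute $\tfrac{1}{2}\Pr[|f|>t \mid \rho, y_i = c]$, differing from $\tfrac{1}{2}\Pr[|f|>t|\rho]$ by at most $\tfrac{1}{4}|\Pr[|f|>t|\rho,y_i=0] - \Pr[|f|>t|\rho,y_i=1]|$; since flipping $y_i$ shifts $|f|$ by at most $n/A$, the anticoncentration inherited from $k$-wise independence on the restricted coordinates bounds this correction. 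Bad sub-cubes contribute at most $\delta$ in total, while balancing the combinatorial cost of the $2^{|T|}$ partition against the per-sub-cube shift scale $n/A$ and the $1/\sqrt{\delta}$ penalty from restricting to high-probability sub-cubes yields the dominant $\log(1/\delta)^{O(d)^d}/\sqrt{A\delta}$ term.

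The main obstacle will be the sub-cube bookkeeping: verifying that the $k$-wise independence (and hence anticoncentration of $|f|$) genuinely passes through the restriction to each sub-cube, given that the inputs in $T$ may freeze up to $|T|\cdot (n/A) = r\cdot n/A$ outputs that must be absorbed into the excluded set without inflating its size too much. The delicate point is ensuring that the correction from the dependence of $|f|$ on the controlling variable $y_i$, summed against the $\Pr[\rho]$ weighting over sub-cubes, produces the claimed $1/\sqrt{A\delta}$ scale rather than a cruder $\sqrt{n}/A$. A secondary concern is cleanly accounting for the parity of $|f(\Ucal^m)_S|$ coming from the excluded coordinates $S$, which may require an additional conditioning step folded into the restrictions defining $T$.
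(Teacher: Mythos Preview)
Your proposal does not address the stated theorem. \Cref{thm:kwise_ind} is a cited result from \cite{diakonikolas2010bounded} asserting that $k$-wise independence fools halfspaces; the paper does not prove it, and your sketch explicitly invokes \Cref{thm:kwise_ind} as a black box. What you have actually outlined is a proof of \Cref{k indep lem} (the parity-restricted Kolmogorov bound), not of \Cref{thm:kwise_ind}.

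Even viewed as an attempt at \Cref{k indep lem}, there is a genuine gap in your treatment of the parity step. You state that the parity randomization theorem yields a small set $T$ of inputs such that on most sub-cubes obtained by \emph{fixing $T$}, the restricted polynomial $p|_\rho$ is ``constant or of the form $y_i\oplus c$.'' That is not what \Cref{parity randomization theorem} says, and no such structural reduction is available. The theorem runs in the opposite direction: it gives a small set $R$ such that after fixing the \emph{complement} $R^c$, re-randomizing the bits in $R$ makes $\Pr[p=1]$ within $\delta$ of its global value; nothing forces the restricted polynomial to become linear. The paper's proof exploits this correctly by letting $B$ be the output coordinates affected by inputs in $R$ (together with $S$), so that the event $\{|f(\Ucal^m)[B^c]|>t-C\}$ depends only on $x_{R^c}$; conditioning on this event therefore leaves the bits in $R$ uniform, and the parity of $|f|$ is re-randomized to within $\delta$ of $\eta$. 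Your sub-cube decomposition with ``single-XOR'' cases and the correction $|\Pr[|f|>t\mid\rho,y_i=0]-\Pr[|f|>t\mid\rho,y_i=1]|$ has no counterpart in the actual argument, and the tool you would need to justify it is not the one at hand.
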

To deal with the parities, we observe that the parity of $|f(\Ucal^m)|$ is a degree at most $d$ polynomial over $\mathbb{F}_2$ in the input bits. Using the following consequence of \cite[Theorem 3.1]{chattopadhyay2020xor}, we have that randomizing a small number of input bits can be used to effectively re-randomize the parity of the output.

\begin{theorem}\label{parity randomization theorem}
Let $p$ be a degree-$d$ polynomial over $\mathbb{F}_2^n$ and $\delta \in (0,1/2)$. There exists a subset $R\subseteq [n]$ with $|R| \leq \log(1/\delta)^{O(d)^d}$ so that if we write $p(x) = p(x_{R^c}, x_R)$ where $x_R$ and $x_{R^c}$ are the coordinates in $R$ and not in $R$ respectively, then with probability at least $1-\delta$ over the choice of a random value of $x_{R^c}$ we have that
\begin{equation}\label{parity randomization theorem eqn}
    \left|\Pr_{x_R}\sbra{p(x_{R^c},x_R) = 1} - \Pr_x\sbra{p(x)=1}\right| < \delta.
\end{equation}
\end{theorem}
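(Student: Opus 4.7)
My plan is to derive this as a direct consequence of \cite[Theorem 3.1]{chattopadhyay2020xor}, which is a regularity lemma for low-degree $\mathbb{F}_2$-polynomials. Roughly, that result says: any degree-$d$ polynomial $p$ admits a small set $R$ of coordinates such that after fixing $x_{R^c}$ at random, with high probability the restricted polynomial on $x_R$ has acceptance probability within $\delta$ of $\Pr[p=1]$. Packaging that guarantee directly yields the statement here, so the first step is simply to verify that the quantitative bound on $|R|$ in the cited theorem, once its parameters are instantiated with our $\delta$ and $d$, matches $\log(1/\delta)^{O(d)^d}$.

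For the structure of the underlying argument, I would proceed by induction on the degree $d$. The base case $d = 0$ is trivial, since $p$ is constant and $R = \emptyset$ suffices. For the inductive step, the key dichotomy is as follows: either $p$ is already ``regular'' in the sense that random restrictions concentrate its bias, or there exists an input coordinate of large influence. In the regular case, standard concentration via hypercontractivity (\Cref{lem:hypercontractivity}) shows directly that a random setting of $x_{R^c}$ produces $|\Pr_{x_R}[p(x_{R^c},x_R)=1] - \Pr_x[p(x)=1]| < \delta$ with probability $\geq 1-\delta$, so a small $R$ (or even empty one) works. In the influential case, we place such an influential coordinate $i$ into $R$ and recurse on the two restrictions $p(x_i=0,\cdot)$ and $p(x_i=1,\cdot)$; equivalently, one can apply the inductive hypothesis to the degree-$(d-1)$ discrete derivative $\partial_i p := p(\cdot) \oplus p(\cdot \oplus e_i)$, which is the object carrying the high-influence information.

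The primary obstacle is parameter accounting across the recursion. Since the set $R$ grows by a polylogarithmic-in-$1/\delta$ factor at each level of recursion, and the depth of the recursion scales with $d$, the iterated size compounds to $\log(1/\delta)^{O(d)^d}$; this doubly-exponential-in-$d$ dependence is characteristic of polynomial regularity arguments and is the reason the statement cannot be obtained by a one-shot construction. Moreover, one must ensure that the cumulative approximation error telescopes to $\delta$ (rather than something like $d \cdot \delta$ or exponential in the branching), which demands a careful choice of intermediate error parameters $\delta_1 \gg \delta_2 \gg \cdots$ across recursive calls, together with a union bound over the branches that tracks which restrictions remain ``good.'' This bookkeeping is the main technical burden handled in \cite[Theorem 3.1]{chattopadhyay2020xor}, and once that paper's bound is in hand, our statement follows by direct translation of parameters.
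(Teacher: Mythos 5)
The paper does not provide its own proof of this statement: it is stated as a direct consequence of \cite[Theorem~3.1]{chattopadhyay2020xor}, and that citation does all of the work. Your first paragraph takes exactly the same approach, so the core of your proposal matches the paper. That part is fine, and noting that the size bound $\log(1/\delta)^{O(d)^d}$ should be read off from the cited theorem's parameters is precisely what the paper implicitly does.

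Your second and third paragraphs attempt to sketch the underlying argument of the cited theorem, and here I would flag one genuine inaccuracy rather than just a difference of route. You propose that in the ``regular'' case one applies hypercontractivity (\Cref{lem:hypercontractivity}) to conclude concentration of the restricted bias. But \Cref{lem:hypercontractivity} applies to low-degree \emph{real} polynomials, and a degree-$d$ polynomial $p$ over $\Fbb_2$ does \emph{not} correspond to a degree-$d$ real polynomial in the relevant sense --- the function $(-1)^{p(x)}$ can have full Fourier degree even when $p$ has small $\Fbb_2$-degree (e.g.\ $p(x)=x_1x_2\cdots x_k$). So the hypercontractive concentration step does not go through as stated, and the notion of ``regularity'' appropriate here is the rank/bias structure of $\Fbb_2$-polynomials (as in Green--Tao/Kaufman--Lovett style regularity and the machinery of \cite{chattopadhyay2020xor}), not Fourier influence plus hypercontractivity. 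Likewise, the iterated-derivative recursion you describe is the right flavor but would need to be run against that rank-based decomposition, not a high-influence-coordinate dichotomy, to yield the $\log(1/\delta)^{O(d)^d}$ bound. None of this affects the validity of the theorem as used in the paper, since the paper (and you) ultimately defer to the cited result, but the standalone sketch as written would not assemble into a correct proof.
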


That is, if we randomly fix the coordinates in $R^c$, then with high probability re-randomizing the coordinates in $R$ will essentially re-randomize the output of $p$.

Now, let the parity of the output of $f(x)$ be given by the $\mathbb{F}_2$-polynomial $p(x)$. Applying \Cref{parity randomization theorem} gives a set $R$ of at most $\log(1/\delta)^{O(d)^d}$ input bits. Let $B\subseteq [n]$ be the set of output bits affected by these input bits along with the output bits in $S$.
Note that by our assumptions, $|B| \leq n \log(1/\delta)^{O(d)^d}/A$.

We set $\eta$ to be the probability that $p(x) = 0$ over $x \sim \Ucal^m$ and will show that
\[
    \Pr\sbra{|f(\Ucal^m)| > t \textrm{ and } |f(\Ucal^m)| \textrm{ is even}} \\  
    \geq \eta \Pr\sbra{|\Ucal^n|>t} - O\pbra{\frac{\log(1/\delta)^{O(d)^d}}{\sqrt{A\delta}} + \frac{\log(k)}{\sqrt{k}} + \delta}.
\]
Observe that the case of odd $|f(\Ucal^m)|$ is almost identical, while the upper bounds follow from considering the complement distribution $1^n - f(\Ucal^m)$.

Before proceeding further, we define several variables and events for the sake of future clarity. Let $C\coloneqq |B|/2 - \sqrt{|B|/\delta}-|S|$. Let \textsf{EVEN} be the event that $|f(\Ucal^m)|$ is even (and similarly for \textsf{ODD}). Additionally, let \textsf{BIG} be the event that $|f(\Ucal^m)[B^c]| > t-C$. Finally, let \textsf{GOOD} be the event that $x_{R^c}$ satisfies \Cref{parity randomization theorem eqn} (and \textsf{BAD} be the complement event). We have
\begin{align*}
    \Pr\sbra{|f(\Ucal^m)| > t \textrm{ and } \textsf{EVEN}} &\ge \Pr\sbra{\textsf{BIG} \textrm{ and } |f(\Ucal^m)[B]| \ge C \textrm{ and } \textsf{EVEN}} \\
    &= \Pr\sbra{\textsf{BIG} \textrm{ and } \textsf{EVEN}} - \Pr\sbra{\textsf{BIG} \textrm{ and } |f(\Ucal^m)[B]| < C \textrm{ and } \textsf{EVEN}} \\
    &\ge \Pr\sbra{\textsf{BIG}}\cdot \Pr\sbra{\textsf{EVEN} \bigm| \textsf{BIG}} - \Pr\sbra{|f(\Ucal^m)[B]| < C}.
\end{align*}
Since the coordinates of $f(\Ucal^m)$ not in $B$ are $k$-wise independent by assumption, \Cref{thm:kwise_ind} implies
\begin{align}
    \Pr\sbra{\textsf{BIG}} &\ge \Pr\sbra{|\Ucal^{n-|B|}| > t-C} - O\pbra{\frac{\log(k)}{\sqrt{k}}} \notag \\ 
    &= \Pr\sbra{|\Ucal^{n}| > t-C+|\Ucal^{|B|}|} - O\pbra{\frac{\log(k)}{\sqrt{k}}} \notag \\
    &\ge \Pr\sbra{|\Ucal^{n}| > t} - \Pr\sbra{t \le |\Ucal^{n}| \le t-C+|\Ucal^{|B|}|} - O\pbra{\frac{\log(k)}{\sqrt{k}}} \notag \\
    &\ge \Pr\sbra{|\Ucal^{n}| > t} - \max_x |\Ucal^{n}|(x) \cdot \E\sbra{\big||\Ucal^{|B|}|-C\big|} - O\pbra{\frac{\log(k)}{\sqrt{k}}} \notag \\
    &\ge \Pr\sbra{|\Ucal^{n}| > t} - O\pbra{\frac{\big|C - \frac{|B|}{2}\big| + \sqrt{|B|}}{\sqrt{n}} + \frac{\log(k)}{\sqrt{k}}}. \label{eq:cdf_f_Bc}
\end{align}

To lower bound the conditional probability $\Pr\sbra{\textsf{EVEN} \bigm| \textsf{BIG}}$, it will be slightly more convenient to upper bound $\Pr\sbra{\textsf{ODD} \bigm| \textsf{BIG}}$. For this, observe that the event $\textsf{BIG}$ does not depend on any of the input bits in $R$. Hence by \Cref{parity randomization theorem}, for all but a $\delta$-fraction of the settings of the bits in $R^c$, the probability over the bits in $R$ that $|f(\Ucal^m)|$ is odd is at most $(1-\eta) + \delta$. Therefore, 
\begin{align*}
    \Pr\sbra{\textsf{ODD} \bigm| \textsf{BIG}} &= \Pr\sbra{\textsf{GOOD} \bigm| \textsf{BIG}}\cdot \Pr\sbra{\textsf{ODD} \bigm| \textsf{GOOD} \text{ and } \textsf{BIG}} \\
    & \ \ \: + \Pr\sbra{\textsf{BAD} \bigm| \textsf{BIG}}\cdot \Pr\sbra{\textsf{ODD} \bigm| \textsf{BAD} \text{ and } \textsf{BIG}} \\
    &\le \Pr\sbra{\textsf{ODD} \bigm| \textsf{GOOD}} + \Pr\sbra{\textsf{BAD} \bigm| \textsf{BIG}} \le (1-\eta) + \delta + \frac{\delta}{\Pr\sbra{\textsf{BIG}}}.
\end{align*}
Combining with \Cref{eq:cdf_f_Bc}, we have that
\begin{align*}
    \Pr\sbra{\textsf{BIG}}\Pr\sbra{\textsf{EVEN} \bigm| \textsf{BIG}} &\ge \eta \Pr\sbra{|\Ucal^n| > t} - O\pbra{\frac{\big|C - \frac{|B|}{2}\big| + \sqrt{|B|}}{\sqrt{n}} + \frac{\log(k)}{\sqrt{k}} + \delta} \\
    &\ge \eta \Pr\sbra{|\Ucal^n| > t} - O\pbra{\frac{\log(1/\delta)^{O(d)^d}}{\sqrt{A\delta}} + \frac{\log(k)}{\sqrt{k}} + \delta}.
\end{align*}
To complete our proof, it remains to bound $\Pr\sbra{|f(\Ucal^m)[B]| < C}$. We have
\begin{align*}
    \Pr\sbra{|f(\Ucal^m)[B]| < C} &\le \Pr\sbra{|f(\Ucal^m)[B\setminus S]| < |B|/2 - \sqrt{|B|/\delta}-|S|} \\
    &\le \Pr\sbra{|f(\Ucal^m)[B\setminus S]| < |B\setminus S|/2 - \sqrt{|B \setminus S|/\delta}} < \delta,
\end{align*}
where the final inequality follows from Chebyshev's inequality and the observation that the outputs in $B\setminus S$ are $2$-wise independent.
\end{proof}

We note that the degree bound in \Cref{k indep lem} can be achieved by restricting high degree inputs, which $d$-locality guarantees relatively few of.
To reduce to the case where most output coordinates are $k$-wise independent, we use the following lemma.

\begin{lemma}\label{tree lemma}
Let $f\colon\{0,1\}^m \rightarrow \{0,1\}^n$ be a $d$-local function, $k$ be a positive integer, and $\delta,\kappa \in (0,1]$. Additionally, let $\Psi \subseteq \{0,1,\ldots,n\}$ contain some element within $n/c$ of $n/2$ for some $c$ sufficiently large given the values of $d,k,\delta,$ and $\kappa$. Then there exists a partition of $\{0,1\}^m$ into subcubes of three types so that:
\begin{enumerate}
\item For each subcube $C$ of Type-I there is a set of $O_{d,k,\delta,\kappa}(1)$ coordinates in $[n]$ so that the other coordinates of $f(\Ucal(C))$ are $k$-wise independent. 

\item Letting $W$ be the union of subcubes of Type-II, we have that $\tvdist{f(\Ucal(W)) - \Dcal_\Psi} > 1-\kappa$.

\item The total probability mass of all subcubes of Type-III is at most $\delta$.
\end{enumerate}
\end{lemma}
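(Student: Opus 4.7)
The plan is to build the desired partition by a recursive refinement of $\{0,1\}^m$ that fixes one input bit at a time, forming a binary tree of subcubes whose leaves constitute the partition. The hypothesis that $\Psi$ contains a weight within $n/c$ of $n/2$ (for $c$ large enough given $d,k,\delta,\kappa$) is what makes Type-II a genuinely restrictive condition: it guarantees that $\Dcal_\Psi$ places appreciable mass in the central slice, so the inequality $\tvdist{f(\Ucal(C))-\Dcal_\Psi}\le 1-\kappa$ has structural consequences for $f(\Ucal(C))$ (in particular forcing its weight distribution to overlap the central slice with nontrivial probability).

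At each subcube $C$ we apply the following case analysis. First, if $\tvdist{f(\Ucal(C))-\Dcal_\Psi}>1-\kappa$, mark $C$ as Type-II and stop. Otherwise, try to identify an exception set $B_C\subseteq[n]$ of size at most $L=L(d,k,\delta,\kappa)$ such that the coordinates $[n]\setminus B_C$ of $f(\Ucal(C))$ are $k$-wise independent. By Fourier analysis, this is equivalent to $\widehat{f(\Ucal(C))}(T)=\E_{x\sim\Ucal(C)}\sbra{(-1)^{\sum_{i\in T}f_i(x)}}=0$ for every $T\subseteq[n]\setminus B_C$ with $1\le|T|\le k$. If such a $B_C$ exists, declare $C$ Type-I. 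If not, there is a witnessing $T$ with $|T|\le k$, avoiding every small $B$, on which the parity $\bigoplus_{i\in T}f_i$ is noticeably biased. Because this parity depends on at most $kd$ input bits (by $d$-locality), we can find an input bit $x_j$ among its inputs whose conditioning strictly decreases the bias; branch $C$ on $x_j$.

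To bound the Type-III mass, the key quantitative step is to control the recursion depth via the potential
\[
    \Phi(C)=\sum_{\substack{T\subseteq[n]\\ 1\le|T|\le k}}\widehat{f(\Ucal(C))}(T)^{2},
\]
which satisfies $\Phi(C)\le\binom{n}{\le k}$. The parity $\bigoplus_{i\in T}f_i$ has variance equal to $\widehat{f(\Ucal(C))}(T)^{2}$, and the law of total variance (applied across the two children obtained by fixing $x_j$) makes $-\Phi$ a submartingale along the tree. Selecting $x_j$ to target a Fourier coefficient of magnitude at least $1/\poly(L)$ ensures each Case-(3) branch reduces $\E[\Phi]$ by $\Omega_{d,k,L}(1)$ in expectation, so the expected ``non-terminated'' depth is at most some $L'=L'(d,k,\delta,\kappa)$. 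Declaring any branch that still has not hit Case-(1) or Case-(2) after depth $L'$ to be Type-III, and tuning $L'$ via Markov's inequality, makes the total mass of Type-III leaves at most $\delta$.

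The main obstacle I anticipate is coordinating the exception sets $B_C$ across the tree so that a single uniform constant $L$ works for every Type-I leaf, rather than letting $|B_C|$ creep upward along each path. My plan is to enforce monotonicity: $B_C$ is inherited by both children of a branch, and whenever the branching on $x_j$ ``freezes'' an output bit (makes it essentially constant or parity-determined), that bit is added to $B_C$ in both children; the cap $|B_C|\le L$ is maintained by routing any attempted overflow into the Type-III bucket. Because each $k$-tuple $T$ responsible for a Case-(3) branch can only contribute boundedly many new elements to $B$ (by $d$-locality of the parity on $T$), the $\Phi$-decrease per $B$-addition is bounded below, so choosing $L$ large enough in terms of $d,k,\delta,\kappa$ keeps the bookkeeping consistent and yields the three-way partition with the stated properties.
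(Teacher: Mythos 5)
Your proposal takes a genuinely different route (bit-by-bit refinement with a Fourier potential, versus the paper's single-shot polynomial-distinguisher plus explicit recursion), but as written it has two gaps that I think are fatal, plus some smaller issues.

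\textbf{The Type-II combination does not close.} You declare a subcube $C$ Type-II as soon as $\tvdist{f(\Ucal(C)) - \Dcal_\Psi} > 1-\kappa$. But the lemma requires the \emph{union} $W$ of Type-II subcubes to satisfy $\tvdist{f(\Ucal(W)) - \Dcal_\Psi} > 1-\kappa$, and total variation distance is convex: a mixture of many subcubes each $(1-\kappa)$-far can be much closer than $(1-\kappa)$-far. The best generic tool (\Cref{lem:tvdist_after_conditioning}) only gives $1 - (t+1)\kappa$ for $t$ components, so if your tree spawns more than $\Theta(1/\kappa)$ Type-II leaves you get nothing. The paper avoids this by not declaring subcubes Type-II based on individual distance; instead it builds one low-degree polynomial $p$ whose value is large on every Type-II subcube produced in a given round, so all of them concentrate on a single event that $\Dcal_\Psi$ hits with probability $<\kappa/5$ (\Cref{clm:restrict_factor_of_two} plus hypercontractivity). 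Then in the recursion it tightens the threshold to $\kappa' = \kappa/(5M)$ to pay the $(t+1)$ cost from \Cref{lem:tvdist_after_conditioning}. Your per-subcube threshold of $1-\kappa$ is too loose by design, and I don't see how to recover the union bound without either a shared distinguishing event or a threshold that shrinks with the number of subcubes.

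\textbf{The potential function does not bound the recursion depth.} You set $\Phi(C)=\sum_{1\le|T|\le k}\widehat{f(\Ucal(C))}(T)^2$, note $\Phi \le \binom{n}{\le k}$, and argue each branching step moves $\E[\Phi]$ by $\Omega_{d,k,L}(1)$. Even granting that (and note the sign is reversed: by convexity of $x\mapsto x^2$ conditioning can only \emph{increase} $\E[\Phi]$, so $\Phi$ is a submartingale, not $-\Phi$), the upper bound $\binom{n}{\le k} = n^{\Theta(k)}$ is polynomial in $n$, not a function of $d,k,\delta,\kappa$ alone. So your Markov bound only gives expected depth $n^{\Theta(k)}/\Omega_{d,k,L}(1)$, and the Type-III mass bound becomes $n$-dependent, which violates the lemma's requirement that everything be $O_{d,k,\delta,\kappa}(1)$. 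The paper sidesteps this entirely: the recursion is not on a potential but on the remaining mass, with each round moving a $2^{-\Theta(kd)}$ fraction of mass into Type-II, giving recursion depth $\lceil\log(1/\delta)/\log(1/(1-2^{-Bkd}))\rceil$, independent of $n$.

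Secondary concerns, which you would also need to address: (i) the claim that some input bit's conditioning ``strictly decreases the bias'' is unjustified and in fact pulls against the submartingale direction of $\Phi$; (ii) your use of the hypothesis on $\Psi$ is vague (``appreciable mass in the central slice'') whereas the paper needs and proves the concrete fact that $\Dcal_\Psi$ restricted to any $O_{d,k,\delta,\kappa}(1)$ output coordinates is within a factor $2$ of uniform, which is exactly what transfers the anticoncentration bound from $\Ucal^n$ to $\Dcal_\Psi$; (iii) branching one bit at a time makes it hard to get the multiplicative mass recurrence the paper uses, since the paper conditions on \emph{all} $\le Nkd$ inputs affecting $p$ at once and applies Paley--Zygmund to the resulting $2^{Nkd}$ subcubes jointly.

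The overall moral: the paper aggregates many small violations of $k$-wise independence into one polynomial whose anticoncentration gives a constant-fraction peel-off per round, and recurses on mass; your proposal handles violations one at a time with a potential bounded only polynomially in $n$ and a per-subcube TV threshold, and both of those choices lose the $n$-independence the lemma demands.
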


We will later apply \Cref{tree lemma} to each subcube generated by conditioning on high degree input coordinates. This partitions the input space $\bin^m$ into subcubes, where almost all the mass is on subcubes $C$ such that either $f(\Ucal(C))$ is nearly $k$-wise independent with bounded degree inputs (in which case \Cref{k indep lem} is applicable) or whose union $W$ has $f(\Ucal(W))$ far from $\Dcal_\Psi$.

Our proof will require the following two consequences of hypercontractivity to analyze bounded-degree polynomials. The first follows from combining \Cref{lem:hypercontractivity}  with Markov's inequality.
Recall for $f\colon \cbra{-1,1}^n \to \Rbb$, we define the norm $\|f\|_2 = \pbra{\E_{X\sim \cbra{-1,1}^n}\sbra{|f(X)|^2}}^{1/2}$.\footnote{We originally defined the norm for functions on the domain $\bin^n$, but it will be more convenient in the subsequent proof to define it for the domain $\binpm^n$.}

\begin{lemma}[See, e.g., {\cite[Corollary 26]{kane2017structure}}]\label{lem:poly_anticoncentration}
    Let $K > 0$. If $p \colon \binpm^n \to \binpm$ is a degree-$d$ polynomial, then
    \[
        \Pr_{X\sim \cbra{-1,1}^n}\sbra{|p(X)| > K\cdot \|p\|_2} = O\pbra{2^{-(K/2)^{2/d}}}.
    \]
\end{lemma}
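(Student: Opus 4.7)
The plan is a textbook application of hypercontractivity followed by a Markov bound, optimized in the moment order. (The codomain ``$\binpm$'' in the statement should be read as $\Rbb$, since otherwise $|p(X)|$ is identically $1$; with this reading, the target is the standard polynomial anticoncentration bound.) First, I would invoke \Cref{lem:hypercontractivity} for an even integer $q \ge 2$ to obtain $\|p\|_q \le \sqrt{q-1}^{\,d} \cdot \|p\|_2$. (The lemma is stated for $\bin^n$, but the isomorphism $x \mapsto 1-2x$ preserves degree and the $L^2$ norm, so it applies verbatim to $p \colon \binpm^n \to \Rbb$.) Markov's inequality applied to $|p|^q$ then yields
$$\Pr_{X \sim \binpm^n}\sbra{|p(X)| > K \cdot \|p\|_2} \le \frac{\E\sbra{|p(X)|^q}}{K^q \|p\|_2^q} = \pbra{\frac{\|p\|_q}{K\|p\|_2}}^q \le \pbra{\frac{(q-1)^{d/2}}{K}}^q.$$

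The second step is to optimize $q$. I would take $q$ to be the nearest even integer to $(K/2)^{2/d}+1$, so that $(q-1)^{d/2}$ is approximately $K/2$, the base of the exponent is at most a small constant times $1/2$, and the final bound becomes $2^{-\Omega(q)} = O\pbra{2^{-(K/2)^{2/d}}}$, matching the claim. For $K$ too small for any valid $q \ge 2$ to be chosen (i.e., $K$ bounded above by a universal constant), the claimed bound is $\Omega(1)$ and the inequality is trivially satisfied by the probability being at most $1$.

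The only subtlety, not really an obstacle, is the integer-rounding of $q$: rounding to the nearest even integer above $(K/2)^{2/d}+1$ changes the exponent by at most a constant factor, which is absorbed into the $O(\cdot)$. This is the standard anticoncentration bound for low-degree polynomials on the Boolean cube, and the only genuinely ``creative'' move is the choice of $q$ that balances the hypercontractive exponent $(q-1)^{d/2}$ against the Markov denominator $K$.
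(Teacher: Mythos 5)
Your overall route is exactly the one the paper gestures at --- it offers no proof, only the remark that the lemma ``follows from combining \Cref{lem:hypercontractivity} with Markov's inequality'' --- and your reading of the codomain as $\Rbb$ is the right one (otherwise $|p(X)|\equiv 1$). The reduction to the lemma's $\bin^n$ domain via $x\mapsto 1-2x$ is also fine, being an affine bijection preserving degree and all $L^q$ norms.

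Where the proposal goes wrong is the rounding direction. You round $q$ \emph{up} past $m+1$ where $m \coloneqq (K/2)^{2/d}$, and assert the resulting base is ``at most a small constant times $1/2$.'' But $q-1$ can then exceed $m$ by as much as $2$, so the base is only bounded by $\tfrac12(1+2/m)^{d/2}$; taking, say, $m=d$, this is roughly $e/2>1$, and more generally raising the base to the power $q\approx m$ leaves an extra factor of about $e^{d}$ relative to $2^{-m}$. That factor cannot be absorbed into the $O(\cdot)$, since the paper stipulates that its asymptotic notation hides only universal constants independent of all parameters, including $d$. The fix is simply to round \emph{down}: take $q$ to be the largest integer with $q-1\le m$ (the lemma requires only integer $q\ge2$, not even $q$, so the parity restriction is also unnecessary). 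Then $(q-1)^{d/2}\le K/2$, the base is genuinely at most $1/2$, and $q>m$ gives $(1/2)^q<2^{-m}$; the case $m<1$, i.e.\ $K<2$, is trivial since then $2^{-m}>1/2$.
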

The second is a consequence of \Cref{lem:hypercontractivity} and the Paley-Zygmund inequality.
\begin{lemma}[See, e.g., {\cite[Theorem 2.4]{kabanets2017polynomial}}]\label{lem:weak_anticoncentration}
    If $p\colon\binpm^n \to \binpm$ is a degree-$d$ multilinear polynomial, then
    \[
        \Pr_{X\sim \cbra{-1,1}^n}\sbra{|p(X)| \ge (1/2) \cdot \|p\|_2} \ge (1/2)\cdot 9^{-d}.
    \]
\end{lemma}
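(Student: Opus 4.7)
The plan is to derive the claim from two standard ingredients: the Paley--Zygmund second-moment inequality and the Bonami hypercontractivity bound already recorded as \Cref{lem:hypercontractivity}. I would interpret the codomain of $p$ as $\Rbb$ (the $\binpm$ codomain in the statement appears to be a typo, since otherwise $|p(X)|=1$ and the inequality is vacuous), and treat $p$ as a real-valued degree-$d$ multilinear polynomial on $\binpm^n$.

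\textbf{Step 1 (Paley--Zygmund).} Set $Z := p(X)^2$. Then $Z \ge 0$, $\E[Z] = \|p\|_2^2$, and $\E[Z^2] = \|p\|_4^4$. Applying the Paley--Zygmund inequality with threshold $\theta = 1/4$ yields
\[
\Pr\sbra{Z \ge \tfrac{1}{4}\E[Z]} \ge \pbra{1-\tfrac{1}{4}}^2 \cdot \frac{(\E[Z])^2}{\E[Z^2]} = \frac{9}{16} \cdot \frac{\|p\|_2^4}{\|p\|_4^4}.
\]
Rewriting the event, $Z \ge \tfrac{1}{4}\|p\|_2^2$ is exactly $|p(X)| \ge \tfrac{1}{2}\|p\|_2$.

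\textbf{Step 2 (Hypercontractivity).} Apply \Cref{lem:hypercontractivity} with $q = 4$ to obtain $\|p\|_4 \le \sqrt{3}^{\,d}\|p\|_2$, so $\|p\|_4^4 \le 9^d \|p\|_2^4$, i.e., $\|p\|_2^4/\|p\|_4^4 \ge 9^{-d}$. (The lemma as stated is over $\bin^n$, but it transfers verbatim to $\binpm^n$ under the standard $\cbra{0,1}\leftrightarrow\binpm$ identification, since the operator $p \mapsto \|p\|_q$ is invariant under this affine change of variables on each coordinate.)

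\textbf{Step 3 (Combine).} Plugging the hypercontractive estimate into the Paley--Zygmund bound gives
\[
\Pr\sbra{|p(X)| \ge \tfrac{1}{2}\|p\|_2} \ge \tfrac{9}{16}\cdot 9^{-d} \ge \tfrac{1}{2}\cdot 9^{-d},
\]
which is the desired inequality. Honestly there is no real obstacle here: the only genuine choice is the threshold $\theta=1/4$ in Paley--Zygmund, which is natural because it converts a ratio of $\|p\|_2$ and $\|p\|_4$ into a statement about $|p(X)|$ versus $\|p\|_2$, and it makes the prefactor $9/16$ comfortably larger than the $1/2$ that appears in the target bound. If one wanted a tighter constant, one could optimize $\theta \in (0,1)$, but the slack $9/16 > 1/2$ means no optimization is needed.
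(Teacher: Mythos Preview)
Your proof is correct and follows exactly the approach the paper indicates: it states that \Cref{lem:weak_anticoncentration} ``is a consequence of \Cref{lem:hypercontractivity} and the Paley--Zygmund inequality'' and cites \cite{kabanets2017polynomial} without giving further details. Your observation that the codomain $\binpm$ is a typo for $\Rbb$ is also well-taken.
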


\begin{proof}[Proof of \Cref{tree lemma}]
It will be convenient to view $f$ as a function $\{\pm 1\}^m\rightarrow \{\pm 1\}^n$ rather than $\{0,1\}^m\rightarrow \{0,1\}^n$ (as well as $\Ucal^n$ as the uniform distribution over $\cbra{\pm 1}^n$ and similarly for $\Dcal_\Psi$). As this change does not affect any of our claims, we will assume it throughout the following. We also assume throughout that $n$ is at least a sufficiently large function of $d,k,\delta,\kappa$ as otherwise, we can simply assign $\{\pm 1\}^m$ to be a single cube of Type-I, noting that there are at most $n$ coordinates which are not $k$-wise independent.

We will show the lemma statement holds via induction on $\left\lceil \log(1/\delta) / \log(1/(1-2^{-Bkd})) \right\rceil$ for some sufficiently large constant $B$ and all $d,k,\delta,\kappa$. For the base case $\left\lceil \log(1/\delta) / \log(1/(1-2^{-Bkd})) \right\rceil = 0$, it must be that $\delta=1$, so we declare all of $\{\pm 1\}^m$ to be a cube of Type-III. For the inductive step, we assume that our result holds for all $d',k',\delta',\kappa$ satisfying 
\[
    \left\lceil \frac{\log\pbra{\frac{1}{\delta'}}}{ \log\pbra{\frac{1}{1-2^{-Bk'd'}}}} \right\rceil < \left\lceil \frac{\log\pbra{\frac{1}{\delta}}}{ \log\pbra{\frac{1}{1-2^{-Bkd}}}} \right\rceil.
\]
Our basic strategy will be an iterative decomposition of $\{\pm 1\}^m$.

Consider all of the non-constant monomials of degree at most $k$ in the output bits whose expectations when evaluated on $f(\Ucal^m)$ are non-zero.
Take $N= 2^{6kd}\log^k(L/\kappa)$ for $L$ some sufficiently large constant. If there are fewer than $N$ such monomials, we can declare all of $\{\pm 1\}^m$ a single subcube of Type-I, noting that other than the at most $Nk$ coordinates involved in these monomials, the outputs of $f(\Ucal^m)$ are $k$-wise independent. Otherwise let $p$ be the sum of $N$ of these monomials times the sign of their individual expectations.

Observe that each of these monomials is a function of at most $kd$ input bits, so its expectation is a multiple of $2^{-kd}$. In particular, this implies that $\E[p(f(\Ucal^m))] \geq N 2^{-kd}$. On the other hand, since it has no constant term, $\E[p(\Ucal^n)]=0$ and $\Var(p(\Ucal^n)) = N$. Hence \Cref{lem:poly_anticoncentration} implies a concentration bound:
$$
\Pr\sbra{|p(\Ucal^n)| > N 2^{-kd-1}} = 
\Pr\sbra{|p(\Ucal^n)| > \sqrt{N} 2^{-kd-1}\cdot \|p\|_2}
\le O\pbra{2^{-(N2^{-2kd-4})^{1/k}}} < \kappa/10,
$$
where the final inequality follows from choosing $L$ (and thus $N$) sufficiently large. Additionally, we claim the restriction of $\Dcal_{\Psi}$ to the at most $Nk$ bits that $p$ depends on is within a factor of two of uniform. Indeed, let $Z \subseteq [n]$ be the set of output bits affecting $p$. 
\begin{claim}\label{clm:restrict_factor_of_two}
    For all $x \in \binpm^Z$ and $n$ sufficiently large, we have $\Dcal_\Psi[Z](x) \le 2\cdot \Ucal^Z(x)$.
\end{claim}
For clarity, we will finish the remainder of the proof before proving \Cref{clm:restrict_factor_of_two}. 
The claim implies
\begin{equation}\label{eq:tree lemma concentration}
    \Pr\sbra{|p(\Dcal_\Psi)| > N 2^{-kd-1}} < \kappa/5.
\end{equation}
Moreover, take the at most $Nkd$ input coordinates that affect the value of $p$. We split $\{\pm 1\}^m$ into subcubes based on all possible conditionings of these coordinates. We claim that at least a decent fraction of them can be assigned as Type-II. In particular, note that $p(f(x))$ is some polynomial $q(x)$ of degree at most $kd$ satisfying $\E[q(\Ucal^m)] \geq N 2^{-kd}$. Therefore by \Cref{lem:weak_anticoncentration}, with probability at least $2^{-Bkd}$ over this conditioning (which completely determines the value of $q$), the resulting subcube $C$ satisfies
$$
|q(C)| \geq \|q(\Ucal^m)\|_2 / 2 \ge \E[q(\Ucal^m)]/2 \geq N2^{-kd-1},
$$
where the second inequality follows from Jensen's inequality.

Hence, for at least a $2^{-Bkd}$-fraction of these subcubes we have that $|q| \geq N2^{-kd-1}.$ In particular, \Cref{eq:tree lemma concentration} implies the uniform distribution over the union of these subcubes is $(1-\kappa/5)$-far from $\Dcal_\Psi$. We declare all of these subcubes to be Type-II, and use our inductive hypothesis on each of the remaining $M \leq 2^{Nkd}$ subcubes with parameters $\kappa' = \kappa/(5M)$ and $\delta' = \delta/(1-2^{-Bkd})$, noting that
\[
    \left\lceil \frac{\log\pbra{\frac{1}{\delta'}}}{ \log\pbra{\frac{1}{1-2^{-Bkd}}}} \right\rceil = \left\lceil \frac{\log\pbra{\frac{1}{\delta}} + \log\pbra{1-2^{-Bkd}}}{ \log\pbra{\frac{1}{1-2^{-Bkd}}}} \right\rceil = \left\lceil \frac{\log\pbra{\frac{1}{\delta}}}{ \log\pbra{\frac{1}{1-2^{-Bkd}}}} \right\rceil - 1.
\]
(If $M = 0$, we have already obtained our desired partition.) For each subcube $i$ for $i = 1,2,\dots, M$, this gives a partition into three types such that

\begin{enumerate}
\item For each subcube $C$ of Type-I there is a set of $O_{d,k,\delta,\kappa}(1)$ coordinates in $[n]$ so that the other coordinates of $f(\Ucal(C))$ are $k$-wise independent.

\item Letting $W_i$ be the union of subcubes of Type-II, we have that $\tvdist{f(\Ucal(W_i)) - \Dcal_\Psi} > 1-\frac{\kappa}{5M}$.

\item The total probability mass of all subcubes of Type-III is at most $\delta'$.
\end{enumerate}

It remains to combine these partitions together, along with the union $W'$ of Type-II subcubes $C$ with $|q(C)| \ge N2^{-kd-1}$. The Type-I subcubes all remain Type-I during this combination. For Type-II subcubes, we have already shown that $\tvdist{f(\Ucal(W')) - \Dcal_\Psi} \ge 1 - \frac{\kappa}{5}$. Appealing to \Cref{lem:tvdist_after_conditioning}, we find that the union over all Type-II subcubes satisfies
\[
    \tvdist{f(\Ucal(W_1 \cup \cdots \cup W_M \cup W')) - \Dcal_\Psi} \ge 1 - (M+2)\cdot \frac{\kappa}{5M} \ge 1 - \kappa.
\] 
We conclude by calculating that the total probability mass of all subcubes of Type-III is at most $(1-2^{-Bkd})\delta' = \delta$, as desired.
\end{proof}

We now prove \Cref{clm:restrict_factor_of_two}, showing that $\Dcal_\Psi[Z](x) \le 2\cdot \Ucal^Z(x)$ for all $x \in \binpm^Z$. This allowed us to reason about the concentration of $|p(\Dcal_\Psi)|$ using information about the concentration of $|p(\Ucal)|$.

\begin{proof}[Proof of \Cref{clm:restrict_factor_of_two}]
 For this proof, it will be more convenient to revert back to working with $\bin$ rather than $\binpm$. 
 We may restrict our attention to the weights $\bar\Psi = \cbra{s \in \Psi : |s - n/2| \le 2n/c}$ at a small cost. Indeed, \Cref{fct:binom_tail_asym} implies
\begin{equation}\label{eq:clm:central_regime_truncate_1}
    A\coloneqq\sum_{s\in\Psi : |s - n/2| > \frac{2n}{c}}\binom ns\le2\cdot\sum_{0\le s<\frac{n}{2} - \frac{2n}{c}}\binom ns\le2\cdot2^{n\cdot\Hcal\pbra{\frac12 - \frac2c}}.
\end{equation}
Then by \Cref{fct:individual_binom}, we have
\begin{equation}\label{eq:clm:central_regime_truncate_2}
    B\coloneqq\binom n{\iota(\Psi)}\ge\binom n{\frac{n}{2} - \frac{n}{c}}\ge\frac{2^{n\cdot\Hcal\pbra{\frac12 - \frac1c}}}{\sqrt{8n\pbra{\frac12 - \frac1c}^2}}.
\end{equation}
Thus,
\begin{align*}
    \Pr_{s\sim \Dcal_\Psi}\sbra{\left|s-\frac{n}{2}\right| > \frac{2n}{c}} &= \frac A{|\supp{\Dcal_\Psi}|}\le\frac AB \le \sqrt{8n\pbra{\frac12 - \frac1c}^2}\cdot 2^{n\pbra{\Hcal\pbra{\frac12 - \frac2c} - \Hcal\pbra{\frac12 - \frac1c}}}.
\end{align*}
By \Cref{fct:entropy}, we have
\begin{align*}
\Hcal\pbra{\frac12 - \frac2c} - \Hcal\pbra{\frac12 - \frac1c}
&=\frac1{2\ln(2)}\sum_{m\ge1}\frac{\pbra{2/c}^{2m}-\pbra{4/c}^{2m}}{m\cdot(2m-1)} \\
&\le \frac{(2/c)^2 - (4/c)^2}{2\ln(2)} \le -\frac{1}{c^2}.
\end{align*}
Hence
\[
    \Pr_{s\sim \Dcal_\Psi}\sbra{\left|s-\frac{n}{2}\right| > \frac{2n}{c}} \le \sqrt{8n\pbra{\frac12 - \frac1c}^2} \cdot 2^{-n/c^2}.
\]
For sufficiently large $c$ and $n$, this implies 
    \[
        \Dcal_{\Psi}[Z](x) \le \Pr_{s \sim \Dcal_{\Psi}}[|s - n/2| > 2n/c] + \Dcal_{\bar\Psi}[Z](x) \le \frac{1}{2}\cdot \Ucal^Z(x) + \Dcal_{\bar\Psi}[Z](x),
    \]
    so it remains to show $\Dcal_{\bar\Psi}[Z](x) \le (3/2)\cdot \Ucal^Z(x)$.
    Assume without loss of generality $Z = [z]$. Then for all $x \in \binpm^z$ and $s \in \bar\Psi$, if we let $y \sim \Dcal_{\{s\}}[Z]$, we have
    \[
        \Dcal_{\{s\}}[Z](x) = \Pr[y_1 = x_1] \cdot \Pr\sbra{y_2 = x_2 \mid y_1 = x_1} \cdots \Pr\sbra{y_z = x_z \mid y_i = x_i \text{ for all } i \in [z-1]}.
    \]
    We will show each factor is at most $\frac{1}{2} + \frac{\lambda}{z}$ for some sufficiently small constant $\lambda>0$, so that
    \[
        \Dcal_{\bar\Psi}[Z](x) \le \max_{s \in \bar\Psi}\Dcal_{\{s\}}[Z](x) \le \pbra{\frac{1}{2} + \frac{\lambda}{z}}^z = 2^{-z}\cdot \pbra{1 + \frac{2\lambda}{z}}^z \le \Ucal^Z(x) \cdot e^{2\lambda} \le \frac{3}{2} \cdot \Ucal^Z(x).
    \]
    Consider the $i$-th factor in the case of $x_i = 1$. (The case of $x_i=0$ is similar.) Then
    \begin{align*}
        \Pr\sbra{y_i = 1 \mid y_j = x_j \text{ for all } j \in [i-1]} &= \frac{s - \sum_{j < i} \mathbbm{1}(x_j = 1)}{n-(i-1)} \le \frac{\frac{n}{2}+\frac{2n}{c}}{n-z} \le \frac{1}{2} + \frac{\lambda}{z}
    \end{align*}
    for $c,n$ sufficiently large.
\end{proof}

We are now ready to prove \Cref{Kol prop}. Recall our goal is to show $f(\Ucal^m)$ can be written as a mixture $aE + (1-a)X$ with $a = O(\tvdist{f(\Ucal^m) - \Dcal_\Psi})$, where $|X|$ is $O(\delta)$-close in Kolmogorov distance to the binomial distribution $|\Ucal^n|$, even accounting for parity. Our strategy will be to use the previously proven inductive decomposition lemma to partition the input space into structured subcubes. Notably, one type of resulting subcube is amiable to the application of our Kolmogorov distance / parity result (\Cref{k indep lem}).
\begin{proof}[Proof of \Cref{Kol prop}]
We will show that for any $\delta>0$, we can achieve error $O(\delta)$ provided $n$ is big enough. We assume $\delta \le 1/2$
as otherwise we can take $a=0$, $X=f(\Ucal^m)$, and $E$ arbitrary. 

Let $k$ be an integer at least $\log^2(1/\delta)/\delta^2$ and let $A = (1/\delta)^3 \log(1/\delta)^{\Omega(d)^d}$ sufficiently large.
We begin by conditioning on every input coordinate which affects more than $n/A$ many output coordinates. Note that since the sum of the degrees of our coordinates is at most $nd$, the number $M$ of these coordinates is at most $Ad$. This leaves us with a partition of $\{0,1\}^m$ into $2^M$ subcubes. On each of these subcubes we apply \Cref{tree lemma} with parameters $d$, $k$, $\delta$, and $\kappa = \delta 2^{-M}/10$. This yields a partition of all of $\{0,1\}^m$ into subcubes of types I, II, and III so that:
\begin{enumerate}
\item\label{itm:Kol prop_1} For each subcube $C$ of Type-I there is a set of $O_{d,k,\delta}(1)$ coordinates in $[n]$ so that the other coordinates of $f(\Ucal(C))$ are $k$-wise independent. 
\item\label{itm:Kol prop_2} Let $W$ be the union of subcubes of Type-II. By \Cref{lem:tvdist_after_conditioning} and the choice of $\kappa$, we have 
\[
    \tvdist{f(\Ucal(W)) - \Dcal_\Psi} > 1 - (2^M + 1)\kappa \ge 1-\frac{\delta}{2}.
\]
\item\label{itm:Kol prop_3} The total probability mass of all subcubes of Type-III is at most $\delta$.
\end{enumerate}
Recall that for each subcube $C$ of Type-I, \Cref{k indep lem} guarantees some $\eta_C \in [0,1]$ such that as long as $n$ is sufficiently large, we have
$$
\big|\Pr\sbra{|f(\Ucal(C))| > t \textrm{ and } |f(\Ucal(C))| \textrm{ is even}} - \eta_C \Pr\sbra{|\Ucal^n|>t} \big|
$$
and
$$
\big|\Pr\sbra{|f(\Ucal(C))| > t \textrm{ and } |f(\Ucal(C))| \textrm{ is odd}} - (1-\eta_C) \Pr\sbra{|\Ucal^n|>t} \big|
$$
are at most
$$
O\pbra{\frac{\log(1/\delta)^{O(d)^d}}{\sqrt{A\delta}}+\frac{\log(k)}{\sqrt{k}} + \delta} = O(\delta).
$$
Letting $U$ be the union of all Type-I cubes and $\eta$ be the appropriate weighted average of the $\eta_C$'s, we see that
$$
\big|\Pr\sbra{|f(\Ucal(U))| > t \textrm{ and } |f(\Ucal(U))| \textrm{ is even}} - \eta \Pr\sbra{|\Ucal^n|>t} \big| = O(\delta),
$$
and
$$
\big|\Pr\sbra{|f(\Ucal(U))| > t \textrm{ and } |f(\Ucal(U))| \textrm{ is odd}} - (1-\eta) \Pr\sbra{|\Ucal^n|>t} \big| = O(\delta).
$$

Furthermore, let $V$ be the union of $U$ and all the Type-III cubes, as well as the Type-II cubes if their total mass is at most $4\delta$. We note that the same inequality will hold for $f(\Ucal(V))$, which we set equal to $X$. If the total mass of the subcubes of Type-II is less than $4\delta$, we set $a=0$ and are done. Otherwise, we set $a$ to be the total mass of these Type-II subcubes and let $E=f(\Ucal(W))$, where recall $W$ is the union over the subcubes of Type-II. However, we see by \Cref{itm:Kol prop_2} that $\tvdist{E-\Dcal_\Psi}\ge1-\delta/2$, so there exists an event $\Ecal$ with mass at least $1-\delta$ in $E$ but at most $\delta$ in $\Dcal_\Psi$. Moreover $f(\Ucal^m)=aE+(1-a)X$, so this event $\Ecal$ has mass at least $a(1-\delta)$ in $f(\Ucal^m)$. Hence, 
\[
    \tvdist{f(\Ucal^m)-\Dcal_\Psi} \ge a(1-\delta) - \delta \geq a/4,
\]
where we have used the fact that $a\ge4\delta$ and $\delta\le1/2$.
This establishes the upper bound on $a$ and completes our proof.
\end{proof}

\subsection{Continuity Bound}

Here we prove the following proposition. It will later be combined with the Kolmogorov bound to prove our local limit theorem.
\begin{proposition*}[\Cref{continuity prop} Restated]
Let $f\colon\{0,1\}^m\rightarrow \{0,1\}^n$ be a $d$-local function with $n$ sufficiently large (in terms of $d$). Let $\Psi \subseteq \{0,1,\ldots,n\}$ be a set containing an element $n(1/2\pm c(d))$ for some $c(d)>0$ a small enough function of $d$. Then the distribution $f(\Ucal^m)$ can be written as a mixture $aE + (1-a)X$ with $a=O(\tvdist{f(\Ucal^m) - \Dcal_\Psi})$ so that for any even $\Delta$ and $x\in\{0,1,\ldots,n\}$,
$$
\big|\Pr\sbra{|X| = x} - \Pr\sbra{|X| = x+\Delta} \big| = O_d\pbra{\frac{|\Delta|}{n}}.
$$
\end{proposition*}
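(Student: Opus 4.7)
The plan is to combine the graph-elimination reduction of \cite[Proposition 5.20]{kane2024locality} with the comparative local limit theorem deferred to \Cref{app:comp_llt}. After removing a sublinear number of input bits, the hypergraph of $f$ has $\Omega_d(n)$ pairwise non-adjacent neighborhoods of size $O_d(1)$, and conditioning on the removed bits partitions $\Ucal^m$ into subcubes $\{C\}$ with mixing weights $\alpha_C$. Within each subcube, the marginals of $f(\Ucal(C))$ on these neighborhoods are independent. The decomposition $f(\Ucal^m) = aE + (1-a)X$ will be obtained by peeling off the subcubes whose neighborhood marginals noticeably deviate from those of $\Dcal_\Psi$.

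Formally, call a neighborhood $N(v_i)$ \emph{bad} in a subcube $C$ if the marginal of $f(\Ucal(C))$ on $N(v_i)$ is $\eps_d$-far from the marginal of $\Dcal_\Psi$ on $N(v_i)$, for a small enough constant $\eps_d$; and call $C$ itself bad if at least half of its neighborhoods are bad. Because $\Psi$ contains an element within $c(d)\cdot n$ of $n/2$, an argument analogous to \Cref{clm:restrict_factor_of_two} shows the marginal of $\Dcal_\Psi$ on the $O_d(1)$ coordinates of any neighborhood is within a constant factor of the uniform distribution. Coupled with \Cref{lem:tvdist_after_product} applied to the $\Omega_d(n)$ bad neighborhoods of a bad $C$, this gives $\tvdist{f(\Ucal(C)) - \Dcal_\Psi} \ge 1 - 2^{-\Omega_d(n)}$ for every bad $C$. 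Then \Cref{lem:tvdist_after_conditioning} applied to the mixture over bad subcubes (of which there are at most $2^{o(n)}$) yields that their total mass $a$ satisfies $a = O(\tvdist{f(\Ucal^m) - \Dcal_\Psi})$. Define $E$ as the normalized mixture over bad subcubes and $X$ as the normalized mixture over the remaining good ones.

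For the continuity of $|X|$, fix a good subcube $C$ and further condition on all input bits that do not touch any of its good neighborhoods; since the chosen neighborhoods are pairwise non-adjacent this only fixes inputs outside them. Under this extra conditioning, $|f|$ becomes a constant plus an independent sum $\sum_{i \in G(C)} Y_i$, where $G(C)$ indexes the good neighborhoods of $C$, $|G(C)| = \Omega_d(n)$, and each $Y_i = |f(\Ucal(C))[N(v_i)]|$ is the Hamming weight of a distribution on $\bin^{|N(v_i)|}$ that is $O(\eps_d)$-close to uniform. A comparative local limit theorem for such sums (\Cref{thm:comp_llt_C=1_special}) applies and gives, for every even $\Delta \in \Zbb$ and every $x \in \Zbb$,
$$
\bigl|\Pr\sbra{|f(\Ucal(C))| = x} - \Pr\sbra{|f(\Ucal(C))| = x+\Delta}\bigr| = O_d\pbra{|\Delta|/n}.
$$
Averaging over the $\alpha_C$-weighted mixture of good subcubes transfers the bound to $|X|$.

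The main obstacle is arranging the hypotheses of \Cref{thm:comp_llt_C=1_special} uniformly across the conditioning in the final step: one needs $\Omega_d(n)$ of the $Y_i$ to be non-lattice modulo every odd prime up to whatever bound the theorem requires, so that the Fourier mass of $\sum Y_i$ is supported near the trivial frequencies for even-shift smoothness. This is forced by the near-uniformity of the good neighborhood marginals, since for each fixed $q$ a neighborhood of size large enough in terms of $q$ has near-binomial weight distribution and is in particular non-constant mod $q$; the relevant anticoncentration inputs are available in the style of \cite[Claims 5.16 \& 5.23]{kane2024locality}. The restriction to even $\Delta$ is essential, as nothing rules out $Y_i$ whose support is concentrated on the even integers (e.g., neighborhoods with support $\{0^k, 1^k\}$ and $k$ even).
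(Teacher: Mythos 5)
Your high-level skeleton matches the paper's proof: peel off "bad" subcubes via \Cref{neighborhoods cor} and \Cref{lem:tvdist_after_product}/\Cref{lem:tvdist_after_conditioning} to bound $a$, then apply \Cref{thm:comp_llt_C=1_special} to the remaining good subcubes after further conditioning. However, the continuity part has a genuine gap that stems from two related choices you make differently from the paper. First, your conditioning --- "all input bits that do not touch any of its good neighborhoods" --- does \emph{not} fix the outputs that lie outside the chosen neighborhoods. An output $\ell\notin\bigcup_i N(v_i)$ can still share an edge with some $k\in N(v_i)$ (non-adjacency only constrains edges \emph{between} the chosen neighborhoods, not between a chosen neighborhood and the rest of the graph), so $\ell$ depends on a free input and remains random and correlated with $Y_i$. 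Hence the claim "$|f|$ becomes a constant plus an independent sum $\sum Y_i$" is false as stated. The paper avoids this by instead conditioning on all inputs that do not affect any \emph{central} vertex $v_i$: then any output outside $\bigcup_i N(v_i)$ truly is fixed (because sharing an input with $v_i$ is precisely membership in $N(v_i)$), and each $X_i$ depends only on $I_G(v_i)$.

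Second, once you switch to the paper's conditioning to make the decomposition honest, the post-conditioning weight $X_i$ is a function of only $\le d$ input bits and is \emph{not} "$O(\eps_d)$-close to uniform" --- your anticoncentration justification via near-binomial weight distributions no longer applies. This is why the paper's definition of a weird subcube uses \emph{exact} non-uniformity of the neighborhood marginal (not $\eps_d$-farness from $\Dcal_\Psi$'s marginal): exact uniformity on $N(v_i)$ is what lets the paper argue that conditioning on $v_i=0$ versus $v_i=1$ gives different weight-mod-$s$ distributions, hence $X_i$ cannot be constant mod $s$ for all settings of the extraneous bits, and then discreteness (the neighborhood depends on $\le dt$ bits, $X_i$ on $\le d$) gives a quantitative $2^{-dt}$ probability of being $2^{-d}$-far from constant mod $s$. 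Your approximate-uniformity version destroys the exact-conditional-distribution comparison, and your hand-wave that "the relevant anticoncentration inputs are available in the style of Claims 5.16 \& 5.23" is precisely where the nontrivial work lives. To repair the argument you should adopt the paper's exact-uniformity definition of a bad neighborhood (so good neighborhoods are \emph{exactly} uniform; discreteness of $d$-local marginals then gives the $2^{-dt}$-farness needed for the \Cref{lem:tvdist_after_product} step anyway), use the paper's choice of extraneous inputs, and carry out the conditional-distribution-mod-$s$ argument explicitly before invoking the comparative LLT.
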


The proof requires machinery from \cite{kane2024locality}. In particular, we recall the alternative viewpoint of a function as a hypergraph $G$ on the output bits $[n]$ with an edge for each input bit containing all of the output bits that depend on it. Observe the locality assumption implies $G$ has maximum degree at most $d$. As a consequence of \cite[Corollary 4.11]{kane2024locality} we have:

\begin{corollary}\label{neighborhoods cor}
Let $G$ be a hypergraph on $n$ vertices with maximum degree at most $d$. For any increasing function $F\colon\mathbb{N}\rightarrow \mathbb{N}$, there exists a set $S$ of edges in $G$ whose removal yields at least $r = n/O_{d,F}(1)$\footnote{Recall $O_{d,F}(1)$ denotes a quantity whose value is constant once $d$ and $F$ are fixed.} vertices in $G$ whose neighborhoods have size at most $t=O_{d,F}(1)$ and are pairwise non-adjacent, and satisfies $|S| \leq r/F(t)$.
\end{corollary}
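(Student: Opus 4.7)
\emph{Plan.} The plan is a two-stage pruning-plus-greedy scheme, with the constants $r$ and $t$ tuned at the end once $d$ and $F$ are fixed. In Stage~1 I would handle large-arity edges by thresholding: fix a threshold $T = T(d, F)$ and let $S$ be the set of edges of $G$ with more than $T$ vertices. Using the degree bound $\sum_{e \in E(G)} |e| \le dn$ (each vertex lies in at most $d$ edges), the number of removed edges satisfies $|S| \le dn/T$. In the pruned hypergraph $G \setminus S$, every remaining edge has at most $T$ vertices, so every vertex $v$ has $|N_{G \setminus S}(v)| \le d(T-1) + 1 \le dT$; we set $t := dT$.

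\emph{Stage 2: greedy selection of non-adjacent neighborhoods.} Build the auxiliary graph $A$ on $V(G)$ in which $u \sim_A v$ iff $N_{G \setminus S}(u)$ and $N_{G \setminus S}(v)$ are connected in $G \setminus S$. A direct double-count (pick $v_1 \in N(u)$, an edge $e \ni v_1$, $v_2 \in e$, and $v \in N(v_2)$) gives $\deg_A(u) = O(d^3 T^3)$. A standard greedy argument then produces an independent set $I \subseteq V(G)$ of size $|I| \ge n / (1 + O(d^3 T^3))$, which we take as $r$. Vertices in $I$ have, by construction, pairwise non-adjacent neighborhoods in $G \setminus S$, each of size at most $t$.

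\emph{Main obstacle and resolution.} The delicate step is verifying $|S| \le r / F(t)$, which unpacks as $T \gtrsim d^4 T^3 F(dT)$. A one-shot choice of $T$ cannot satisfy this for arbitrary increasing $F$ (e.g.\ $F(t) = 2^t$), so one must do something more careful. The resolution, which is the heart of \cite[Cor.~4.11]{kane2024locality}, is to iterate rather than threshold once: in each round, either the current graph already admits an independent set of small, non-adjacent neighborhoods achieving the target counts, or one identifies a small ``high-impact'' set of edges (not merely the largest ones) whose removal strictly decreases a potential such as $\sum_v |N(v)|$ while consuming only a small fraction of the $r/F(t)$ edge budget. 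Amortizing the deletions across $O_{d,F}(1)$ rounds yields the stated parameters, whence the corollary follows directly.
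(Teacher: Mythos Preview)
The paper does not prove this corollary; it is stated as a black-box consequence of \cite[Corollary~4.11]{kane2024locality}. Your proposal ends up in the same place: after the (correctly diagnosed) failure of the single-threshold attempt, you defer to that cited result for the iterative argument, so your treatment is consistent with the paper's.

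Two small remarks. First, in the paper's notation $N_G(v) = \{v' : I_G(v) \cap I_G(v') \ne \emptyset\}$ is a two-hop object, so after your Stage~1 pruning one gets $|N_{G\setminus S}(v)| = O((dT)^2)$ rather than $dT$; this is only a polynomial-in-constants shift and does not affect your diagnosis that the constraint $|S| \le r/F(t)$ cannot be met by a single threshold. Second, your sketch of the iterative fix (delete ``high-impact'' edges, decrease a potential such as $\sum_v |N(v)|$, amortize over $O_{d,F}(1)$ rounds) is in the right spirit but too vague to stand alone --- you have not specified which edges are high-impact, why their removal is both cheap in the $r/F(t)$ budget and makes quantifiable progress, or why termination occurs in boundedly many rounds. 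Since the paper itself supplies none of these details and simply cites the prior work, your proposal is no less complete than the paper's own treatment; it is just not a self-contained proof.
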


We will also need the following technical density comparison result, whose full generality and proof are deferred to \Cref{app:comp_llt}.

\begin{theorem}[{Special case of \Cref{thm:comp_llt_C=1}}]\label{thm:comp_llt_C=1_special}
Let $t\ge1$ be an integer, and let $X_1,\ldots,X_n$ be independent random variables in $\cbra{0,1,\ldots,t}$.
For each $i\in[n]$ and integer $r\ge1$, define $p_{r,i}=\max_{x\in\Zbb}\Pr\sbra{X_i\equiv x\Mod r}$ and assume
\begin{equation*}
\sum_{i\in[n]}(1-p_{r,i})\ge \Omega_t(n)
\quad\text{holds for all $r\in\cbra{3,4,\dots, t}$.}
\end{equation*}
Then for any $x\in\Zbb$ and even $\Delta\in\Zbb$, we have
$$
\Pr\sbra{\sum_{i\in[n]}X_i=x}-\Pr\sbra{\sum_{i\in[n]}X_i=x+\Delta}\le O_t\pbra{\frac{|\Delta|}{n}}.
$$
\end{theorem}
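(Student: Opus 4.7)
The plan is to invoke the standard Fourier-analytic local limit theorem machinery. Set $S = \sum_{i\in[n]} X_i$ and let $\phi(\theta) = \E[e^{i\theta S}] = \prod_{i\in[n]}\E[e^{i\theta X_i}]$ be its characteristic function. Since $S$ is integer-valued and bounded, Fourier inversion gives
$$
\Pr\sbra{S=x} - \Pr\sbra{S=x+\Delta} = \frac{1}{2\pi}\int_{-\pi}^{\pi} e^{-i\theta x}\pbra{1-e^{-i\theta\Delta}}\phi(\theta)\,d\theta,
$$
so it suffices to bound the absolute value of this integral by $O_t(|\Delta|/n)$. The control on $|\phi|$ comes from the identity $|\E[e^{i\theta X_i}]|^2 = \E[\cos(\theta(X_i - X_i'))] \le \exp\pbra{-2\E[\sin^2(\theta(X_i - X_i')/2)]}$ with $X_i'$ an independent copy. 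Because each $X_i$ is supported in $\{0,\dots,t\}$, $|\phi|$ can be close to $1$ only in small neighborhoods of the finite set of ``bad points'' $\theta^\star = 2\pi k/r$ with $r \le t$ and $\gcd(k,r)=1$. I would partition $[-\pi,\pi]$ into disjoint neighborhoods of these bad points (of radius $c_t$ each) together with a ``generic'' complement, and bound the integral on each piece separately.

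The analysis splits into three cases. \textbf{(i) Near $\theta = 0$:} the $r=3$ hypothesis forces $\Omega_t(n)$ of the $X_i$ to be non-constant (since $p_{3,i} < 1$ means $X_i$ takes at least two values), and for such $i$ one has $\E[(X_i - X_i')^2] \ge 1 - p_{3,i} = \Omega_t(1)$, hence $|\phi(\theta)|^2 \le \exp(-c_t n \theta^2)$. Combined with $|1 - e^{-i\theta\Delta}| \le |\theta\Delta|$, integrating against the Gaussian-type tail yields $O_t(|\Delta|/n)$. \textbf{(ii) Near $\theta = \pm\pi$:} since $\Delta$ is even, $e^{-i\pi\Delta}=1$, so $|1-e^{-i\theta\Delta}| \le |\Delta(\theta \mp \pi)|$. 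A direct calculation using $\sin^2((\pi+s)Y/2) = \sin^2(sY/2)$ if $Y$ is even and $\cos^2(sY/2)$ if $Y$ is odd yields $|\phi(\pi+s)|^2 \le \exp(-c_t n \min(s^2,1))$, after which the bound $\int |s\Delta|\exp(-c_t n s^2)\,ds = O_t(|\Delta|/n)$ finishes this region. \textbf{(iii) Near $\theta^\star = 2\pi k/r$ with $r \in \{3,\dots,t\}$:} in a small neighborhood of $\theta^\star$, one shows $\E[\sin^2(\theta Y_i/2)] = \Omega_t(1 - p_{r,i})$, so that by the hypothesis $\sum_i (1-p_{r,i}) \ge \Omega_t(n)$ one obtains $|\phi(\theta)|^2 \le \exp(-c_t n)$ uniformly throughout the neighborhood, contributing an exponentially small quantity in total. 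The generic complement is handled identically, since there $|\E[e^{i\theta X_i}]|$ is bounded away from $1$ by a constant for every non-constant $X_i$.

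\textbf{Main obstacle.} The trickiest case is (ii), because the hypothesis is only imposed for $r \ge 3$; in particular, $|\phi|$ may not decay near $\theta = \pi$ at all (for instance, if every $X_i$ is constant modulo $2$, then $|\phi(\pi)|=1$). The workaround is to give up on making $|\phi|$ small near $\pi$ and instead exploit the even-$\Delta$ hypothesis to make the Fourier factor $(1 - e^{-i\theta\Delta})$ small there. The parity-split expansion of $\sin^2((\pi+s)Y/2)$ is the key algebraic manoeuvre; once set up, one verifies that either $\Pr[Y_i \text{ odd}]$ or $\E[Y_i^2 \mathbbm{1}[Y_i \text{ even}]]$ is $\Omega_t(1)$ for each non-constant $X_i$, which (combined with the $\Omega_t(n)$ lower bound on non-constant indices) yields the required decay of $|\phi(\pi+s)|$ outside an $O(1/\sqrt{n})$ window. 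All remaining calculations are routine tail bounds on Gaussian-type integrals.
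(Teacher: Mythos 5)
Your Fourier-analytic route is correct and genuinely different from the paper's. The paper proves the more general \Cref{thm:comp_llt_C=1} by a direct probabilistic argument: it groups the $X_i$ into $\Omega_t(n)$ disjoint blocks, uses B\'ezout's identity to convert the ``not concentrated mod $r$'' hypotheses into the existence, inside each block-sum, of two values differing by exactly $\phi$ each carried with probability at least $\alpha=\Omega_t(1)$ (\Cref{lem:comp_llt_simp_C=1}); after conditioning on everything else, the density difference reduces to a difference of nearby binomial coefficients, bounded by \Cref{fct:nearby_binom}. You instead apply Fourier inversion to the integer-valued sum, partition $[-\pi,\pi]$ into small neighbourhoods of the rational frequencies $2\pi k/r$ (for $r\le t$) plus a generic complement, derive Gaussian-type decay of $|\phi|$ near $0$ and uniform exponential decay near $2\pi k/r$ (for $3\le r\le t$) and on the generic set from the anticoncentration hypothesis, and exploit the evenness of $\Delta$ to make the factor $1-e^{-i\theta\Delta}$ vanish at $\pm\pi$, the one frequency where the hypotheses give no decay (since $r=2$ is omitted). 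Your observation in case (ii) — that each anticoncentrated index has either $\Pr[Y_i\text{ odd}]=\Omega_t(1)$ or $\E[Y_i^2\mathbbm{1}[Y_i\text{ even}]]=\Omega_t(1)$, whence $|\phi(\pi+s)|\le\exp(-\Omega_t(n\min(s^2,1)))$ — correctly isolates and handles this obstruction, and the resulting tail integrals all give $O_t(|\Delta|/n)$. The trade-off: the paper's combinatorial argument extends with no extra work to an arbitrary modulus $\phi$ and a congruence-class-restricted $\Delta$, yielding \Cref{thm:comp_llt_C=1} in full generality in a single pass; your Fourier argument is tailored to $\phi=2$, but it follows the textbook local-limit-theorem template and entirely avoids the B\'ezout / block-sum construction.
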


\begin{proof}[Proof of \Cref{continuity prop}]
Let $S$ be the set of input coordinates promised by \Cref{neighborhoods cor}, taking $F(t)$ to be a sufficiently large multiple of $2^{2dt}$. Each setting of the variables in $S$ produces a subcube of the inputs. We call a subcube $C$ \emph{weird} if for at least half of the neighborhoods of outputs promised by \Cref{neighborhoods cor}, the distribution of $f(\Ucal(C))$ on those outputs is not uniform.
\begin{claim}
    Any weird subcube $C$ satisfies
    \[
        \tvdist{f(\Ucal(C)) - \Dcal_\Psi} > 1-2^{-\Omega(r \cdot 2^{-2dt})}.
    \]
\end{claim}
\begin{proof}
By \Cref{neighborhoods cor}, the outputs of $f(\Ucal(C))$ on any two of the resulting neighborhoods are independent. Moreover, each neighborhood only depends on at most $dt$ input bits, so if its corresponding marginal distribution is not uniform, it must be at least $2^{-dt}$-far from uniform. Furthermore, \Cref{fct:individual_binom} and \Cref{fct:entropy} imply
\[
    \eta \coloneqq \min_x \frac{\Ucal^n(x)}{\Dcal_\Psi(x)} \ge \frac{\binom{n}{n\pbra{\frac{1}{2}\pm c(d)}}}{2^n} \ge \frac{2^{n\pbra{\Hcal\pbra{\frac{1}{2}\pm c(d)}-1}}}{\sqrt{8n\pbra{\frac{1}{4} - c(d)^2}}} \ge 2^{-\Omega\pbra{n\cdot c(d)^2}}.
\]
Choose $c(d)$ to be at most a small multiple of $\pbra{\frac{r}{n}}^{1/2} \cdot 2^{-dt}$. The claim then follows from applying \Cref{lem:tvdist_after_product} to deduce
\[
    \tvdist{f(\Ucal(C)) - \Dcal_\Psi} \ge 1 - 2\cdot e^{-2^{-2dt-2} \cdot r}/\eta > 1-2^{-\Omega(r \cdot 2^{-2dt})}. \qedhere
\]
\end{proof}

Let $w$ be the fraction of cubes (resulting from conditioning on the variables in $S$) that are weird. Combining our bound $|S| \le r/F(t)$ and our choice of $F = \Omega(2^{2dt})$ with  \Cref{lem:tvdist_after_conditioning} yields
\[
    \tvdist{f(\Ucal^m) - \Dcal_\Psi} \ge w\pbra{1-2^{|S|-\Omega(r \cdot 2^{-2dt})}} \ge \frac{w}{2}.
\]
Thus in our conclusion, we can take $a=w$, $E$ to be the mixture of $f$ applied to the weird subcubes, and $X$ to be $f$ applied to the mixture of the non-weird subcubes. It remains to show that
$$
\big|\Pr\sbra{|f(\Ucal(C))| = x} - \Pr\sbra{|f(\Ucal(C))| = x+\Delta} \big| = O_d\pbra{\frac{|\Delta|}{n}}
$$
for any non-weird subcube $C$. Taking the mixture will yield our result.

For a non-weird subcube $C$, consider the $r$ non-adjacent neighborhoods promised to us by \Cref{neighborhoods cor}, and remove the (at most $r/2$) ones for which the output distribution is not uniform. By renaming which neighborhoods we are considering and decreasing $r$ by a factor of $2$, we may assume that no such neighborhoods exist.  

Each neighborhood is $N(v_i)$ for some central element $v_i$. Take a random assignment of all of the input bits that do not affect some central element, which we call \emph{extraneous inputs}. This fixes the value of every output bit not in one of these neighborhoods, and the weight of the output bits of $v_i$'s neighborhood becomes a random variable $X_i$. Note that the total weight of the output of $f$ is some constant plus the sum of the $X_i$'s, which are independent. We would like to claim that \Cref{thm:comp_llt_C=1_special} can be applied to this situation with high probability.

From here, we proceed similarly to \cite[Claims 5.16 \& 5.23]{kane2024locality}. In particular, consider $v_i$'s neighborhood $N_i$ for some $i$ and some integer modulus $3\leq s \leq t$.
Observe that because the distribution over $f(\Ucal(C))[N_i]$ is uniform, the distribution of the weight modulo $s$ conditioned on the $v_i^{th}$ coordinate being $1$ and conditioned on the $v_i^{th}$ coordinate being $0$ are not the same. On the other hand, conditioning on the extraneous inputs, this coordinate is equally likely to be $0$ as $1$. 
Hence it cannot be the case that $X_i$ is always constant mod $s$, as this would imply that both the distribution of $|f(\Ucal(C))[N_i]| \pmod{s}$ conditioned on $f(\Ucal(C))[v_i] = 0$ and the distribution conditioned on $f(\Ucal(C))[v_i] = 1$ would be equal to the distribution of $X_i \pmod{s}$. 

However, after fixing the extraneous inputs, $X_i$ only depends on at most $d$ input bits. Thus, if it is not constant mod $s$, it must be at least $2^{-d}$-far from constant. 
Furthermore, the bits in the neighborhood only depend on at most $dt$ input bits, so with probability at least $2^{-dt}$ over the choice of values for the extraneous bits, $X_i$ is at least $2^{-d}$-far from constant mod $s$. 
Finally, note that since the neighborhoods are non-adjacent (after removing the edges in $S$), the extraneous bits used to determine $X_i$ are disjoint from the extraneous bits used to determine $X_j$ for $i\neq j$.
Thus, whether or not $X_i$ is constant mod $s$ is independent of whether $X_j$ is.

By Chernoff's inequality (\Cref{fct:chernoff}) and a union bound, except with probability $2^{-\Omega_d(n)}$ over the values of the extraneous bits, we have that there are $\Omega_d(n)$ neighborhoods where $X_i$ is at least $2^{-d}$-far from constant mod $s$ for each $3\leq s \leq t$. We can now apply \Cref{thm:comp_llt_C=1_special} to show that if this event occurs, then the corresponding subcube $C'$ defined by fixing the bits in $S$ and the extraneous bits satisfies
$$
\left|\Pr\sbra{|f(\Ucal(C'))| = x} - \Pr\sbra{|f(\Ucal(C'))| = x+\Delta} \right| = O_d\pbra{\frac{|\Delta|}{n}}.
$$
Taking the mixture over all such subcubes gives our result.
\end{proof}

\subsection{Putting it Together}

We are now prepared to prove \Cref{LLT Theorem} by combining our previous Kolmogorov and continuity bounds. Recall we wish to show the existence of a distribution $\Mcal$ which is a mixture of $\Deven$ and $\Dodd$ such that $\tvdist{f(\Ucal^m) - \Mcal} \le O(\tvdist{f(\Ucal^m) - \Dcal_\Psi}) + \delta$ for any $\delta > 0$, provided $n$ is sufficiently large in terms of $d$ and $\delta$. 

\begin{proof}[Proof of \Cref{LLT Theorem}]
Let $c > 0$ be a small constant, and set $\delta' = \max\cbra{\tvdist{f(\Ucal^m) - \Dcal_\Psi}/c, \delta}.$ Then we have
\begin{equation}\label{eq:assume_at_most_Odelta}
    \tvdist{f(\Ucal^m) - \Dcal_\Psi} \le c\delta'.
\end{equation}
Note for any distribution $\Mcal$ which is a mixture of $\Deven$ and $\Dodd$, two applications of \Cref{lem:distance_to_sym} yield
\begin{align*}
\tvdist{f(\Ucal^m)-\Mcal} & = \Theta(\tvdist{|f(\Ucal^m)|-|\Mcal|}) + \Theta(\tvdist{f(\Ucal^m)_\sym-f(\Ucal^m)})\\
& = \Theta(\tvdist{|f(\Ucal^m)| - |\Mcal|}) + \Theta(\tvdist{f(\Ucal^m) - \Dcal_\Psi} - \tvdist{|f(\Ucal^m)| - |\Dcal_\Psi|})\\
& = \Theta(\tvdist{|f(\Ucal^m)| - |\Mcal|}) + O(\tvdist{f(\Ucal^m) - \Dcal_\Psi})
\end{align*}
Thus, it suffices to construct a suitable $\Mcal$ and show $\tvdist{|f(\Ucal^m)| - |\Mcal|} = O(\delta'')$ for some $\delta'' = \Theta(\delta')$ with a sufficiently small implicit constant, since
\[
    \Theta\pbra{\tvdist{|f(\Ucal^m)| - |\Mcal|}} = O(\delta'') \le \delta' \le \frac{1}{c}\cdot \tvdist{f(\Ucal^m) - \Dcal_\Psi} + \delta.
\]

We begin by applying \Cref{Kol prop} with the $\delta$ from that proposition taken to be $\kappa^2$ for some $\kappa>0$  sufficiently small in terms of $d$ and $\delta''$. Let $\Mcal = \eta\cdot\Deven + (1-\eta)\cdot\Dodd$ for the $\eta$ coming from that result. Next, we partition $\{0,1,\ldots,n\}$ into \emph{parity intervals}, each of which is an interval of length $\Theta(\kappa \sqrt{n})$ intersected with either the set of even integers or the set of odd integers. In particular, we have by \Cref{Kol prop} and \Cref{eq:assume_at_most_Odelta} that $f(\Ucal^m)$ is $O(\delta'')$-close to a distribution $X$ so that for any parity interval $\Ical$, we have that
\begin{equation}\label{eq:XvsMinI}
    \big|\Pr\sbra{|X| \in \Ical} - \Pr\sbra{|\Mcal| \in \Ical}\big| = O(\kappa^2).
\end{equation}

We next apply \Cref{continuity prop} to find that $f(\Ucal^m)$ is $O(\delta'')$-close to a distribution $X'$ satisfying
\begin{equation}\label{eq:combining_nearby_weights}
    \big|\Pr\sbra{|X'| = x} - \Pr\sbra{|X'|=x+\Delta}\big| = O_d(\Delta/n)
\end{equation}
for any $x$ and even $\Delta$. Note also that $\tvdist{X-X'} = O(\delta'')$. We claim that this is enough to show that $\tvdist{|X'|-|\Mcal|}$ is small. In particular, we have that 
\begin{align}\label{sum over intervals equation}
\notag \tvdist{|X'|-|\Mcal|} & = \frac{1}{2} \sum_{x=0}^n |\Pr\sbra{|X'|=x} - \Pr\sbra{|\Mcal|=x}|\\
\notag & = \frac{1}{2}\sum_{\Ical} \sum_{x\in \Ical}|\Pr\sbra{|X'|=x} - \Pr\sbra{|\Mcal|=x}|\\
\notag & \leq \sum_{\Ical} \sum_{x\in \Ical}\left|\Pr\sbra{|X'|=x} - \frac{\Pr\sbra{|X'| \in \Ical}}{|\Ical|}\right| + \left|\frac{\Pr\sbra{|X'| \in \Ical}-\Pr\sbra{|\Mcal| \in \Ical}}{|\Ical|}\right| \\
& \ \ \ \ \ \ \ \ \ \ \ \ \ \ \ \  + \left|\Pr\sbra{|\Mcal|=x} - \frac{\Pr\sbra{|\Mcal| \in \Ical}}{|\Ical|}\right|.
\end{align}

To analyze this, we note that \Cref{fct:hoeffding} implies all but $O(\delta'')$ of the mass of $|\Mcal|$ is supported on $O(\log(1/\delta'')/\kappa)$ many parity intervals, which we call \emph{big}. The total mass that $|X|$ assigns to these big intervals is
\begin{align*}
\sum_{\textrm{big }\Ical} \Pr\sbra{|X| \in \Ical} & = \sum_{\textrm{big }\Ical} \pbra{\Pr\sbra{|\Mcal| \in \Ical} + O(\kappa^2)} \tag{by \Cref{eq:XvsMinI}} \\
& = 1- O(\delta'') + O(\kappa^2 \log(1/\delta'')/\kappa) = 1-O(\delta''),
\end{align*}
with the final equality holding for $\kappa$ sufficiently small.
Therefore, since $\tvdist{X-X'}= O(\delta'')$, $X'$ also assigns all but an $O(\delta'')$-fraction of its mass to big intervals. Hence, up to this $O(\delta'')$ error, we can restrict the sum in \Cref{sum over intervals equation} to big intervals. Thus, 
\begin{align*}
\tvdist{|X'|-|\Mcal|} & \leq O(\delta'')+ \sum_{\textrm{big }\Ical} \sum_{x\in \Ical}\left|\Pr\sbra{|X'|=x} - \frac{\Pr\sbra{|X'| \in \Ical}}{|\Ical|}\right| + \left|\frac{\Pr\sbra{|X'| \in \Ical}-\Pr\sbra{|\Mcal| \in \Ical}}{|\Ical|}\right| \\
& \ \ \ \ \ \ \ \ \ \ \ \ \ \ \ \  + \left|\Pr\sbra{|\Mcal|=x} - \frac{\Pr\sbra{|\Mcal| \in \Ical}}{|\Ical|}\right|.
\end{align*}
Clearly,
\begin{align*}
    \sum_{x\in \Ical}\left|\frac{\Pr\sbra{|X'| \in \Ical}-\Pr\sbra{|\Mcal| \in \Ical}}{|\Ical|}\right| &= |\Pr\sbra{|X'| \in \Ical}-\Pr\sbra{|\Mcal| \in \Ical}| \\
    &\le |\Pr\sbra{|X'| \in \Ical}-\Pr\sbra{|X| \in \Ical}|+ |\Pr\sbra{|X| \in \Ical}-\Pr\sbra{|\Mcal| \in \Ical}|.
\end{align*}
The first term summed over all $\Ical$ is at most $\tvdist{X-X'}=O(\delta'')$. The second term is at most $O(\kappa^2)$ by \Cref{eq:XvsMinI}, so summed over all big intervals is $O(\delta'')$ (for small enough $\kappa$).

Additionally, note that
$$
\left|\Pr\sbra{|X'|=x} - \frac{\Pr\sbra{|X'| \in \Ical}}{|\Ical|}\right| \le
\max_{y\in \Ical} |\Pr\sbra{|X'|=x} - \Pr\sbra{|X'|=y}|.
$$
Since $x$ and $y$ have the same parity in any parity interval, \Cref{eq:combining_nearby_weights} implies this is at most
$$
\max_{y\in \Ical} O_d\pbra{\frac{|x-y|}{n}} = O_d\pbra{\frac{\kappa \sqrt{n}}{n}} = O_d\pbra{\frac{\kappa}{\sqrt{n}}}.
$$
Summing this over all $x\in \Ical$ gives $O_d(\kappa^2)$, and summing over all big intervals gives $O_d(\kappa \log(1/\delta'')) = O(\delta'')$. The sum of the 
$$
\left|\Pr\sbra{|\Mcal|=x} - \frac{\Pr\sbra{|\Mcal| \in \Ical}}{|\Ical|}\right|
$$
terms can be bounded similarly.
We infer that $\tvdist{|X'|-|\Mcal|} = O(\delta'')$, and thus by the triangle inequality
\[
    \tvdist{|f(\Ucal^m)|-|\Mcal|} \le \tvdist{|f(\Ucal^m)|-|X|} + \tvdist{|X|-|X'|} + \tvdist{|X'|-|\Mcal|} = O(\delta''),
\]
completing our proof.
\end{proof}

\section*{Acknowledgments}
We thank anonymous reviewers for helpful comments on an earlier version of this manuscript, and we thank Jason Gaitonde for pointing us to \cite{chattopadhyay2020xor} on MathOverflow \cite{mathoverflow}. KW wants to thank Tiancheng He for answering questions in probability theory. AO wants to thank Farzan Byramji, Shachar Lovett, and Jackson Morris for suggesting references about low-depth circuits.

\bibliographystyle{alpha} 
\bibliography{biblio}

\appendix
\section{Density Comparison of Sum of Integral Random Variables}\label{app:comp_llt}

The goal of this section is to prove the following density comparison result for sums of integral random variables.

\begin{theorem}\label{thm:comp_llt_C=1}
Let $t\ge1$ be an integer, and let $X_1,\ldots,X_n$ be independent random variables in $\cbra{0,1,\ldots,t}$.
Let $\Phi\subseteq\cbra{2,3,\ldots,t}$.
Define $\phi$ as the least common multiple of values in $[t]\setminus\Phi$.

For each $i\in[n]$ and integer $r\ge1$, define $p_{r,i}=\max_{x\in\Zbb}\Pr\sbra{X_i\equiv x\Mod r}$ and assume\footnote{Note that if \Cref{eq:thm:comp_llt_C=1_1} holds for some $r$, then it also holds for $r'$ that is a multiple of $r$ as $\Pr\sbra{X_i\equiv x\Mod{r'}}\le\Pr\sbra{X_i\equiv x\Mod r}$. Hence we may assume that $\Phi$ contains all the multiples of $r$ (up to $t$) if $r\in\Phi$.}
\begin{equation}\label{eq:thm:comp_llt_C=1_1}
\sum_{i\in[n]}(1-p_{r,i})\ge L>0
\quad\text{holds for all $r\in\Phi$.}
\end{equation}
Let $\alpha=\pbra{\frac L{4n(t+1)}}^{t^2\phi}$ and assume $m\coloneqq \floorbra{L/(16t^4\phi)} \ge 1$.
Then for any $x\in\Zbb$, we have
$$
\Pr\sbra{\sum_{i\in[n]}X_i=x}-\Pr\sbra{\sum_{i\in[n]}X_i=x+\Delta}\le\frac{22|\Delta|}{\phi\cdot\alpha m}
$$
holds for any $\Delta\in\Zbb$ that is a multiple of $\phi$.
\end{theorem}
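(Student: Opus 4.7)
My plan is to combine a combinatorial grouping argument with the classical local limit theorem for symmetric binomials. The form of the target bound, $|\Delta|/(\phi\alpha m)$, suggests constructing $m$ disjoint groups of indices so that each group's sum $Y_j$ contains, with probability $2\alpha$, a uniform component on some two-point set $\cbra{c_j,c_j+\phi}$. The full sum then decomposes as a mixture indexed by how many groups ``fire'' the uniform branch, and conditional on that set the sum contains an independent $\phi\cdot\text{Binomial}(|T|,1/2)$ factor. Standard local LLT for the symmetric binomial then converts smoothness of $\text{Bin}(|T|,1/2)$ into smoothness of the original sum at scale $\phi$, with the $1/|T|$ factor averaged over $|T|\approx 2\alpha m$.

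\textbf{Identifying useful indices and forming groups.} For each $r\in\Phi$, Markov's inequality applied to $\cbra{p_{r,i}}$ yields at least $L/(2(t+1))$ indices $i$ with $1-p_{r,i}\ge L/(2n(t+1))$; for each such $i$ there exist two values $a,b\in\cbra{0,\ldots,t}$ with $a\not\equiv b\Mod r$ and $\Pr\sbra{X_i=a},\Pr\sbra{X_i=b}\ge L/(4n(t+1))$. Greedily carve a subset of indices into $m=\floorbra{L/(16t^4\phi)}$ disjoint groups $G_1,\ldots,G_m$ of size at most $t^2\phi$, each containing at least $t\phi$ ``$r$-useful'' indices (with the $r$-witnesses chosen disjoint across different $r\in\Phi$). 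Since $|\Phi|\le t-1$ and each group uses at most $t^2\phi$ indices, the total demand $mt^2\phi\le L/16$ sits comfortably under the supply of useful indices for each $r$.

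\textbf{Producing a $\phi$-step component in each group (the main obstacle).} The crux is the combinatorial claim that for each $j$ there exist assignments $\vec a,\vec a'\in\prod_{i\in G_j}\cbra{0,\ldots,t}$ with $\sum_i a_i'-\sum_i a_i=\phi$ and $\Pr\sbra{\vec X_{G_j}=\vec a},\Pr\sbra{\vec X_{G_j}=\vec a'}\ge\alpha=\pbra{L/(4n(t+1))}^{t^2\phi}$. I would build $\vec a$ by putting each coordinate of $G_j$ at its modal value (probability $\ge 1/(t+1)\ge L/(4n(t+1))$), and $\vec a'$ by flipping a carefully chosen subset of $r$-witnesses from their modal value to the promised alternate value, yielding integer shifts $d_i\in\cbra{-t,\ldots,t}\setminus\cbra{0}$ with $r\nmid d_i$ whenever $i$ is an $r$-witness. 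Let $g$ be the gcd of all these shifts across all $r\in\Phi$; since $r\nmid g$ for every $r\in\Phi$ and $g\le t$, one has $g\in[t]\setminus\Phi$ and hence $g\mid\phi$. A Schur/Bezout-style subset-sum argument then says that with $t\phi$ witnesses per $r$ (so more than $\phi/g$ copies available), every multiple of $g$ in a range $\supseteq[-t^2\phi,t^2\phi]\ni\phi$ is realizable as a subset sum, giving the desired $\vec a'$. The probability bound $\alpha$ comes from multiplying at most $t^2\phi$ coordinate probabilities, each at least $L/(4n(t+1))$. I expect the combinatorial claim to be the most delicate point, as one needs the witnesses to generate a sufficiently rich set of integer shifts (not just residues) to hit $\phi$ exactly; an inductive argument on $|\Phi|$, peeling off one modulus at a time while padding the remaining coordinates at their modal values, is my planned route.

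\textbf{Assembling the LLT.} Given the lemma, write each group sum as $Y_j=2\alpha V_j+(1-2\alpha)W_j$ with $V_j\sim\Ucal(\cbra{c_j,c_j+\phi})$, and expand $S=\sum_i X_i$ as a mixture indexed by the set $T\subseteq[m]$ of groups on the $V_j$ branch, with $|T|\sim\mathrm{Binomial}(m,2\alpha)$. Conditional on $T$, the distribution of $S$ takes the form $\phi B_T+R_T$ where $B_T\sim\mathrm{Binomial}(|T|,1/2)$ is independent of $R_T$. The classical local LLT for the symmetric binomial gives $\abs{\Pr\sbra{\phi B_T=y}-\Pr\sbra{\phi B_T=y+\Delta}}\le O(|\Delta|/(\phi|T|))$ whenever $|T|\gtrsim|\Delta|/\phi$. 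Averaging over $T$, using Chernoff to absorb the low-probability tail $|T|\lesssim\alpha m$ via the trivial bound (which only affects the constant), and convolving with the independent remainder yield the target estimate $22|\Delta|/(\phi\alpha m)$, with the explicit constant $22$ arising from standard LLT constants together with the mixture losses.
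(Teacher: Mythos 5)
Your high-level architecture matches the paper's: carve out $m$ disjoint groups, exhibit in each group a two-point value pair at integer distance $\phi$ with each endpoint carrying probability at least $\alpha$, write each group sum as a mixture with a $\{0,\phi\}$-uniform component, and finish with the elementary bound on consecutive binomial coefficients (which the paper proves as \Cref{fct:nearby_binom} and packages into \Cref{lem:comp_llt_simp_C=1}). That part is sound.

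The genuine gap is at what you yourself flag as ``the most delicate point,'' and it does not close as stated. You propose to build $\vec a'$ from $\vec a$ by flipping a \emph{subset} of witnesses, each contributing a shift $d_i\in\{-t,\ldots,t\}\setminus\{0\}$, and then assert that because the gcd $g$ of the $d_i$'s divides $\phi$, a ``Schur/Bezout-style subset-sum argument'' realizes $\phi$. That claim is false in general: B\'ezout guarantees $\Zbb$-linear combinations with arbitrary integer coefficients, but subset sums only allow coefficients in $\{0,1\}$, and subset sums of heterogeneous integers with gcd $g$ need not cover an interval of multiples of $g$. (Shifts $\{3,3,\ldots,3,5\}$ have gcd $1$ but no subset sums to $1,2,4,$ or $7$.) Your parenthetical ``more than $\phi/g$ copies available'' tacitly assumes repeated copies of the \emph{same} shift, which is exactly what is missing: different $r$-witnesses may have entirely different alternate pairs.

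The paper closes this gap with an extra pigeonhole step you omitted: after selecting $S_{r_j}$, it further passes to a subset $T_{r_j}$ (losing a factor of about $t^2$, hence the $t^2\phi$ exponent in $\alpha$) on which \emph{all} witnesses share the \emph{same} alternate pair $(z_{r_j},z_{r_j}')$, so every witness for modulus $r_j$ contributes the same shift $w_j=z_{r_j}-z_{r_j}'$. Because each group contains $t\phi$ such identical witnesses per modulus, one can set an arbitrary number $|s_j|\le t\phi$ of them to the alternate value, realizing exactly $s_jw_j$; B\'ezout with the standard coefficient bound $|s_j|\le t\phi$ then gives $\sum_j s_jw_j=\phi$. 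Without this homogenization step, the subset-sum claim you invoke has no justification, and your inductive ``peel off one modulus at a time'' sketch would still need to confront it.

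A smaller remark: the caveat ``whenever $|T|\gtrsim|\Delta|/\phi$'' in your final step is unnecessary; the binomial-coefficient difference bound $2^{-k}(\binom{k}{b}-\binom{k}{b+\Delta'})\le 7\Delta'/k$ holds unconditionally, which is what lets the paper's \Cref{lem:comp_llt_simp_C=1} avoid any such restriction on $\Delta$.
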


The typical setting for \Cref{thm:comp_llt_C=1} is when we have small $t$ and $L=\Theta_t(n)$; then $\alpha$ is also a constant depending only on $t$.

\begin{remark}
The assumption of $\Delta$ being a multiple of $\phi$ is necessary.
If the $X_i$'s have some joint congruence relation not shared with $\Delta$, the bound can fail.
Consider the case where $n$ is even and each $X_i$ is uniform in $\cbra{1,3}$, which violates \Cref{eq:thm:comp_llt_C=1_1} only for $r=2$.
Then we set $x=2n$ and $\Delta=1$.
Since the sum is $n$ plus twice an $n$-bit binomial distribution, we have $\Pr\sbra{\sum_iX_i=x}\approx1/\sqrt n$ but $\Pr\sbra{\sum_iX_i=x+\Delta}=0$.
However, $m \approx n$ and $\alpha$ is a constant.
Hence the final bound does not hold.

We also note that the quantitative bound of $\alpha$ and $m$ can be slightly improved by tightening our analysis.
Since it does not change our final bounds by much, we choose the cleaner presentation here.
\end{remark}

We will need the following simple bound on the difference of nearby binomial coefficients.

\begin{fact}\label{fct:nearby_binom}
For any integers $n,b\ge1$, we have $2^{-n}\cdot\pbra{\binom nb-\binom n{b+1}}\le\frac7n$.
Moreover for any integer $\Delta\ge0$, we have
$$
2^{-n}\cdot\pbra{\binom nb-\binom n{b+\Delta}}\le\frac{7\Delta}n.
$$
\end{fact}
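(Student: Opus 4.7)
The plan is to first reduce the ``moreover'' bound to the $\Delta = 1$ case via a telescoping argument, and then prove the $\Delta = 1$ bound directly using the Stirling-type estimates already recorded as \Cref{fct:individual_binom} and \Cref{fct:entropy}.

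For the reduction, write
$$
\binom{n}{b} - \binom{n}{b+\Delta} \;=\; \sum_{i=0}^{\Delta-1}\pbra{\binom{n}{b+i} - \binom{n}{b+i+1}}.
$$
Each positive summand is bounded by $7\cdot 2^{n}/n$ by the $\Delta = 1$ case; negative summands only decrease the sum. Therefore the entire sum is at most $\Delta\cdot 7\cdot 2^n/n$, yielding the claim after dividing by $2^n$.

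For the $\Delta = 1$ bound, observe that when $b \le \lfloor (n-1)/2 \rfloor$ the sequence $\binom{n}{k}$ is non-decreasing at $k = b$, so $\binom{n}{b} - \binom{n}{b+1} \le 0$ and the bound holds trivially. Otherwise $b \ge \lceil n/2\rceil$, and the ratio $\binom{n}{b+1}/\binom{n}{b} = (n-b)/(b+1)$ gives the identity
$$
\binom{n}{b} - \binom{n}{b+1} \;=\; \binom{n}{b}\cdot\frac{2b+1-n}{b+1}.
$$
Using $b+1 \ge (n+1)/2$, I would bound $(2b+1-n)/(b+1) \le 2(2b+1-n)/n$, reducing the task to verifying $(2b+1-n)\cdot 2^{-n}\binom{n}{b} \le 7/2$. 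Substituting $s \coloneqq b - n/2$ and combining \Cref{fct:individual_binom} with \Cref{fct:entropy} yields
$$
2^{-n}\binom{n}{b} \;\le\; \frac{2\,e^{-2s^{2}/n}}{\sqrt{\pi n\pbra{1-4s^{2}/n^{2}}}},
$$
so the goal becomes showing $(2s+1)\,e^{-2s^{2}/n}\cdot 2/\sqrt{\pi n(1-4s^{2}/n^{2})} \le 7/2$.

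To finish I would split on the size of $s$. For $s \le n/4$, the denominator factor $\sqrt{1-4s^{2}/n^{2}} \ge \sqrt{3}/2$ is bounded below, and the function $s \mapsto (2s+1)e^{-2s^{2}/n}$ is maximized at $s = \Theta(\sqrt{n})$ with value $O(\sqrt{n})$, so the resulting expression is $O(1)$, well within $7/2$. For $s > n/4$, the exponential factor $e^{-2s^{2}/n} < e^{-n/8}$ makes the entire expression exponentially small in $n$, so the bound is trivial once $n$ is at least a modest constant. Remaining small-$n$ and boundary cases (e.g.\ $b \in \{n-1, n\}$, where $\binom{n}{n-1} - \binom{n}{n} = n - 1$ and $(n-1)/2^n \le 7/n$) are verified by hand.

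The main obstacle is simply tracking the explicit constant $7$; since the Gaussian heuristic $2^{-n}\binom{n}{b} \approx \sqrt{2/(\pi n)}\,e^{-2s^{2}/n}$ predicts the sharp constant is roughly $2\sqrt{2}\,e^{-1/2}/\sqrt{\pi}\approx 0.97$, the stated constant is slack by a large factor and the two-range case analysis above should go through comfortably.
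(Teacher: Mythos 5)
Your proof follows essentially the same route as the paper's: reduce to $\Delta = 1$ by telescoping, handle $b$ below the middle (and the $b\ge n$ tail) trivially, and for $\lceil n/2\rceil \le b \le n-1$ write $\binom{n}{b}-\binom{n}{b+1} = \binom{n}{b}\cdot\frac{2b+1-n}{b+1}$ and bound it via \Cref{fct:individual_binom} and \Cref{fct:entropy}. The only difference is cosmetic — the paper optimizes the resulting expression in $x = 2b/n-1$ by a clean substitution $x = \sqrt{2\ln(2)\log(y)/n}$, whereas you split on the size of $s = b - n/2$ and leave the small-$n$ and boundary checks to hand verification — so the two arguments are substantively identical.
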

\begin{proof}
The moreover part follows from a telescoping sum. Hence we focus on the first bound and divide into the following cases:
\begin{itemize}
\item If $b<n/2$, then $\binom nb\le\binom n{b+1}$ and the bound trivially holds.
\item If $b\ge n$, then $\binom nb\le1$, $\binom n{b+1}=0$, and the bound trivially holds.
\item If $n/2\le b\le n-1$, then 
\begin{align*}
\binom nb-\binom n{b+1}
&=\binom nb\cdot\pbra{1-\frac{n-b}{b+1}}=\binom nb\cdot\frac{2b+1-n}{b+1}\\
&\le\frac{2^{n\cdot\Hcal(b/n)}}{\sqrt{\pi b(1-b/n)}}\cdot\frac{2b+1-n}{b+1}
\tag{by \Cref{fct:individual_binom}}\\
&\le\frac{2^{n\cdot\Hcal(b/n)}}{\sqrt{n/2}}\cdot\frac{2b+1-n}{n/2}.
\tag{since $n/2\le b\le n-1$}
\end{align*}
Define $x=\frac{2b}n-1$. Then $x\in[0,1)$.
By \Cref{fct:entropy}, we have $\Hcal(b/n)-1=\Hcal\pbra{\frac{1+x}2}-1\le-\frac{x^2}{2\ln(2)}$ and hence
\begin{align*}
2^{-n}\cdot\pbra{\binom nb-\binom n{b+1}}
&\le2^{-n\cdot\frac{x^2}{2\ln(2)}}\cdot\frac1{\sqrt{n/2}}\cdot\frac{n\cdot x+1}{n/2}\\
&\le\underbrace{x\cdot2^{-n\cdot\frac{x^2}{2\ln(2)}}}_A\cdot\frac{2\sqrt2}{\sqrt{n}}+\pbra{\frac2n}^{1.5}.
\end{align*}
Writing $x=\sqrt{\frac{2\ln(2)\cdot\log(y)}n}$ for some $y\ge1$, we transform $A$ above into
$$
A=\sqrt{\frac{2\ln(2)\cdot\log(y)}n}\cdot\frac1y\le\sqrt{\frac{2\ln(2)}n}.
$$
Hence $2^{-n}\cdot\pbra{\binom nb-\binom n{b+1}}\le\frac{4\sqrt{\ln(2)}}n+\pbra{\frac2n}^{1.5}\le\frac7n$ as claimed.
\qedhere
\end{itemize}
\end{proof}

To prove \Cref{thm:comp_llt_C=1}, we observe that intuitively $\sum_{i\in[n]}X_i$ should converge to a (discrete) Gaussian distribution with large variance.
Then in this (discrete) Gaussian distribution, 
\begin{itemize}
\item if $x$ lies much outside the standard deviation regime around the mean, then it has small density already;
\item otherwise its density, compared with the density of $x+\Delta$, is only off by a small multiplicative factor, which means the quantity of interest is in fact small.
\end{itemize}

We first prove a simpler case where each random variable always has a ``neighboring'' pair of values in its support.
Note that in this case we do not need to assume that the random variables are bounded.
Later we will reduce the case of \Cref{thm:comp_llt_C=1} to this setting.

\begin{lemma}\label{lem:comp_llt_simp_C=1}
Let $Y_1,\ldots,Y_m$ be independent integer random variables, and let $\phi\ge1$ be an integer.
Assume that $\alpha>0$ is a parameter such that for each $i\in[m]$, there exists $u_i\in\Zbb$ satisfying
$$
\Pr[Y_i=u_i]\ge\alpha
\quad\text{and}\quad
\Pr[Y_i=u_i+\phi]\ge\alpha.
$$
Then for any $y\in\Zbb$, we have
\begin{equation}\label{eq:lem:comp_llt_simp_C=1_2}
\Pr\sbra{\sum_{i\in[m]}Y_i=y}-\Pr\sbra{\sum_{i\in[m]}Y_i=y+\Delta}
\le\frac{22|\Delta|}{\phi\cdot\alpha m}
\end{equation}
holds for any $\Delta\in\Zbb$ that is a multiple of $\phi$.
\end{lemma}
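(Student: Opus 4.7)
The plan is to couple each $Y_i$ to a Bernoulli contribution in order to extract a shifted-binomial component whose density at $y$ versus $y+\Delta$ can be compared directly via \Cref{fct:nearby_binom}. Specifically, I write each $Y_i$ as a mixture: with probability $2\alpha$ it is uniform on $\cbra{u_i, u_i+\phi}$, and with probability $1-2\alpha$ it follows a residual distribution $R_i$. The hypotheses $\Pr\sbra{Y_i=u_i},\Pr\sbra{Y_i=u_i+\phi}\ge\alpha$, together with $\phi\ge 1$ (so the two atoms are distinct), ensure $R_i$ is a valid probability distribution. Let $B_i\in\cbra{0,1}$ be the indicator of the first branch and $S=\cbra{i:B_i=1}$; conditional on $S$ and on the residual values of $Y_j$ for $j\notin S$, the sum $\sum_i Y_i = W+\phi Z$ where $W$ is a deterministic offset and $Z\sim\text{Bin}(|S|,1/2)$ is independent of everything else.

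Since $\Delta=\phi k$ is a multiple of $\phi$, divisibility of $y-W$ by $\phi$ coincides with that of $y+\Delta-W$, so the conditional difference of densities equals
$$
2^{-|S|}\pbra{\binom{|S|}{b}-\binom{|S|}{b+k}}
$$
when $\phi\mid(y-W)$ (with $b=(y-W)/\phi$) and vanishes otherwise. By \Cref{fct:nearby_binom} and its symmetric analogue for $k<0$ (obtained by substituting $j\mapsto|S|-j$ inside the binomials), this quantity is at most $7|k|/|S|$ whenever $|S|\ge 1$; when $|S|=0$, I bound it trivially by $1$. Taking expectations over $|S|\sim\text{Bin}(m,2\alpha)$ then gives
$$
\Pr\sbra{\sum_{i\in[m]} Y_i=y}-\Pr\sbra{\sum_{i\in[m]} Y_i=y+\Delta}\ \le\ (1-2\alpha)^m + 7|k|\cdot\E\sbra{\frac{\indicator\cbra{|S|\ge 1}}{|S|}}.
$$

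For the final step, using $\tfrac{1}{|S|}\le\tfrac{2}{|S|+1}$ on $\cbra{|S|\ge 1}$ and the standard closed form $\E\sbra{1/(|S|+1)}=\tfrac{1-(1-2\alpha)^{m+1}}{2\alpha(m+1)}$, the expectation is at most $1/(m\alpha)$. Meanwhile, $(1-2\alpha)^m\le e^{-2\alpha m}\le 15/(m\alpha)$, since the map $x\mapsto xe^{-x}$ is bounded by $1/e<15$ on $[0,\infty)$. Combining yields the claimed bound $22|k|/(m\alpha)=22|\Delta|/(\phi\alpha m)$ whenever $|k|\ge 1$; the case $k=0$ is immediate. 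The main subtlety is handling the negative-$k$ direction of \Cref{fct:nearby_binom}, which is stated one-sided but extends via the symmetry $\binom{n}{j}=\binom{n}{n-j}$ (with a trivial check at the boundary $b=|S|$); a secondary point is absorbing the $|S|=0$ contribution $(1-2\alpha)^m$ into the same $|\Delta|/(m\alpha)$ form, which succeeds precisely because $xe^{-x}$ is a uniformly small constant on $[0,\infty)$.
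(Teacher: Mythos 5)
Your proof is correct and follows essentially the same approach as the paper: extract a Bernoulli/coin component on $\cbra{u_i,u_i+\phi}$ from each $Y_i$, condition to reduce to a shifted scaled binomial, and apply \Cref{fct:nearby_binom}. The paper extracts mass $\alpha$ per coordinate (uniform on $\cbra{u_i,u_i+\phi}$ with an independent $\alpha$-biased indicator), whereas you extract mass $2\alpha$; both are valid given the hypothesis. The only substantive difference is the finishing step: the paper bounds $\Pr[\,|S|\le \alpha m/2\,]$ via Chernoff and uses $e^{-x}\le 1/x$, while you compute $\E[\indicator\{|S|\ge 1\}/|S|]$ directly via the closed form for $\E[1/(|S|+1)]$ and absorb the $|S|=0$ term with $xe^{-x}\le 1/e$; both land at the constant $22$. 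Your handling of $\Delta<0$ via the symmetry $\binom{n}{j}=\binom{n}{n-j}$ is a legitimate alternative to the paper's device of negating the $Y_i$.
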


\begin{proof}
The bound trivially holds if $\Delta=0$.
If $\Delta<0$, then we work with negated $Y_i$'s.
Hence we assume $\Delta\ge\phi$.
By subtracting $u_i$ from $Y_i$ and $y$, we assume that each $u_i$ equals zero.
Then we decompose each $Y_i=W_i\cdot B_i+(1-W_i)\cdot Z_i$, where $B_i$ is uniform over $\cbra{0,\phi}$, $W_i$ be an $\alpha$-biased coin (i.e., $\Pr[W_i=1]=\alpha$ and $\Pr[W_i=0]=1-\alpha$), and $Z_i$ is some integer random variable.
In addition, $W_i,B_i,Z_i$ are independent.

Now define $\Ecal$ to be the event that $\sum_{i\in[m]}W_i\le\alpha\cdot m/2$.
Then by \Cref{fct:chernoff} with $\delta=1/2$ and $\mu=\alpha\cdot m$, we have
\begin{equation}\label{eq:lem:comp_llt_simp_1}
\Pr[\Ecal]\le e^{-\alpha\cdot m/8}.
\end{equation}
For fixed $W=(W_1,\ldots,W_m)$ under which $\Ecal$ does not happen, let $S=\cbra{i\in[m] : W_i=1}$ of size $k=|S|\ge\alpha\cdot m/2$.
Then for any fixed $Z=(Z_1,\ldots,Z_m)$, the LHS of \Cref{eq:lem:comp_llt_simp_C=1_2} equals
$$
\Pr\sbra{\sum_{i\in S}B_i=b\mid W,Z,\neg\Ecal}-\Pr\sbra{\sum_{i\in S}B_i=b+\Delta\mid W,Z,\neg\Ecal},
$$
where $b=y-\sum_{i\notin S}Z_i$.
Recall that each $B_i$ is uniform over $\cbra{0,\phi}$.
If $b$ is not a multiple of $\phi$, then the above quantity equals zero since $\Delta$ is a multiple of $\phi$.
Otherwise, let $b'=b/\phi$ and $\Delta'=\Delta/\phi\ge1$.
Then the above quantity equals 
\begin{equation}\label{eq:lem:comp_llt_simp_4}
2^{-k}\cdot\pbra{\binom k{b'}-\binom k{b'+\Delta'}}\le\frac{7\Delta'}k
\end{equation}
by \Cref{fct:nearby_binom}.
This, combined with \Cref{eq:lem:comp_llt_simp_1}, establishes \Cref{eq:lem:comp_llt_simp_C=1_2}: 
\begin{align*}
\text{LHS of \Cref{eq:lem:comp_llt_simp_C=1_2}}
&\le\Pr[\Ecal]+\E\sbra{\indicator_{\phi\text{ divides }b}\cdot2^{-k}\cdot\pbra{\binom k{b'}-\binom k{b'+\Delta'}}\mid\neg\Ecal}\\
&\le e^{-\alpha\cdot m/8}+\E\sbra{\frac{7\Delta'}k\mid\neg\Ecal}
\tag{by \Cref{eq:lem:comp_llt_simp_1} and \Cref{eq:lem:comp_llt_simp_4}}\\
&\le\frac8{\alpha\cdot m}+\frac{14\Delta'}{\alpha\cdot m}
\tag{since $e^{-x}\le\frac1x$ and $k\ge\alpha m/2$}\\
&\le\frac{22\Delta}{\phi\cdot\alpha m}
\tag{since $\Delta'=\Delta/\phi\ge1$}\\
&=\text{RHS of \Cref{eq:lem:comp_llt_simp_C=1_2}}.
\tag*{\qedhere}
\end{align*}
\end{proof}

Now we prove \Cref{thm:comp_llt_C=1} by reducing it to \Cref{lem:comp_llt_simp_C=1}. To this end, we will divide $X_1,\ldots,X_n$ into many parts, and the sum within each part will have two neighboring values with noticeable probability weights.

\begin{proof}[Proof of \Cref{thm:comp_llt_C=1}]
Denote $\Phi=\cbra{r_1,r_2,\ldots,r_k}$ where $k=|\Phi|\le t$.
For each $r_j$, let $S_{r_j}\subseteq[n]$ be the set of $X_i$'s with $1-p_{r_j,i}\ge L/(2n)$, i.e.,
$$
S_{r_j}=\cbra{i\in[n] : 1-p_{r_j,i}\ge L/(2n)}.
$$
Since $0\le1-p_{r_j,i}\le1$ and by \Cref{eq:thm:comp_llt_C=1_1}, we have
$$
L\le\sum_{i\in[n]}(1-p_{r_j,i})\le\abs{S_{r_j}}\cdot1+\pbra{n-\abs{S_{r_j}}}\cdot\frac L{2n},
$$
which implies $\abs{S_{r_j}}\ge L/2$.
Now we remove multiple appearances of indices across $S_{r_j}$'s to make them pairwise disjoint.
Formally, for each $j=1,2,\ldots,k$, we keep $n'\coloneqq \floorbra{L/(4t)}$ elements in $S_{r_j}$ and update $S_{r_{j'}}\gets S_{r_{j'}}\setminus S_{r_j}$ for all $j'>j$.
Since $k\le t$ and originally $\abs{S_{r_j}}\ge L/2$, each $S_{r_j}$ contains at least $n'$ elements after this pruning.

For each $j\in[k]$ and $i\in S_{r_j}$, by an averaging argument, there exists $c_i\in\Zbb/r_j\Zbb$ such that $\Pr\sbra{X_i\equiv c_i\Mod{r_j}}\ge\frac1{r_j}$. Hence, by another averaging argument, there exists $z_i\in\cbra{0,1,\ldots,t}$ such that $z_i\equiv c_i\Mod{r_j}$ and
$$
\Pr\sbra{X_i=z_i}\ge\frac1{r_j}\cdot\frac1{\ceilbra{(t+1)/r_j}}\ge\frac1{2(t+1)}\ge\frac L{4n(t+1)},
$$
where we used the fact that $0<L\le n$.
Since $i\in S_{r_j}$, we also have $\Pr\sbra{X_i\equiv c_i\Mod{r_j}}\le1-\frac L{2n}$ and hence, by an averaging argument, there exists $c_i'\in\Zbb/r_j\Zbb$ such that $c_i'\neq c_i$ and 
$\Pr\sbra{X_i\equiv c_i'\Mod{r_j}}\ge\frac L{2n\cdot(r_j-1)}$.
Similarly by another averaging argument, there exists $z_i'\in\cbra{0,1,\ldots,t}$ such that $z_i'\equiv c_i'\Mod{r_j}$ and 
$$
\Pr\sbra{X_i=z_i'}\ge\frac L{2n\cdot(r_j-1)}\cdot\frac1{\ceilbra{(t+1)/r_j}}\ge\frac L{4n(t+1)}.
$$
Since both $z_i$ and $z_i'$ are in $\cbra{0,1,\ldots,t}$, by a final averaging argument, there exists $z_{r_j},z_{r_j}'$ such that
\begin{enumerate}
\item\label{itm:thm:comp_llt_C=1_1} $z_{r_j},z_{r_j}'\in\cbra{0,1,\ldots,t}$ and $z_{r_j}\not\equiv z_{r_j}'\Mod{r_j}$, and
\item\label{itm:thm:comp_llt_C=1_2} at least a $1/\binom{t+1}2\ge1/t^2$ fraction of $i\in S_{r_j}$ satisfy $\Pr\sbra{X_i=z_{r_j}},\Pr\sbra{X_i=z_{r_j}'}\ge\frac L{4n(t+1)}$.
\end{enumerate}
Let $n''=\ceilbra{n'/t^2}=\ceilbra{|S_{r_j}|/t^2}$.
Based on \Cref{itm:thm:comp_llt_C=1_1} and \Cref{itm:thm:comp_llt_C=1_2}, for each $j\in[k]$ we define $T_{r_j}\subseteq S_{r_j}$ to be of size $n''$ and contain indices satisfying \Cref{itm:thm:comp_llt_C=1_2}.

Recall that $\phi$ is the least common multiple of values in $[t]\setminus\Phi$.
Now we show that the sum of $t\phi\cdot k$ random variables ($t\phi$ from each one of $T_{r_1},\ldots,T_{r_k}$) is a random variable that satisfies the conditions in \Cref{lem:comp_llt_simp_C=1}.
Formally, let $m=\floorbra{n''/(t\phi)}$ and arbitrarily select $m$ disjoint subsets $T_{r_j}^1,\ldots,T_{r_j}^m$ of size $t\phi$ from each $T_{r_j}$.
Define random variables
$$
Y_\ell=\sum_{j\in[k]}\sum_{i\in T_{r_j}^\ell}X_i
\quad\text{for each $\ell\in[m]$}
$$
and define 
$$
Y_0=\sum_{i\notin\bigcup_{j\in[k],\ell\in[m]}T_{r_j}^\ell}X_i
$$ 
to be the sum of the remaining $X_i$'s.
We will show that for each $\ell\in[m]$, there exists $u_\ell\in\Zbb$ such that both $\Pr[Y_\ell=u_\ell]$ and $\Pr[Y_\ell=u_\ell+\phi]$ are at least $\alpha=\pbra{\frac L{4n(t+1)}}^{t^2\phi}$.
Then \Cref{thm:comp_llt_C=1} follows from \Cref{lem:comp_llt_simp_C=1} by conditioning on $Y_0$ and observing $m=\floorbra{\ceilbra{\floorbra{L/(4t)}/t^2}/(t\phi)}\ge\floorbra{L/(16t^4\phi)}$.

Fix an arbitrary $\ell\in[m]$ and define $w_j=z_{r_j}-z_{r_j}'$ for each $j\in[k]$.
By \Cref{itm:thm:comp_llt_C=1_1}, $|w_j|\le t$ and $r_j$ does not divide it.
Hence the greatest common divisor $g$ of $|w_1|,\ldots,|w_k|$ lies in $[t]\setminus\Phi$, which must divide $\phi$.
Thus by B\'ezout's identity (see e.g., \cite{wiki:Bézout's_identity}), there exist $s_1,\ldots,s_k\in\Zbb$ such that 
\begin{equation}\label{eq:thm:comp_llt_C=1_2}
\sum_{j\in[k]}s_j\cdot w_j=\phi.
\end{equation}
In addition, we can assume that $|s_j|\le\phi/g\cdot\max_{j\in[k]}|w_j|/g\le t\phi$ \cite{bezout_mo}.
Now we define $u_\ell$ as
$$
u_\ell=\sum_{j\in[k]: s_j<0}z_{r_j}\cdot t\phi+\sum_{j\in[k]: s_j\ge0}z_{r_j}'\cdot t\phi.
$$
Then the probability of $Y_\ell=u_\ell$ is at least the probability that every $X_i\in T_{r_j}^\ell$ equals $z_{r_j}$ if $s_j<0$, and every $X_i\in T_{r_j}^\ell$ equals $z_{r_j}'$ if $s_j\ge0$.
Hence by \Cref{itm:thm:comp_llt_C=1_2} and the independence of the $X_i$'s, we have $\Pr\sbra{Y_\ell=u_\ell}\ge\pbra{\frac L{4n(t+1)}}^{t\phi\cdot k}\ge\alpha$ as desired.
To analyze $u_\ell+\phi$, we rewrite it as 
\begin{align*}
u_\ell+\phi
&=\sum_{j\in[k]: s_j<0}\pbra{z_{r_j}\cdot t\phi+s_j\cdot w_j}+\sum_{j\in[k]: s_j\ge0}\pbra{z_{r_j}'\cdot t\phi+s_j\cdot w_j}
\tag{by \Cref{eq:thm:comp_llt_C=1_2}}\\
&=\sum_{j\in[k]: s_j<0}\pbra{z_{r_j}'\cdot|s_j|+z_{r_j}\cdot(t\phi-|s_j|)}+\sum_{j\in[k]: s_j\ge0}\pbra{z_{r_j}\cdot|s_j|+z_{r_j}'\cdot(t\phi-|s_j|)}.
\tag{since $w_j=z_{r_j}-z_{r_j}'$}
\end{align*}
Hence the probability of $Y_\ell=u_\ell+\phi$ is at least the probability that $|s_j|$ (resp., $t\phi-|s_j|$) many $X_i\in T_{r_j}^\ell$ equal $z_{r_j}'$ (resp., $z_{r_j}$) if $s_j<0$, and $|s_j|$ (resp., $t\phi-|s_j|$) many $X_i\in T_{r_j}^\ell$ equal $z_{r_j}$ (resp., $z_{r_j}'$) if $s_j\ge0$.
Therefore $\Pr\sbra{Y_\ell=u_\ell+\phi}\ge\alpha$ follows again from \Cref{itm:thm:comp_llt_C=1_2} and the independence of $X_i$'s.
\end{proof}

\end{document}